\documentclass[journal]{IEEEtran}

\usepackage{amsmath,amssymb,amsfonts,amsthm,mathtools}
\usepackage{bm}
\usepackage{mathrsfs}
\usepackage{graphicx}
\usepackage{booktabs}
\usepackage{cite}
\usepackage{xcolor}
\usepackage{url}
\usepackage[
    colorlinks=true,
    linkcolor=blue,
    citecolor=blue,
    urlcolor=blue
]{hyperref}

\usepackage{setspace}

\usepackage{stmaryrd}

\usepackage{float}
\usepackage[caption=false,font=footnotesize]{subfig}

\graphicspath{{Ketquapre/}}

\allowdisplaybreaks[3]

\usepackage{float}

\usepackage{titlesec}

\titlespacing*{\section}
{0pt}
{0.75ex plus 0.15ex minus 0.10ex}
{0.35ex plus 0.10ex minus 0.05ex}

\titlespacing*{\subsection}
{0pt}
{0.60ex plus 0.15ex minus 0.10ex}
{0.25ex plus 0.08ex minus 0.05ex}

\titlespacing*{\subsubsection}
{0pt}
{0.45ex plus 0.10ex minus 0.08ex}
{0.20ex plus 0.05ex minus 0.03ex}

\graphicspath{{./KETQUA/}}

\allowdisplaybreaks
\newtheorem{definition}{Definition}
\newtheorem{assumption}{Assumption}
\newtheorem{lemma}{Lemma}
\newtheorem{proposition}{Proposition}
\newtheorem{theorem}{Theorem}

\newtheorem{remark}{Remark}

\DeclareMathOperator{\diag}{diag}
\DeclareMathOperator{\col}{col}
\DeclareMathOperator{\rank}{rank}

\DeclareMathOperator{\sgn}{sgn}
\DeclareMathOperator{\atanh}{atanh}

\newcommand{\R}{\mathbb R}
\newcommand{\T}{^{\top}}
\newcommand{\norm}[1]{\left\lVert #1\right\rVert}
\newcommand{\abs}[1]{\left\lvert #1\right\rvert}
\newcommand{\spow}[2]{\left\lceil #1\right\rfloor^{#2}}

\newcommand{\bx}{\bm x}
\newcommand{\bu}{\bm u}
\newcommand{\ba}{\bm a}
\newcommand{\bd}{\bm d}
\newcommand{\bW}{\bm W}
\newcommand{\bphi}{\bm\phi}

\newcommand{\bomega}{\bm\omega}
\newcommand{\bOmega}{\bm\Omega}
\newcommand{\bGamma}{\bm\Gamma}

\newcommand{\bR}{\mathbf R}
\newcommand{\bS}{\mathbf S}
\newcommand{\bTmat}{\mathbf T}
\newcommand{\bUbar}{\overline{\mathbf U}}

\usepackage{orcidlink}

\usepackage{xcolor}
\usepackage{tikz}
\usepackage{scalerel}
\usepackage{hyperref}

\newcommand{\orcidicon}[1]{%
\href{https://orcid.org/#1}{%
\mbox{%
\scalerel*{%
\begin{tikzpicture}[yscale=-1,transform shape]
\pic{orcidlogo};
\end{tikzpicture}%
}{|}%
}%
}%
}

\begin{document}

\title{Predefined-Time Resilient Integral Reinforcement Learning for
Input-Constrained Unknown Nonlinear Systems Under FDI Attacks and Disturbances:
A Fully Data-Driven Approach}
\author{
Tien Dat Vu~\orcidicon{0009-0004-0792-000X}\,\IEEEmembership{Student Member~IEEE,}
Minh Doan~\orcidicon{0000-0002-9460-4256}\,\IEEEmembership{Member~IEEE}
\thanks{T. D. Vu and M. Doan are with the Faculty of Mechanical Engineering,
Ho Chi Minh City University of Technology (HCMUT), Vietnam National University
Ho Chi Minh City (VNU-HCM), Ho Chi Minh City, Vietnam
(e-mail: dat.vuv@hcmut.edu.vn; minh.doan@hcmut.edu.vn).}

\thanks{Corresponding author: M. Doan (e-mail: minh.doan@hcmut.edu.vn).}
}
\markboth{
Preprint 
}{}

\maketitle

\begin{abstract}
This paper investigates optimal control for nonlinear systems with unknown dynamics, input constraints, disturbances, and adversarial signals. The objective is to develop a learning-based control method that allows the designer to prescribe the desired convergence time in advance. An integral reinforcement-learning framework is proposed to avoid requiring exact knowledge of the system dynamics while ensuring that the control input always satisfies the actuator constraints. Current and recorded data are combined to train the critic without requiring persistent excitation. The learning gain is selected directly from the prescribed convergence deadline. Lyapunov analysis is then used to establish practical predefined-time convergence of the coupled state–critic system in the presence of disturbances and adversarial channels. The effectiveness of the proposed method is validated through the stabilization control of a two-link robot manipulator
\end{abstract}

\begin{IEEEkeywords}
False-data-injection attacks, input constraints, integral
reinforcement learning, nonlinear systems, predefined-time stability,
resilient optimal control.
\end{IEEEkeywords}

\section{Introduction}

\IEEEPARstart{O}{ptimal} control of nonlinear systems is challenging
when unknown drift dynamics, actuator limits, false-data-injection
(FDI) attacks, and external disturbances coexist. The associated
Hamilton--Jacobi--Isaacs (HJI) equation generally depends on the
unknown drift, quadratic input penalties do not enforce hard actuator
bounds, and conventional reinforcement learning (RL) mainly provides
asymptotic convergence without an assignable learning or closed-loop
settling time. These issues are especially relevant to cyber--physical
systems (CPSs), whose networked sensing and control channels are exposed
to FDI attacks.

Continuous-time RL and adaptive dynamic programming approximate HJB
solutions online \cite{Vamvoudakis2010,Vrabie2009}, whereas integral
reinforcement learning (IRL) reduces model dependence through
trajectory-integral identities. In particular, \cite{Modares2014}
eliminates the unknown drift from the learning equation and handles
input constraints through a nonquadratic cost, yielding bounded
hyperbolic-tangent policies \cite{AbuKhalaf2005}, while zero-sum games
provide a natural framework for disturbances and adversarial inputs
\cite{Basar1995,Pasqualetti2013,Fawzi2014}. These methods mainly provide
asymptotic or ultimate convergence. Meanwhile, concurrent and composite
learning replace sustained persistent excitation (PE) by current and
recorded or filtered information, enabling parameter convergence from
finite informative data
\cite{ParikhKamalapurkarDixon2019,PanYu2016,PanYu2018}; this is
particularly attractive for RL because continuous probing is
undesirable after satisfactory regulation, yet its integration with
predefined-time finite-window Bellman--Isaacs learning remains largely
unexplored.

Finite- and fixed-time stability strengthen transient guarantees
\cite{Bhat2000,Polyakov2012}. Our recent fixed-time IRL results for
unknown nonlinear systems and multi-agent extensions
\cite{Vu2026FixedTimeIRL,VuEtAl2026FixedTimeMASIRL} provide
initial-condition-independent learning and regulation bounds without
requiring the drift, but these bounds remain gain-induced.
Predefined-time stability instead permits the settling-time bound to be
assigned a priori \cite{SanchezTorres2018,AldanaLopez2019,Jimenez2020}.
The critic-only method in \cite{Kokolakis2025Automatica} explicitly
relates the learning gain to a prescribed deadline, but its pointwise
Bellman error contains the system vector field and its stabilization
conditions prescribe, rather than construct, the required
value-function and running-cost decay structure. Related predefined-time
zero-sum RL \cite{Su2026PredefinedZeroSum} and data-driven value
iteration after dynamics learning \cite{Cai2026DataDrivenPredefined}
likewise do not provide a finite-window Bellman--Isaacs formulation that
directly removes the unknown drift while jointly addressing actuator
constraints, disturbances, FDI attacks, finite-data excitation, and an
assignable closed-loop deadline.

Finite-window IRL also creates a distinct timing issue: after critic
convergence at \(t_c\), the integral residual still contains
pre-convergence data until one complete reinforcement window has
elapsed. Hence, data informativity, critic learning, reinforcement-window
flushing, and state convergence must all enter the same deadline.
Accordingly, this paper develops a unified end-to-end predefined-time
resilient IRL framework for input-constrained nonlinear systems with
unknown drift, FDI attacks, and external disturbances, connecting cost
construction, finite-data learning, critic convergence, and closed-loop
stabilization in one analytical chain from the constrained minimax cost
to the final coupled state--critic deadline.

The contributions are fivefold. First, a constrained minimax cost is
constructed whose state penalty generates the required low- and
high-order predefined-time dissipation, while its nonquadratic input
term preserves actuator bounds by construction. Second, a finite-window
Bellman--Isaacs equation is derived without the unknown drift, together
with a normalized replay-based two-power critic law whose gain is
computed explicitly from an assigned critic deadline. Third, current
integral information and a finite stack of recorded Bellman--Isaacs
data are combined, replacing classical PE by a finite
replay-informativeness condition and removing the need for persistent
excitation after sufficient data have been collected. Fourth, exact
and practical predefined-time critic convergence are established under
this finite-data condition, with FDI attacks and disturbances handled
through the resilient zero-sum formulation. Finally, a mixed
pointwise--integral Lyapunov analysis combines the data-informativity
time, critic-learning time, reinforcement-window length, and
state-convergence time into a complete predefined-time bound for the
coupled state--critic system. Validation is provided on a two-link robot
manipulator under actuator constraints, disturbances, and FDI attacks.


\textbf{Notation:} Throughout this paper, \(\mathbb R\), \(\mathbb R_{\geq0}\), and \(\mathbb R_{>0}\) denote the real, nonnegative real, and positive real numbers, respectively. For \(x\in\mathbb R^n\), \(\|x\|\) denotes the Euclidean norm. For a matrix \(\mathbf A\), \(\mathbf A^\top\), \(\|\mathbf A\|\), \(\lambda_{\min}(\mathbf A)\), and \(\lambda_{\max}(\mathbf A)\) denote its transpose, induced \(2\)-norm, and minimum and maximum eigenvalues, respectively; for symmetric \(\mathbf A\), \(\mathbf A>0\) and \(\mathbf A<0\) denote positive and negative definiteness. The symbols \(\mathbf I_n\), \(\operatorname{diag}\{\cdot\}\), and \(\operatorname{col}\{\cdot\}\) denote the identity matrix, diagonal concatenation, and column stacking. For \(V\in C^1\), \(\nabla V:=\partial V/\partial x\) denotes the column gradient. For \(s\in\mathbb R\) and \(r>0\), define \(\spow{s}{r}:=|s|^r\sgn(s)\), with \(\spow{0}{r}:=0\), so that \(\frac{d}{ds}\bigl(|s|^{r+1}/(r+1)\bigr)=\spow{s}{r}\); for vectors, \(\spow{\cdot}{r}\), \(\tanh(\cdot)\), \(\sgn(\cdot)\), and scalar powers act componentwise. For \(z>0\), \(\Gamma_{\mathrm E}(z):=\int_{0}^{\infty}t^{z-1}e^{-t}\,dt\), while for \(a,b>0\), \(\mathrm B(a,b):=\int_{0}^{1}t^{a-1}(1-t)^{b-1}\,dt=\int_{0}^{\infty}t^{a-1}(1+t)^{-(a+b)}\,dt=\Gamma_{\mathrm E}(a)\Gamma_{\mathrm E}(b)/\Gamma_{\mathrm E}(a+b)\).

\section{Preliminaries and Problem Formulation }
\label{sec:problem}
\begingroup
\setlength{\jot}{1.5pt}
\setlength{\abovedisplayskip}{3pt}
\setlength{\belowdisplayskip}{3pt}


\subsection{Preliminaries}

\begin{theorem}[Normalized predefined-time comparison]
\label{thm:normalized_predefined_time}
Let \(V:\mathbb R^n\to\mathbb R_{\geq0}\) be positive definite, radially unbounded, and locally absolutely continuous along the trajectories of a forward-complete system. Let \(\alpha,\beta,p,q,r,T_s>0\) satisfy
\begin{equation}
0<p<q,\qquad pr<1<qr,
\label{eq:normalized_exponent_conditions}
\end{equation}
and define
\begin{equation}
\begin{aligned}
\gamma_{p,q,r}
:={}&
\frac{
\Gamma_{\!E}\!\left(\dfrac{1-pr}{q-p}\right)
\Gamma_{\!E}\!\left(\dfrac{qr-1}{q-p}\right)
}{
\alpha^r\Gamma_{\!E}(r)(q-p)
}
\left(\frac{\alpha}{\beta}\right)^{
\frac{1-pr}{q-p}}.
\end{aligned}
\label{eq:gamma_normalization}
\end{equation}
Here, \(\Gamma_{\!E}(z):=\int_0^\infty t^{z-1}e^{-t}dt\), \(z>0\), is the Euler gamma function. If, for almost all \(t\) with \(V(\bx(t))>0\),
\begin{equation}
\dot V(\bx(t))
\leq
-\frac{\gamma_{p,q,r}}{T_s}
\left[
\alpha V^p(\bx(t))
+
\beta V^q(\bx(t))
\right]^r,
\label{eq:normalized_predefined_dissipation}
\end{equation}
then the origin is globally predefined-time stable and
\begin{equation}
T(\bx_0)\leq T_s,
\qquad
\forall\,\bx_0\in\mathbb R^n.
\label{eq:normalized_settling_conclusion}
\end{equation}
\end{theorem}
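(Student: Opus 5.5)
Set \(c:=\gamma_{p,q,r}/T_s\) and \(v(t):=V(\bx(t))\). The plan is to integrate the differential inequality \eqref{eq:normalized_predefined_dissipation} by separation of variables. This gives an upper bound on the time \(v\) needs to reach zero, and that bound will be independent of \(v(0)\) and equal to \(T_s\).

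\textbf{Monotonicity and separation.} First, \(v\) is absolutely continuous and satisfies \(\dot v<0\) almost everywhere on \(\{v>0\}\). Hence \(v\) is nonincreasing. Moreover, once \(v\) hits zero it stays there: on any interval where \(v>0\), \(v\) is decreasing, so it cannot become positive again after reaching zero. Define
\[
\Phi(s):=\int_0^{s}\frac{d\sigma}{\left[\alpha\sigma^p+\beta\sigma^q\right]^r},\qquad s>0.
\]
This integral is finite. Near \(0\) the integrand behaves like \(\sigma^{-pr}\) with \(pr<1\), and near \(\infty\) it behaves like \(\sigma^{-qr}\) with \(qr>1\). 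So \(\Phi\) is an absolutely continuous, increasing, bounded function with \(\Phi(\infty)<\infty\). By the chain rule for absolutely continuous functions composed with a locally Lipschitz \(\Phi\) on compact subsets of \((0,\infty)\), we get \(\frac{d}{dt}\Phi(v(t))\le -c\) for almost all \(t\) with \(v(t)>0\). Integrating from \(0\) to \(t\) gives \(\Phi(v(t))\le \Phi(v(0))-ct\). Therefore the first time \(T(\bx_0)\) at which \(v\) vanishes satisfies
\[
T(\bx_0)\le \Phi(v(0))/c\le \Phi(\infty)/c .
\]
Forward completeness ensures the trajectory exists long enough for this argument. Radial unboundedness and positive definiteness of \(V\) then give global stability and global attractivity of the origin, with the settling time uniformly bounded.

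\textbf{Computing \(\Phi(\infty)\).} The remaining claim is the identity \(\Phi(\infty)=\gamma_{p,q,r}\), which yields \(T(\bx_0)\le T_s\). To evaluate it, write
\[
[\alpha\sigma^p+\beta\sigma^q]^r=\alpha^r\sigma^{pr}\bigl(1+(\beta/\alpha)\sigma^{q-p}\bigr)^r .
\]
Substitute \(t=(\beta/\alpha)\sigma^{q-p}\), so that \(\sigma=(\alpha t/\beta)^{1/(q-p)}\) and \(d\sigma=\frac{1}{q-p}(\alpha/\beta)^{1/(q-p)}t^{\frac{1}{q-p}-1}dt\). The integral becomes
\[
\frac{1}{\alpha^r(q-p)}\Bigl(\frac{\alpha}{\beta}\Bigr)^{\frac{1-pr}{q-p}}\int_0^\infty t^{a-1}(1+t)^{-r}\,dt,\qquad a=\frac{1-pr}{q-p}.
\]
This is the second representation of the Beta function from the Notation, with \(a+b=r\) and \(b=r-a=\frac{qr-1}{q-p}\). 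Using \(\mathrm B(a,b)=\Gamma_{\mathrm E}(a)\Gamma_{\mathrm E}(b)/\Gamma_{\mathrm E}(r)\) recovers exactly \eqref{eq:gamma_normalization}. Hence \(\Phi(\infty)/c=T_s\).

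\textbf{Main obstacle.} The algebra is routine; the delicate point is rigor in the almost-everywhere setting. We must justify the chain rule for \(\Phi\circ v\) when \(\Phi'\) blows up at \(0\). The plan is to apply it only on intervals \([0,t]\) where \(v\ge\varepsilon>0\) and then let \(\varepsilon\to0\). We must also exclude the possibility that \(v\) re-enters \(\{v>0\}\) after reaching zero; this follows from the monotonicity established above.
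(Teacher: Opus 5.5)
Your proposal is correct and follows essentially the same route as the paper's proof in Appendix~A: separation of variables via the primitive \(\Psi(v)=\int_0^v[\alpha s^p+\beta s^q]^{-r}ds\), integration of \(\frac{d}{dt}\Psi(V)\le-\gamma_{p,q,r}/T_s\) to bound the settling time by \(\Psi(\infty)\,T_s/\gamma_{p,q,r}\), and evaluation of \(\Psi(\infty)=\gamma_{p,q,r}\) through the substitution \(z=(\beta/\alpha)s^{q-p}\) and the Beta-function identity. Your monotonicity argument for \(V\) remaining at zero and your \(\varepsilon\)-truncation justifying the chain rule are slightly more careful than the paper, which simply appeals to the origin being an equilibrium.
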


\begin{proof}
    See Appendix~\ref{App1}
\end{proof}

\begin{remark}
\label{rem:normalized_comparison_interpretation}
Theorem~\ref{thm:normalized_predefined_time} offers a clear insight: to design for a *predefined* time, one must first achieve *fixed-time* convergence and then adjust the scaling parameters to attain the specific predefined time—essentially allowing us to determine the desired convergence time in advance. It separates the prescribed
time scale from the shape of the two-power dissipation. In particular,
the factor \(1/T_s\) makes the guaranteed settling-time bound scale
linearly with the assigned deadline; replacing \(T_s\) by
\(\eta T_s\), \(\eta>0\), multiplies the required dissipation rate by
\(1/\eta\). The two terms balance at
\(V_\times=(\alpha/\beta)^{1/(q-p)}\), so \(\alpha/\beta\) determines
the transition between the low- and high-power regimes. Moreover, a
common rescaling \((\alpha,\beta)\mapsto(c\alpha,c\beta)\), \(c>0\),
gives \(\gamma_{p,q,r}\mapsto c^{-r}\gamma_{p,q,r}\), while
\([\alpha V^p+\beta V^q]^r\mapsto
c^r[\alpha V^p+\beta V^q]^r\); hence the normalized dissipation
\(\gamma_{p,q,r}[\alpha V^p+\beta V^q]^r/T_s\) is invariant to such a
common scaling. Thus \(T_s\) fixes the convergence deadline, whereas
the ratio \(\alpha/\beta\) shapes how the dissipation is distributed
between the local and large-\(V\) regimes.
\end{remark}

\begin{definition}[\(\mathcal L_2\)-gain]
\label{def:l2_gain}
Consider a nonlinear system driven by an exogenous input \(w\in\mathbb{R}^{m_w}\), with performance output \(z\in\mathbb{R}^{m_z}\). The system is said to have a finite \(\mathcal L_2\)-gain not exceeding \(\gamma>0\) from \(w\) to \(z\) if there exists a nonnegative function \(\beta(\cdot)\), with \(\beta(0)=0\), such that every admissible trajectory satisfies \(\int_{0}^{T}\|z(t)\|^{2}dt\leq\gamma^{2}\int_{0}^{T}\|w(t)\|^{2}dt+\beta(x(0))\) for all \(T\geq0\). In particular, for \(x(0)=0\), this condition reduces to \(\|z\|_{2}\leq\gamma\|w\|_{2}\), so that \(\gamma\) quantifies the worst-case energy amplification from the exogenous input to the regulated output. Equivalently, a storage function \(V(x)\geq0\) certifies this attenuation level whenever its derivative along the system trajectories satisfies \(\dot V+\|z\|^{2}-\gamma^{2}\|w\|^{2}\leq0\). This dissipation inequality provides the standard bridge from \(\mathcal L_2\)-attenuation to a Hamilton--Jacobi inequality and, under an optimal control formulation, to the corresponding Hamilton--Jacobi--Bellman or Hamilton--Jacobi--Isaacs equation.
\end{definition}

Theorem~\ref{rem:normalized_comparison_interpretation} serves as a crucial premise for developing the theory of fixed-time convergence in Sections~\ref{sec:critic} and ~\ref{sec:closed_loop}, whereas Definition~\ref{def:l2_gain} is significant because it is used to formulate the HJI equation in Eqs~\eqref{eq:performance},~\eqref{eq:running_cost}.

\subsection{Secure Saturation-Constrained Differential Game}
\label{subsec:secure_game}

Consider the nonlinear control-affine system
\begin{equation}
\dot{\bx}
=
\bm f(\bx)
+
\bm g(\bx)(\bu+\ba)
+
\bm k(\bx)\bd,
\qquad
\bx(0)=\bx_0,
\label{eq:plant}
\end{equation}
where
\(
\bx\in\mathbb R^n
\),
\(
\bu,\ba\in\mathbb R^m
\),
and
\(
\bd\in\mathbb R^s
\).
The vector \(\bu\) is the minimizing secure input,
\(\ba\) is a matched same-channel FDI attack, and \(\bd\) is an
exogenous disturbance.

\begin{remark}[FDI location and actuator constraint]
The constraint \(|u_i|<\bar u_i\) limits only the secure control
component generated by the controller. Two actuator-side FDI
architectures must be distinguished. If \(\ba\) is a matched malicious
input injected after the constrained action \(\bu\) has been generated,
then the plant receives \(\bu+\ba\) and the model above applies
directly. In this case, \(|a_i|\leq\bar a_i\) implies only
\(|u_i+a_i|\leq\bar u_i+\bar a_i\), not
\(|u_i+a_i|<\bar u_i\). Conversely, if \(\ba\) corrupts the command
before physical actuator saturation, the plant-input channel would
instead contain the componentwise saturated signal
\(\operatorname{sat}(\bu+\ba)\), and the differential game and saddle
policies developed below would generally require reformulation. The
present work adopts the former architecture.
\end{remark}


\begin{assumption}[Matched additive actuator-channel attack]
The FDI signal \(\ba\) acts additively through the matched plant-input
channel after the controller has generated the constrained secure action
\(\bu\). Accordingly, the model considered throughout is
\(\dot{\bx}
=\bm f(\bx)+\bm g(\bx)(\bu+\ba)+\bm k(\bx)\bd\).
The hard actuator constraint is imposed on the defender-generated
control action \(\bu\) only, and no constraint of the form
\(|u_i+a_i|<\bar u_i\) is assumed.
\end{assumption}

\begin{assumption}[Bounded attack and disturbance]
\label{ass:bounded_exogenous}
The attack and disturbance satisfy
\(\ba\in\mathcal L_2([0,\infty))\cap\mathcal L_\infty([0,\infty))\) and
\(\bd\in\mathcal L_2([0,\infty))\cap\mathcal L_\infty([0,\infty))\),
with \(\|\ba(t)\|\leq\bar a\) and \(\|\bd(t)\|\leq\bar d\) for all
\(t\geq0\), where \(\bar a,\bar d\geq0\) are known finite constants.
\end{assumption}

\begin{definition}[Admissible secure policy]
\label{def:admissible_secure_policy}
A continuous policy \(\bu_c:\Omega\rightarrow\mathcal U\) is admissible
on \(\Omega\), denoted by \(\bu_c\in\Psi(\Omega)\), if
\(\bu_c(\bm0)=\bm0\), the resulting closed-loop system is
asymptotically stable on \(\Omega\) in the absence of exogenous signals,
and the associated infinite-horizon cost is finite. A data-driven construction of an admissible initial policy is provided in
Appendix~\ref{app:koopman_warm_start}.
\end{definition}


\begin{assumption}[Plant regularity and operating domain]
\label{ass:plant_regularity}
Let \(\Omega\subset\mathbb R^n\) be a prescribed compact set containing
the origin in its interior. The unknown drift
\(f:\mathbb R^n\to\mathbb R^n\) is locally Lipschitz on an open
neighborhood of \(\Omega\) and satisfies \(f(0)=0\). The known input
maps \(g:\mathbb R^n\to\mathbb R^{n\times m}\) and
\(k:\mathbb R^n\to\mathbb R^{n\times s}\) are locally Lipschitz on the
same neighborhood. For every admissible policy considered below and
every admissible exogenous signal, the corresponding closed-loop
system admits a unique forward solution for every initial condition
\(x_0\in\Omega\), at least up to the first exit time from \(\Omega\).
\end{assumption}

The actuator constraint is
\begin{equation}
\mathcal U
:=
\left\{
\bu\in\mathbb R^m:
|u_\ell|<\bar u_\ell,\ 
\ell=1,\ldots,m
\right\},
\label{eq:input_set}
\end{equation}
where
\begin{equation}
\bUbar
:=
\diag(\bar u_1,\ldots,\bar u_m),
\qquad
\bR
:=
\diag(r_1,\ldots,r_m)
\succ0 .
\label{eq:input_matrices}
\end{equation}
The constraint is embedded into the control penalty
\begin{equation}
U(\bu)
=
2\sum_{\ell=1}^{m}
\bar u_\ell r_\ell
\int_{0}^{u_\ell}
\atanh\left(
\frac{\sigma}{\bar u_\ell}
\right)d\sigma .
\label{eq:nonquadratic_cost}
\end{equation}
The barrier-type penalty \(U\) is positive definite and strictly
convex on \(\mathcal U\), while its gradient becomes unbounded at the
constraint boundary. Hence, the stationary minimizing policy remains
strictly inside \(\mathcal U\) without a posteriori clipping.

For admissible feedback policies, consider
\begin{equation}
J(\bx_0,\bu,\ba,\bd)
=
\int_{0}^{\infty}
\ell(\bx,\bu,\ba,\bd)\,dt,
\label{eq:performance}
\end{equation}
with
\begin{equation}
\begin{aligned}
\ell(\bx,\bu,\ba,\bd)
:={}&
Q(\bx)
+U(\bu)\\
&-
\gamma_a^2\ba^\top\bTmat\ba
-
\gamma_d^2\bd^\top\bS\bd,
\end{aligned}
\label{eq:running_cost}
\end{equation}
where
\[
\bTmat=\bTmat^\top>0,
\qquad
\bS=\bS^\top>0,
\qquad
\gamma_a,\gamma_d>0.
\]
At this stage, \(Q:\mathbb R^n\rightarrow\mathbb R_{\geq0}\) is not
prescribed. It will be recovered from an inverse-optimal construction
so that the HJI identity itself generates the desired predefined-time
dissipation.


The saddle value is
\begin{equation}
V^*(\bx_0)
=
\min_{\bu\in\mathcal U}
\max_{\ba,\bd}
J(\bx_0,\bu,\ba,\bd).
\label{eq:value}
\end{equation}

\begin{assumption}[Ideal value-function regularity]
\label{ass:value_regularity}
The saddle value \(V^*:\Omega\rightarrow\mathbb R_{\geq0}\) exists,
is positive definite, satisfies \(V^*(0)=0\), and
\(V^*\in C(\Omega)\cap C^1(\Omega\setminus\{0\})\).
The HJI identity and the associated saddle policies are understood
for all \(x\in\Omega\setminus\{0\}\).
\end{assumption}

For \(V\in C^1\), define
\begin{equation}
\begin{aligned}
\mathcal H(\bx,\nabla V,\bu,\ba,\bd)
:={}&
Q(\bx)+U(\bu)
-\gamma_a^2\ba^\top\bTmat\ba\\
&-
\gamma_d^2\bd^\top\bS\bd
+\nabla V^\top
\big[
\bm f+\bm g(\bu+\ba)+\bm k\bd
\big],
\end{aligned}
\label{eq:Hamiltonian}
\end{equation}
where the state-dependent matrices are evaluated at \(\bx\).
The HJI equation is
\begin{equation}
0
=
\min_{\bu\in\mathcal U}
\max_{\ba,\bd}
\mathcal H(
\bx,\nabla V^*,\bu,\ba,\bd
).
\label{eq:HJI}
\end{equation}
The pointwise saddle policies are unique:
\begin{subequations}
\label{eq:optimal_saddle_policies}
\begin{align}
\bu^*(\bx)
={}&
-\bUbar
\tanh\left[
\frac{1}{2}
\bR^{-1}\bUbar^{-1}
\bm g^\top(\bx)\nabla V^*(\bx)
\right],
\label{eq:u_star}\\
\ba^*(\bx)
={}&
\frac{1}{2\gamma_a^2}
\bTmat^{-1}
\bm g^\top(\bx)\nabla V^*(\bx),
\label{eq:a_star}\\
\bd^*(\bx)
={}&
\frac{1}{2\gamma_d^2}
\bS^{-1}
\bm k^\top(\bx)\nabla V^*(\bx).
\label{eq:d_star}
\end{align}
\end{subequations}

\begin{remark}
    The term \(Q(\bx)\) in Equation \eqref{eq:Hamiltonian} of this section is defined in a general manner; however, in the subsequent part, it will be formulated specifically to ensure fixed-time stability of the closed-loop system.
\end{remark}

\subsection{Normalized Predefined-Time Comparison}
\label{subsec:normalized_comparison}

This section provides the reasoning tools and the methodology for designing the cost function to achieve fixed-time/Predefined-Time stability.

The origin is retained as the nominal regulation target and is not
removed from the HJI domain \(\Omega\). For the fixed-time comparison,
choose \(r_->0\) such that \(B_{r_-}(0)\subset\Omega\).
If \(x=0\), the nominal coordination objective is already attained;
if \(0<\|x\|<r_-\), the trajectory already belongs to the
prescribed terminal neighborhood \(B_{r_-}(0)\). Hence, only the
nontrivial region \(\|x\|\ge r_-\) requires the fixed-time
comparison. Define
\(\Omega^r:=\{x\in\Omega:\|x\|\ge r_-\}\).
Since \(\Omega\) is compact and
\(\{x:\|x\|\ge r_-\}\) is closed,
\(\Omega^r\) is compact and \(0\notin\Omega^r\).
Thus \(V^*(x)/\|x\|^2\) and
\(\|\nabla_xV^*(x)\|/\|x\|\) are well defined
on \(\Omega^r\). If a disturbance or attack subsequently drives the
trajectory outside \(B_{r_-}(0)\), the same comparison applies
again. The exclusion of the origin from \(\Omega^r\) is purely
analytical and does not alter the HJI operating domain.

\begin{lemma}[Local bounds of the ideal HJI value function]
\label{lem:value_quadratic_gradient}
Suppose that \(V^*\) is positive definite on \(\Omega\),
satisfies \(V^*(0)=0\), and
\(V^*\in C^1(\Omega\setminus\{0\})\). Then there exist
\(\underline{\alpha}\in\mathcal K_\infty\) and constants
\(\bar c>0\), \(c_{\nabla}>0\) such that
\(\underline{\alpha}(\|x\|)\le V^*(x)
\le\bar c\|x\|^2\) and
\(\|\nabla_xV^*(x)\|
\le c_{\nabla}\|x\|\) for all
\(x\in\Omega^r\).
\end{lemma}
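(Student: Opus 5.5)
The bounds hold on \(\Omega^r=\{x\in\Omega:\|x\|\ge r_-\}\), which is compact and excludes the origin, so every estimate will come from continuity on a compact set away from the singular point. The two upper bounds are extreme-value arguments. The lower bound \(\underline\alpha\in\mathcal K_\infty\) needs an explicit construction that uses positive definiteness on all of \(\Omega\) together with compactness of annuli.

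\textbf{Upper bounds.} On \(\Omega^r\) the function \(x\mapsto V^*(x)/\|x\|^2\) is continuous, because \(V^*\in C(\Omega)\) by Assumption~\ref{ass:value_regularity} and \(\|x\|\ge r_->0\). Since \(\Omega^r\) is compact, this function attains a finite maximum \(M_0\ge0\). I will set \(\bar c:=\max\{M_0,1\}>0\), which gives \(V^*(x)\le\bar c\|x\|^2\). In the same way, \(x\mapsto\|\nabla_xV^*(x)\|/\|x\|\) is continuous on \(\Omega^r\subset\Omega\setminus\{0\}\) because \(V^*\in C^1(\Omega\setminus\{0\})\). I will take \(c_\nabla\) to be the maximum of its bound and \(1\). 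One technicality needs checking: if \(\Omega^r\) touches \(\partial\Omega\), the gradient must be continuous up to the boundary. I will read \(C^1(\Omega\setminus\{0\})\) as continuity of \(\nabla V^*\) on the compact set itself, as Assumption~\ref{ass:value_regularity} intends, and note this. If the sets are empty, any positive constants work.

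\textbf{Lower bound.} Let \(R:=\max_{x\in\Omega}\|x\|\), and for \(s\in(0,R]\) define
\[
\psi(s):=\min\{V^*(x):x\in\Omega,\ s\le\|x\|\}.
\]
This is a minimum of a continuous positive function over a nonempty compact set not containing \(0\), so \(\psi(s)>0\). Moreover, \(\psi\) is nondecreasing, because the feasible set shrinks as \(s\) grows, and \(V^*(x)\ge\psi(\|x\|)\). Two problems remain: \(\psi\) need not be continuous or strictly increasing, and it is only defined up to \(R\). I will therefore take
\[
\underline\alpha(s):=\frac{s}{R}\,\frac{1}{s}\int_0^{s}\psi(\tau)\,d\tau\quad\text{for } s\le R,
\]
extended linearly for \(s>R\), with any positive slope that keeps it continuous.

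This is the step I expect to need the most care: verifying that \(\underline\alpha\) is continuous, strictly increasing, satisfies \(\underline\alpha(0)=0\), and obeys \(\underline\alpha(s)\le\psi(s)\). The averaged integral is at most \(\psi(s)\) since \(\psi\) is monotone, and the factor \(s/R\le1\) preserves this while enforcing strict increase. Because \(\|x\|\le R\) on \(\Omega\), the linear extension beyond \(R\) is never evaluated, so \(\underline\alpha\in\mathcal K_\infty\) and \(\underline\alpha(\|x\|)\le V^*(x)\) on \(\Omega\supset\Omega^r\).

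Combining the three bounds gives the lemma.
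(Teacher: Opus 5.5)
Your proposal is correct. For the two upper bounds it follows the paper's argument: continuity of \(V^*(x)/\|x\|^2\) and \(\|\nabla_xV^*(x)\|/\|x\|\) on the compact set \(\Omega^r\), which excludes the origin, together with the extreme-value theorem.

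The route differs for the lower bound. The paper applies the same compactness argument once more, taking \(\underline c:=\min_{x\in\Omega^r}V^*(x)/\|x\|^2>0\) and \(\underline{\alpha}(s):=\underline c\,s^2\). This is a one-line proof. It also yields exactly the quadratic form \(V^*(x)\ge\underline v\|x\|^2\) that Theorem~\ref{thm:closed_loop} later uses to turn the window bound into the pointwise radius \(r_x\).

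Your construction \(\underline{\alpha}(s)=\frac{1}{R}\int_0^s\psi(\tau)\,d\tau\), with \(\psi(s)=\min\{V^*(x):x\in\Omega,\ \|x\|\ge s\}\), is sound. Since \(\psi\) is positive, nondecreasing and bounded on \((0,R]\), \(\underline{\alpha}\) is continuous, strictly increasing, vanishes at \(0\), and satisfies \(\underline{\alpha}(s)\le\frac{s}{R}\psi(s)\le\psi(s)\). It proves more than the lemma asks: a comparison function on all of \(\Omega\), including near the origin. There \(V^*\) may vanish faster than quadratically (e.g.\ \(V^*(x)=\|x\|^4\)), so no global bound \(\underline c\|x\|^2\) exists. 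Because the lemma only concerns \(\Omega^r\), that generality is not needed. If you want the quadratic form used downstream, it follows from yours: on \(\Omega^r\), \(V^*(x)\ge\underline{\alpha}(r_-)\ge(\underline{\alpha}(r_-)/R^2)\|x\|^2\).

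Two of your side remarks improve on the paper's proof. First, taking \(\max\{\cdot,1\}\) guarantees \(c_\nabla>0\), which a bare maximum does not ensure if \(\nabla V^*\) vanishes identically on \(\Omega^r\). Second, you correctly flag that continuity of \(\nabla V^*\) up to \(\partial\Omega\cap\Omega^r\) is an interpretation of \(C^1(\Omega\setminus\{0\})\) that the paper uses without comment.
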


\begin{proof}
Since \(\Omega^r\) is compact and \(0\notin\Omega^r\),
the maps \(V^*(x)/\|x\|^2\) and
\(\|\nabla_xV^*(x)\|/\|x\|\) are continuous on
\(\Omega^r\). Hence
\(\bar c:=\max_{x\in\Omega^r}
V^*(x)/\|x\|^2<\infty\) and
\(c_{\nabla}:=\max_{x\in\Omega^r}
\|\nabla_xV^*(x)\|/\|x\|<\infty\),
which give \(V^*(x)\le\bar c\|x\|^2\) and
\(\|\nabla_xV^*(x)\|
\le c_{\nabla}\|x\|\). Moreover, positive definiteness of
\(V^*\) and compactness of \(\Omega^r\) imply
\(\underline c:=\min_{x\in\Omega^r}
V^*(x)/\|x\|^2>0\). Choosing
\(\underline{\alpha}(s):=\underline c s^2\in\mathcal K_\infty\)
yields \(\underline{\alpha}(\|x\|)
\le V^*(x)\), completing the proof.
\end{proof}

\begin{remark}[Cost-induced fixed-time structure]
\label{rem:cost_induced_fixed_time}
For the fixed-time design developed later, the general penalty \(Q(x)\) is specialized as \(Q(x)=x^\top\mathbf Q_x x+\lambda_x\kappa_1\|x\|^{2\gamma_1}+\lambda_x\kappa_2\|x\|^{2\nu}\), where \(\mathbf Q_x=\mathbf Q_x^\top>0\), \(\kappa_1,\kappa_2,\lambda_x>0\), and \(0<\gamma_1<1<\nu\). This construction depends only on the measurable state and does not introduce the unknown value function into the running cost. From Lemma~\ref{lem:value_quadratic_gradient}, \(\|x\|^{2\gamma_1}\geq\bar c^{-\gamma_1}(V^*(x))^{\gamma_1}\) and \(\|x\|^{2\nu}\geq\bar c^{-\nu}(V^*(x))^\nu\). Along the ideal saddle-point trajectory, \(\dot V^*(x)=-Q(x)-U(u^*)+\gamma_a^2(a^*(x))^\top\mathbf T a^*(x)+\gamma_d^2(d^*(x))^\top\mathbf S d^*(x)\). Since \(\Omega^r\) is compact and \(V^*\) is continuously differentiable, the last two terms are bounded on \(\Omega^r\); hence, there exists \(\Delta_H\geq0\) such that \(\gamma_a^2(a^*(x))^\top\mathbf T a^*(x)+\gamma_d^2(d^*(x))^\top\mathbf S d^*(x)\leq\Delta_H\) for all \(x\in\Omega^r\). Using \(U(u^*)\geq0\) together with the above state-to-value power bounds gives
\begin{equation}
    \dot V^*(x)
    \leq
    -\lambda_x c_{x1}\big(V^*(x)\big)^{\gamma_1}
    -\lambda_x c_{x2}\big(V^*(x)\big)^\nu
    +\Delta_H,
\label{eq:ideal_fixed_time_value_inequality}
\end{equation}
where \(c_{x1}=\kappa_1\bar c^{-\gamma_1}\) and \(c_{x2}=\kappa_2\bar c^{-\nu}\). Thus, the two fixed-time powers arise directly from the HJI-shaped state penalty rather than from an external assumption on the unknown value function.
\end{remark}

\begin{remark}[Computation of \(\bar c\)]
\label{rem:computable_cbar}
The constant \(\bar c\) in Lemma~\ref{lem:value_quadratic_gradient} need not be evaluated from the unknown \(V^*\). All signals entering the relevant cost and finite-window quantities admit uniform bounds on \(\Omega^r\). For the primitive formulation in~\eqref{eq:plant}, Assumption~\ref{ass:bounded_exogenous} gives \(\|a(t)\|\leq\bar a\) and \(\|d(t)\|\leq\bar d\). If the signals are generated by the ideal saddle policies, Lemma~\ref{lem:value_quadratic_gradient}, Assumption~\ref{ass:plant_regularity}, and compactness of \(\Omega^r\) give \(\bar g:=\max_{\bx\in\Omega^r}\|g(\bx)\|<\infty\), \(\bar k:=\max_{\bx\in\Omega^r}\|k(\bx)\|<\infty\), and \(\bar x_r:=\max_{\bx\in\Omega^r}\|\bx\|<\infty\), so that, from (14b)--(14c), \(\|a^*(\bx)\|\leq\|T^{-1}\|\bar g c_{\nabla}\bar x_r/(2\gamma_a^2)=:\bar a^*<\infty\) and \(\|d^*(\bx)\|\leq\|S^{-1}\|\bar k c_{\nabla}\bar x_r/(2\gamma_d^2)=:\bar d^*<\infty\). Likewise, since \(\|\nabla\phi(\bx)\|\leq\bar\phi_g\) on \(\Omega^r\) and Lemma~\ref{lem:window_bounded_learned_signals} gives \(\|\hat{\bW}(t)\|\leq\bar W_{\hat{}}\), the learned saddle signals satisfy \(\|\hat a(t)\|\leq\|T^{-1}\|\bar g\bar\phi_g\bar W_{\hat{}}/(2\gamma_a^2)=:\bar{\hat a}<\infty\) and \(\|\hat d(t)\|\leq\|S^{-1}\|\bar k\bar\phi_g\bar W_{\hat{}}/(2\gamma_d^2)=:\bar{\hat d}<\infty\), while the saturation parameterization gives \(|\hat u_i(t)|<\bar u_i\).

Accordingly, let \(u_b\in\Psi(\Omega)\) be any certified admissible secure policy and let \(\bar J_b<\infty\) be a computable worst-case performance certificate satisfying \(\bar J_b\geq\sup_{\bx_0\in\Omega^r}\sup_{a,d}J(\bx_0;u_b,a,d)\), where the admissible \(a,d\) obey Assumptions~\ref{ass:bounded_exogenous} and~\ref{ass:plant_regularity}. Since \(V^*(\bx)=\min_u\max_{a,d}J(\bx;u,a,d)\), one has \(V^*(\bx)\leq\sup_{a,d}J(\bx;u_b,a,d)\leq\bar J_b\) for every \(\bx\in\Omega^r\). Moreover, \(\|\bx\|\geq r_-\) on \(\Omega^r\);.The operating domain \(\Omega\), and hence \(\Omega^r=\{\bx\in\Omega:\|\bx\|\geq r_-\}\), is prescribed a priori from the design specifications, physical operating limits, safety requirements, or the intended data-collection region, rather than inferred from the unknown drift dynamics. hence
\[
\bar c
=
\max_{\bx\in\Omega^r}
\frac{V^*(\bx)}{\|\bx\|^2}
\leq
\frac{\bar J_b}{r_-^2}
=:
\bar c_b .
\]
Thus, the explicit certified number \(\bar c_b=\bar J_b/r_-^2\) may be used everywhere in place of the unknown exact \(\bar c\), and consequently \(\|\bx\|^{2\gamma_1}\geq\bar c_b^{-\gamma_1}(V^*(\bx))^{\gamma_1}\) and \(\|\bx\|^{2\nu}\geq\bar c_b^{-\nu}(V^*(\bx))^\nu\). The price is conservatism: a small prescribed radius \(r_-\) or a worst-case \(\bar J_b\) computed over a large operating domain can make \(\bar c_b\) large, thereby increasing the penalty or learning gains required by the predefined-time design and potentially producing more aggressive near-saturation transients, greater numerical stiffness and dynamic range, and tighter sampling and computational requirements in digital hardware. This is the deliberate tradeoff for obtaining an a-priori computable convergence certificate without knowing \(V^*\).
\end{remark}

\section{Integral Bellman--Isaacs Learning}
\label{sec:IRL}

Let \(\bphi:\Omega^r\rightarrow\R^N\) be a continuously
differentiable basis vector satisfying \(\bphi(\bm0)=\bm0\).
The ideal and learned value functions are represented as
\begin{equation}
V^*(\bx)
=
\bW^{*\top}\bphi(\bx)+\varepsilon(\bx),
\qquad
\hat V(\bx)
=
\hat{\bW}^{\top}\bphi(\bx),
\label{eq:value_approx}
\end{equation}
where \(\bW^*\in\R^N\) is an unknown constant ideal weight,
\(\hat{\bW}\in\R^N\) is its online estimate, and
\begin{equation}
\widetilde{\bW}
:=
\bW^*-\hat{\bW}.
\label{eq:weight_error}
\end{equation}
On the compact operating domain \(\Omega^r\), assume that
\begin{equation}
\begin{aligned}
\norm{\bW^*}&\le W_M,
&\qquad
\norm{\bphi}&\le\bar\phi,
&\qquad
\norm{\nabla\bphi}&\le\bar\phi_g,
\\[-1mm]
|\varepsilon|&\le\bar\varepsilon,
&\qquad
\norm{\nabla\varepsilon}&\le\bar\varepsilon_g.
\end{aligned}
\label{eq:approximation_bounds}
\end{equation}


Consequently,
\begin{equation}
\begin{aligned}
\nabla V^*
&=\nabla\bphi\T\bW^*+\nabla\varepsilon,
&\qquad
\nabla\hat V
&=\nabla\bphi\T\hat{\bW},
\\[-1mm]
\nabla V^*-\nabla\hat V
&=\nabla\bphi\T\widetilde{\bW}
+\nabla\varepsilon .
\end{aligned}
\label{eq:value_gradient_decomposition}
\end{equation}
Since \(\Omega^r\) is compact and \(\bphi\in C^1(\Omega^r)\), define
\(\bar\phi:=\max_{x\in\Omega^r}\|\bphi(x)\|<\infty\) and
\(\bar\phi_g:=\max_{x\in\Omega^r}\|\nabla\bphi(x)\|<\infty\); hence
\(\|\bphi(x)\|\leq\bar\phi\) and
\(\|\nabla\bphi(x)\|\leq\bar\phi_g\) for all \(x\in\Omega^r\).

Replacing \(\nabla V^*\) in
\eqref{eq:u_star}--\eqref{eq:d_star} by
\(\nabla\hat V\) gives the learned saddle policies
\begin{align}
\hat{\bu}
&=
-\bUbar\tanh\left(
\frac{1}{2}
\bR^{-1}\bUbar^{-1}
\bm g\T\nabla\bphi\T\hat{\bW}
\right),
\label{eq:u_hat}\\
\hat{\ba}
&=
\frac{1}{2\gamma_a^2}
\bTmat^{-1}\bm g\T
\nabla\bphi\T\hat{\bW},
\label{eq:a_hat}\\
\hat{\bd}
&=
\frac{1}{2\gamma_d^2}
\bS^{-1}\bm k\T
\nabla\bphi\T\hat{\bW}.
\label{eq:d_hat}
\end{align}
For compactness, define
\begin{equation}
\begin{aligned}
\ell_p^*(t)
&:=
\ell_p\!\left(
\bx(t),\bu^*(t),\ba^*(t),\bd^*(t)
\right),
\\[-1mm]
\hat\ell_p(t)
&:=
\ell_p\!\left(
\bx(t),\hat{\bu}(t),\hat{\ba}(t),\hat{\bd}(t)
\right).
\end{aligned}
\label{eq:ideal_learned_running_costs}
\end{equation}

\begin{lemma}[Window-wise bounded learned signals]
\label{lem:window_bounded_learned_signals}
Suppose that \(\bx(t)\in\Omega^r\), where \(\Omega^r\) is compact, and that
the critic weight \(\hat{\bW}\) evolves according to
\eqref{eq:critic_update}. Then, for every fixed \(t\geq T\), there
exist finite constants \(\bar W_t,\bar u,\bar a_t,\bar d_t,\bar\ell_t>0\)
such that
\(\|\hat{\bW}(\tau)\|\leq\bar W_t\),
\(\|\hat{\bu}(\tau)\|\leq\bar u\),
\(\|\hat{\ba}(\tau)\|\leq\bar a_t\),
\(\|\hat{\bd}(\tau)\|\leq\bar d_t\), and
\(|\hat\ell_p(\tau)|\leq\bar\ell_t\) for all
\(\tau\in[t-T,t]\). The constants may be enlarged, if necessary, to
cover the finitely many recorded replay windows.
\end{lemma}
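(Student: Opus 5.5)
The plan is to establish the bound on \(\hat{\bW}\) over the window first. The bounds on \(\hat{\bu},\hat{\ba},\hat{\bd},\hat\ell_p\) then follow directly from the explicit policy formulas \eqref{eq:u_hat}--\eqref{eq:d_hat} together with compactness of \(\Omega^r\).

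\textbf{Critic weight.} The critic law \eqref{eq:critic_update} is referenced but not yet displayed. I will only use the structural properties the paper announces for it: it is a normalized replay-based two-power law. Its right-hand side is built from:
\begin{itemize}
\item terms \(\spow{\cdot}{p}\), \(\spow{\cdot}{q}\) of a normalized regressor error,
\item integral regressors over windows of length \(T\) of \(\bphi\)-differences and running costs,
\item finitely many recorded replay terms that are constant in time.
\end{itemize}
Since \(\bx(\tau)\in\Omega^r\) and \(\bphi\) is continuous, all regressors are bounded by quantities depending only on \(\bar\phi\), \(\bar\phi_g\), \(\bar g\), \(\bar k\), \(\bar a\), \(\bar d\). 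The normalization then makes the right-hand side of \eqref{eq:critic_update} locally bounded, and continuous in \(\hat{\bW}\) away from any singularity at zero error.

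The key step is to rule out finite escape on the compact interval \([0,t]\). I would try two routes, in this order.
\begin{enumerate}
\item Take \(L(\widetilde{\bW})=\tfrac12\|\widetilde{\bW}\|^2\). The replay-informativeness structure should make the dominant part of \(\dot L\) nonpositive, up to approximation-error residuals bounded via \eqref{eq:approximation_bounds}. This gives \(\|\widetilde{\bW}(\tau)\|\le\|\widetilde{\bW}(0)\|+c\,\tau\), and hence \(\|\hat{\bW}(\tau)\|\le W_M+\|\widetilde{\bW}(\tau)\|\).
\item If that fails, use only linear growth of the normalized right-hand side, \(\|\dot{\hat{\bW}}\|\le c_0+c_1\|\hat{\bW}\|\). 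Grönwall's inequality then gives \(\|\hat{\bW}(\tau)\|\le(\|\hat{\bW}(0)\|+c_0\tau)e^{c_1\tau}\). This step needs care because the high-power term \(\spow{\cdot}{q}\) with \(q>1\) is superlinear; the normalization is what must tame it.
\end{enumerate}
Either route bounds \(\hat{\bW}\) on \([0,t]\supset[t-T,t]\). Taking the supremum gives \(\bar W_t\), which is allowed to depend on \(t\), as the statement permits.

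\textbf{Learned signals.} From \eqref{eq:u_hat}, \(|\hat u_i|<\bar u_i\) holds regardless of \(\hat{\bW}\), so \(\bar u:=\|\bUbar\|\sqrt m\) is \(t\)-independent. From \eqref{eq:a_hat}--\eqref{eq:d_hat}, exactly as in Remark~\ref{rem:computable_cbar}, we obtain
\[
\bar a_t=\|\bTmat^{-1}\|\bar g\bar\phi_g\bar W_t/(2\gamma_a^2),\qquad
\bar d_t=\|\bS^{-1}\|\bar k\bar\phi_g\bar W_t/(2\gamma_d^2).
\]

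\textbf{Running cost.} For \(\hat\ell_p\), combine the following:
\begin{itemize}
\item \(Q\) is continuous on the compact \(\Omega^r\), so it is bounded there.
\item \(U(\hat{\bu})\) is finite because \(\hat{\bu}\) stays in a compact subset of the open set \(\mathcal U\). Indeed, \(|\hat u_i|\le\bar u_i\tanh(\rho_t)<\bar u_i\), with \(\rho_t\) the bound on the \(\tanh\) argument obtained from \(\bar W_t\). Also, \(\int_0^{u}\atanh\) is finite on such subsets.
\item The quadratic attack and disturbance terms are bounded by \(\gamma_a^2\|\bTmat\|\bar a_t^2+\gamma_d^2\|\bS\|\bar d_t^2\).
\end{itemize}
Summing these gives \(\bar\ell_t\).

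\textbf{Replay windows.} The recorded replay windows are finitely many fixed past intervals. On each of them the same argument applies with stored states in \(\Omega^r\) and stored weights, so taking maxima enlarges the constants. The main obstacle is the no-finite-escape bound for \(\hat{\bW}\) under the two-power law. If neither route above works, I will fall back on local existence plus continuity of the solution on the closed interval, where boundedness is automatic once forward completeness of the critic ODE is shown.
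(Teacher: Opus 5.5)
Your fallback is the paper's proof. The paper writes \(\hat{\bW}(\tau)=\hat{\bW}(t-T)+\int_{t-T}^{\tau}\mathcal G_W(s)\,ds\), concludes that \(\hat{\bW}\) is absolutely continuous and hence continuous on the compact window, and takes \(\bar W_t\) as its maximum. The bounds on \(\hat{\bu},\hat{\ba},\hat{\bd},\hat\ell_p\) and the enlargement over the finitely many replay windows then follow as in your plan; you give explicit constants where the paper just invokes continuity. The paper therefore reads the hypothesis ``\(\hat{\bW}\) evolves according to \eqref{eq:critic_update}'' as already including existence of the solution up to time \(t\). It never proves forward completeness. You should state this existence as a hypothesis too, because neither of your two primary routes establishes it.

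\textbf{Route 1 fails.} The online residual is not controlled by \eqref{eq:approximation_bounds} alone. Write \(e=-\widetilde{\bW}^{\top}\bomega+\epsilon\) with \(\epsilon(t)=\bW^{*\top}\bomega(t)+\rho(t)/m(t)\). The window integral \(\rho(t)\) contains \(-\gamma_a^2\hat{\ba}^{\top}\bTmat\hat{\ba}-\gamma_d^2\hat{\bd}^{\top}\bS\hat{\bd}\), which is a negative quadratic form in \(\hat{\bW}(\tau)\), \(\tau\in[t-T,t]\).
\begin{itemize}
\item You cannot borrow the uniform bound \(\epsilon_{Hm}\) of Proposition~\ref{prop:residual}. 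It rests on \(\bar\ell_\Delta\), whose finiteness the paper justifies by continuity on a bounded state--weight domain, i.e.\ by the conclusion of this very lemma. Using it here is circular.
\item Without it, Lemma~\ref{lem:perturbed_power} only gives \(\dot V_c\le c+c'\sup_{[t-T,t]}\|\hat{\bW}\|^{2(\gamma_2+1)}\). This is a superlinear inequality and does not exclude escape.
\item Replay richness does not help. It is only available after \(T_E\), and the \(\widetilde{\bW}\)-dependent terms are already nonpositive without it.
\end{itemize}

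\textbf{Route 2 fails.} The normalization \(m=1+\|\Delta\bphi\|^2\) does not depend on \(\hat{\bW}\), so it cannot tame \(\spow{e}{\gamma_2}\). Even the frozen replay terms give \(|e_i|^{\gamma_2}\sim\|\hat{\bW}\|^{\gamma_2}\). The online term grows like \(\|\hat{\bW}\|^{2\gamma_2}\) through \(\rho\). So no linear-growth bound, and hence no Gr\"onwall estimate, is available.

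When \(\rho/m\) dominates, \(e<0\) and the update drives \(\hat{\bW}\) along \(+\bomega\) at a rate of order \(\|\hat{\bW}\|^{2\gamma_2}\). Finite escape therefore cannot be excluded from the stated structure without an extra device such as weight projection.

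Two minor points:
\begin{itemize}
\item The signed powers are continuous everywhere; they are merely non-Lipschitz at \(e=0\). There is no singularity, so local existence holds, but uniqueness is not automatic.
\item \(U\) is bounded on all of \(\mathcal U\), since \(\int_0^{\bar u_\ell}\atanh(\sigma/\bar u_\ell)\,d\sigma=\bar u_\ell\ln 2\). Your compact-subset argument for \(U(\hat{\bu})\) is correct but unnecessary.
\end{itemize}
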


\begin{proof}
Denote the right-hand side of \eqref{eq:critic_update} by
\(\mathcal G_W(t)\), so that
\(\dot{\hat{\bW}}=\mathcal G_W(t)\). On every compact interval
\([t-T,t]\), continuity of the basis functions on \(\Omega^r\), bounded
normalized regressors, and the signed-power residual mappings imply
\(\mathcal G_W\in L_{\mathrm{loc}}^1\). Hence
\(\hat{\bW}(\tau)=\hat{\bW}(t-T)+
\int_{t-T}^{\tau}\mathcal G_W(s)\,ds\), so
\(\hat{\bW}\in AC_{\mathrm{loc}}\subset C^0\) and therefore
\(\bar W_t:=\max_{\tau\in[t-T,t]}\|\hat{\bW}(\tau)\|<\infty\).
The saturation-aware control law gives
\(|\hat u_j(\tau)|<\bar u_j\), hence
\(\|\hat{\bu}(\tau)\|\leq
\bar u:=\|\operatorname{col}\{\bar u_1,\ldots,\bar u_m\}\|\).
Since \(\hat{\ba}\) and \(\hat{\bd}\) depend continuously on
\(\bx\), \(\nabla\bm\phi(\bx)\), and \(\hat{\bW}\), compactness of
\(\Omega^r\) and boundedness of \(\hat{\bW}\) imply
\(\bar a_t:=\max_{\tau\in[t-T,t]}\|\hat{\ba}(\tau)\|<\infty\) and
\(\bar d_t:=\max_{\tau\in[t-T,t]}\|\hat{\bd}(\tau)\|<\infty\).
Consequently,
\(\hat\ell_p(\tau)=
\ell_p(\bx(\tau),\hat{\bu}(\tau),\hat{\ba}(\tau),\hat{\bd}(\tau))\)
is continuous on the compact interval \([t-T,t]\), and thus
\(\bar\ell_t:=\max_{\tau\in[t-T,t]}|\hat\ell_p(\tau)|<\infty\).
Since the replay stack contains only finitely many stored windows,
these bounds can be enlarged to dominate the corresponding recorded
quantities as well.
\end{proof}

Let \(T>0\) be the reinforcement-window length. Along the ideal
saddle vector field, the HJI identity gives
\(\dot V^*(\bx(t))=-\ell_p^*(t)\). Integration over
\([t-T,t]\) therefore yields the finite-window
Bellman--Isaacs identity
\begin{equation}
V^*(\bx(t))
-
V^*(\bx(t-T))
+
\int_{t-T}^{t}\ell_p^*(\tau)d\tau
=
0.
\label{eq:integral_HJI}
\end{equation}


\begin{remark}
Equation~\eqref{eq:integral_HJI} is exact along the ideal saddle vector field. If, during
learning, the measured transition
\(\bx(t-T)\mapsto\bx(t)\) is instead generated by
\(\dot{\bx}
=f(\bx)+g(\bx)(\hat{\bu}+\ba)+k(\bx)\bd\),
define the finite-window trajectory mismatch
\(\eta_{\rm tr}(t):=
\int_{t-T}^{t}
\nabla V^{*\top}(\bx(\tau))g(\bx(\tau))
\big[(\hat{\bu}(\tau)+\ba(\tau))
-(\bu^*(\tau)+\ba^*(\tau))\big]\,d\tau
+\int_{t-T}^{t}
\nabla V^{*\top}(\bx(\tau))k(\bx(\tau))
\big[\bd(\tau)-\bd^*(\tau)\big]\,d\tau\).
Here, \(\eta_{\rm tr}(t)\in\mathbb{R}\) denotes the accumulated
value-function-rate mismatch over \([t-T,t]\) induced by the difference
between the measured closed-loop vector field
\(f(\bx)+g(\bx)(\hat{\bu}+\ba)+k(\bx)\bd\)
and the ideal saddle vector field
\(f(\bx)+g(\bx)(\bu^*+\ba^*)+k(\bx)\bd^*\).
Accordingly, the measured-trajectory counterpart of \eqref{eq:integral_HJI} is
\(V^*(\bx(t))-V^*(\bx(t-T))
+\int_{t-T}^{t}\ell_p^*(\tau)\,d\tau-\eta_{\rm tr}(t)=0\).
Hence, the integral-data definitions in \eqref{eq:online_delta}--\eqref{eq:normalized_errors} remain unchanged,
while \(\eta_{\rm tr}(t)\) enters the subsequent critic regression only
through the aggregate integral residual in \eqref{eq:epsilon_H_definition}. In particular,
\(\eta_{\rm tr}(t)=0\) whenever
\(\hat{\bu}(\tau)+\ba(\tau)
=\bu^*(\tau)+\ba^*(\tau)\) and
\(\bd(\tau)=\bd^*(\tau)\) for all
\(\tau\in[t-T,t]\), in which case (30) is recovered exactly.
\end{remark}


Unlike a pointwise HJI equation, the finite-window formulation contains
the unknown drift only through the measured state transition
\(x(t-T)\mapsto x(t)\); no direct evaluation of \(f(x)\) is required in
the critic regression~\cite{Vrabie2009,Modares2014}. Any departure of
the measured trajectory from the ideal saddle vector field is captured
by the finite-window mismatch \(\eta_{\rm tr}(t)\) introduced above.

Define the online integral data
\begin{align}
\Delta\bphi(t)
&:=
\bphi(\bx(t))-\bphi(\bx(t-T)),
\label{eq:Dphi}\\
\rho(t)
&:=
\int_{t-T}^{t}\hat\ell_p(\tau)d\tau,
\label{eq:rho}\\
\delta(t)
&:=
\hat{\bW}^{\top}(t)\Delta\bphi(t)+\rho(t).
\label{eq:online_delta}
\end{align}
The difference between \(\hat\ell_p\) and \(\ell_p^*\), together with
the finite-window trajectory mismatch \(\eta_{\rm tr}(t)\) introduced
above, is not discarded; both appear explicitly as bounded residual
components below. At sampling instants \(t_i\), store the tuples
\begin{equation}
\mathcal D
=
\left\{
\Delta\bphi_i,\rho_i
\right\}_{i=1}^{k},
\qquad
\Delta\bphi_i:=\Delta\bphi(t_i),
\qquad
\rho_i:=\rho(t_i),
\label{eq:memory}
\end{equation}
and recompute their residuals using the current weight:
\begin{equation}
\delta_i(t)
=
\hat{\bW}^{\top}(t)\Delta\bphi_i+\rho_i,
\qquad
i=1,\ldots,k.
\label{eq:recorded_delta}
\end{equation}

To prevent large state transitions from dominating the update, define
\begin{align}
m(t)
&:=
1+\Delta\bphi\T(t)\Delta\bphi(t),
&
\bomega(t)
&:=
\frac{\Delta\bphi(t)}{m(t)},
\label{eq:online_normalization}\\
m_i
&:=
1+\Delta\bphi_i\T\Delta\bphi_i,
&
\bomega_i
&:=
\frac{\Delta\bphi_i}{m_i},
\label{eq:recorded_normalization}\\
e(t)
&:=
\frac{\delta(t)}{m(t)},
&
e_i(t)
&:=
\frac{\delta_i(t)}{m_i}.
\label{eq:normalized_errors}
\end{align}
For every vector \(\bm z\),
\[
\frac{\norm{\bm z}}{1+\norm{\bm z}^2}\le\frac12;
\]
hence the normalized regressors satisfy the uniform bounds
\begin{equation}
\norm{\bomega(t)}\le\frac12,
\qquad
\norm{\bomega_i}\le\frac12.
\label{eq:normalized_regressor_bound}
\end{equation}

The trajectory mismatch is also uniformly bounded on the compact
operating domain. Indeed, define
\(c_{Vg}:=\max_{\bx\in\Omega^r}
\norm{\bm g^\top(\bx)\nabla V^*(\bx)}\),
\(c_{Vk}:=\max_{\bx\in\Omega^r}
\norm{\bm k^\top(\bx)\nabla V^*(\bx)}\), and
\(u_m:=(\sum_{j=1}^{m}\bar u_j^2)^{1/2}\).
Compactness of \(\Omega^r\), \(V^*\in C^1(\Omega^r)\), and continuity
of \(\bm g\) and \(\bm k\) imply \(c_{Vg},c_{Vk}<\infty\), whereas
the saturated policies give
\(\norm{\hat{\bu}}\le u_m\) and \(\norm{\bu^*}\le u_m\).
Moreover, the saddle policies imply
\(\norm{\ba^*}\le\bar a^*:=
\norm{\mathbf T^{-1}}c_{Vg}/(2\gamma_a^2)\) and
\(\norm{\bd^*}\le\bar d^*:=
\norm{\mathbf S^{-1}}c_{Vk}/(2\gamma_d^2)\).
Together with \(\norm{\ba}\le\bar a\) and
\(\norm{\bd}\le\bar d\), the definition of
\(\eta_{\rm tr}(t)\) yields
\(\abs{\eta_{\rm tr}(t)}
\le
T[c_{Vg}(2u_m+\bar a+\bar a^*)
+c_{Vk}(\bar d+\bar d^*)]
=:\bar\eta_{\rm tr}<\infty\).

\begin{proposition}[Integral residual decomposition]
\label{prop:residual}
Suppose that the running-cost mismatch is uniformly bounded on
\(\Omega^r\), i.e.,
\begin{equation}
\left|
\hat\ell_p(t)-\ell_p^*(t)
\right|
\le
\bar\ell_\Delta
\label{eq:running_cost_mismatch_bound}
\end{equation}
for a finite \(\bar\ell_\Delta\ge0\). Then the normalized online and
recorded residuals admit the decompositions
\begin{align}
e(t)
&=
-\widetilde{\bW}\T\bomega(t)
+
\bar\epsilon_H(t),
\label{eq:online_decomposition}\\
e_i(t)
&=
-\widetilde{\bW}\T\bomega_i
+
\bar\epsilon_{Hi},
\qquad i=1,\ldots,k,
\label{eq:recorded_decomposition}
\end{align}
where
\begin{equation}
\bar\epsilon_H(t)
=
\frac{
-\Delta\varepsilon(t)
+\int_{t-T}^{t}
\bigl[\hat\ell_p(\tau)-\ell_p^*(\tau)\bigr]\,d\tau
+\eta_{\rm tr}(t)
}{
m(t)
}.
\label{eq:epsilon_H_definition}
\end{equation}
\end{proposition}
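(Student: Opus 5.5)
The plan is to start from the measured-trajectory counterpart of \eqref{eq:integral_HJI} given in the remark following it, namely
\[
V^*(\bx(t))-V^*(\bx(t-T))+\int_{t-T}^{t}\ell_p^*(\tau)\,d\tau-\eta_{\rm tr}(t)=0 .
\]
Into this identity I will substitute the critic representation \eqref{eq:value_approx}, $V^*=\bW^{*\top}\bphi+\varepsilon$. Writing $\Delta\varepsilon(t):=\varepsilon(\bx(t))-\varepsilon(\bx(t-T))$ and using \eqref{eq:Dphi}, this gives the key exact relation
\[
\bW^{*\top}\Delta\bphi(t)=-\Delta\varepsilon(t)-\int_{t-T}^{t}\ell_p^*(\tau)\,d\tau+\eta_{\rm tr}(t).
\]
This step is valid as long as the window trajectory stays in $\Omega^r$, where \eqref{eq:value_approx} and \eqref{eq:approximation_bounds} hold; I will state this as a standing hypothesis.

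Next, in the online residual \eqref{eq:online_delta} I will write $\hat{\bW}=\bW^*-\widetilde{\bW}$ using \eqref{eq:weight_error}, and use $\rho(t)=\int_{t-T}^t\hat\ell_p\,d\tau$ from \eqref{eq:rho}. Inserting the key relation, the ideal weight term cancels, and I obtain
\[
\delta(t)=-\widetilde{\bW}^{\top}\Delta\bphi(t)-\Delta\varepsilon(t)+\int_{t-T}^{t}\bigl[\hat\ell_p(\tau)-\ell_p^*(\tau)\bigr]d\tau+\eta_{\rm tr}(t).
\]
Dividing by $m(t)$ and using \eqref{eq:online_normalization}--\eqref{eq:normalized_errors} yields \eqref{eq:online_decomposition}, with $\bar\epsilon_H(t)$ exactly as in \eqref{eq:epsilon_H_definition}.

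For the recorded residuals I will repeat the argument at the fixed instants $t_i$, but with the current weight $\hat{\bW}(t)$. Since $\bW^*$ is constant, I write $\delta_i(t)=-\widetilde{\bW}^{\top}(t)\Delta\bphi_i+\bigl(\bW^{*\top}\Delta\bphi_i+\rho_i\bigr)$. The bracket is time-invariant, and by the key relation evaluated at $t_i$ it equals the numerator of \eqref{eq:epsilon_H_definition} at $t=t_i$. Here $\hat\ell_p$ is the recorded running cost generated by the weights active during the stored window. Dividing by $m_i$ then gives \eqref{eq:recorded_decomposition} with $\bar\epsilon_{Hi}:=\bar\epsilon_H(t_i)$.

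Finally, I will note that the hypothesis \eqref{eq:running_cost_mismatch_bound} is not needed for the algebraic identity itself. It serves to guarantee that each residual is finite and uniformly bounded:
\[
|\bar\epsilon_H|\le 2\bar\varepsilon+T\bar\ell_\Delta+\bar\eta_{\rm tr},
\]
using $m\ge1$, the bound on $\varepsilon$ in \eqref{eq:approximation_bounds}, and the bound $\bar\eta_{\rm tr}$ derived above. The same bound holds for each $\bar\epsilon_{Hi}$.

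The main obstacle is bookkeeping rather than analysis. First, the sign convention of $\eta_{\rm tr}$ must match the measured identity; I will check that $\eta_{\rm tr}$ enters with a plus sign after moving it to the right-hand side. Second, the recorded residual must be handled carefully: $\rho_i$ was built with past weights while $\widetilde{\bW}$ is evaluated at the current time. The constancy of $\bW^*$ is precisely what makes this split legitimate.
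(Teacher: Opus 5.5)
Your proposal is correct and follows essentially the same route as the paper: substitute \(V^*=\bW^{*\top}\bphi+\varepsilon\) into the measured-trajectory identity, write \(\hat{\bW}=\bW^*-\widetilde{\bW}\), add \(\rho(t)\), divide by \(m(t)\), and bound the residual by \(2\bar\varepsilon+T\bar\ell_\Delta+\bar\eta_{\rm tr}\) using \(m\ge1\). Your explicit treatment of the recorded windows, where constancy of \(\bW^*\) pairs the frozen \(\rho_i\) with the current \(\widetilde{\bW}(t)\) and gives \(\bar\epsilon_{Hi}=\bar\epsilon_H(t_i)\), spells out what the paper compresses into ``applying the same calculation to each stored window.''
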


\begin{proof}
Substitution of \eqref{eq:value_approx} into the measured-trajectory
counterpart of \eqref{eq:integral_HJI} introduced above gives
\begin{equation}
\bW^{*\top}\Delta\bphi(t)
+
\int_{t-T}^{t}\ell_p^*(\tau)d\tau
+
\Delta\varepsilon(t)
-
\eta_{\rm tr}(t)
=
0,
\label{eq:ideal_weight_integral}
\end{equation}
where
\begin{equation}
\Delta\varepsilon(t)
:=
\varepsilon(\bx(t))-\varepsilon(\bx(t-T)).
\label{eq:Delta_epsilon}
\end{equation}
Since
\(\hat{\bW}=\bW^*-\widetilde{\bW}\), adding
\(\rho(t)=\int_{t-T}^{t}\hat\ell_p(\tau)d\tau\) to
\(\hat{\bW}^{\top}\Delta\bphi(t)\) and using
\eqref{eq:ideal_weight_integral} yield
\begin{equation}
\begin{aligned}
\delta(t)
={}&
-\widetilde{\bW}\T\Delta\bphi(t)
-\Delta\varepsilon(t)
\\[-1mm]
&+
\int_{t-T}^{t}
\left[
\hat\ell_p(\tau)-\ell_p^*(\tau)
\right]d\tau
+\eta_{\rm tr}(t).
\end{aligned}
\label{eq:delta_decomposition_raw}
\end{equation}
Dividing by \(m(t)\) gives
\eqref{eq:online_decomposition}, with
\begin{equation}
\bar\epsilon_H(t)
=
\frac{
-\Delta\varepsilon(t)
+
\displaystyle\int_{t-T}^{t}
[\hat\ell_p(\tau)-\ell_p^*(\tau)]d\tau
+
\eta_{\rm tr}(t)
}{
m(t)
}.
\label{eq:epsilon_H_definition_proof}
\end{equation}
Because \(m(t)\ge1\),
\(\abs{\Delta\varepsilon(t)}\le2\bar\varepsilon\),
\eqref{eq:running_cost_mismatch_bound} holds, and
\(\abs{\eta_{\rm tr}(t)}\le\bar\eta_{\rm tr}\),
\begin{equation}
|\bar\epsilon_H(t)|
\le
2\bar\varepsilon
+
T\bar\ell_\Delta
+
\bar\eta_{\rm tr}
=
\epsilon_{Hm}.
\label{eq:online_residual_bound_proof}
\end{equation}
Applying the same calculation to each stored window
\([t_i-T,t_i]\) proves
\eqref{eq:recorded_decomposition} and the corresponding bound on
\(\bar\epsilon_{Hi}\).
\end{proof}

The constant \(\bar\ell_\Delta\) is finite because
\(\Omega^r\) is compact, \(\nabla\bphi\) is bounded, and the policy maps
\eqref{eq:u_hat}--\eqref{eq:d_hat} are continuous on the relevant
state--weight domain. 


Define the normalized replay matrix
\begin{equation}
\bar\bOmega
:=
\left[
\bomega_1\ \cdots\ \bomega_k
\right]
\in\R^{N\times k}.
\label{eq:Omega_bar}
\end{equation}

\begin{definition}[Sufficiently rich integral data]
\label{def:rich}
A replay stack \(\mathcal D\) with \(k\ge N\) is sufficiently rich if
\begin{equation}
\rank(\bar\bOmega)=N,
\qquad
\sigma_{\min}(\bar\bOmega)
\ge
\underline\sigma_\Omega
>
0.
\label{eq:richness}
\end{equation}
\end{definition}

Condition \eqref{eq:richness} means that the recorded integral
transitions span every critic-weight direction. A bounded probing
signal may be used during an initial data-collection interval; once
\eqref{eq:richness} is attained, the probing signal may be removed
because the stored regressors retain the informative directions
\cite{Modares2014}. Let
\begin{equation}
T_E
:=
\inf\left\{
t\ge0:
\sigma_{\min}(\bar\bOmega(t))
\ge\underline\sigma_\Omega
\right\}
\label{eq:data_informativity_time}
\end{equation}
denote the first data-informativity time.

\begin{remark}[Reinforcement-window trade-off]
\label{rem:window_tradeoff}
From \eqref{eq:online_residual_bound_proof},
\(\epsilon_{Hm}=2\bar\epsilon+T\bar\ell_\Delta\), so
\(T\downarrow\Rightarrow\epsilon_{Hm}\downarrow\) through the accumulated
policy-mismatch term. However,
\(\Delta\bphi_i=\bphi(x(t_i))-\bphi(x(t_i-T))
=T\dot{\bphi}(x(t_i))+o(T)\) as \(T\to0\), and hence
\(\bomega_i=\Delta\bphi_i/(1+\|\Delta\bphi_i\|^2)=O(T)\).
Therefore, an excessively small \(T\) may yield
\(\sigma_{\min}(\bar\bOmega)\downarrow0\), even when
\(\operatorname{rank}(\bar\bOmega)=N\), which weakens numerical
informativity and increases the critic gain required for a prescribed
learning horizon. Thus \(T\) trades residual accuracy against replay-stack
conditioning.
\end{remark}

For fixed trajectory data, the online and replay residuals depend
affinely on the current weight:
\begin{equation}
\begin{aligned}
e(\hat{\bW},t)
&=
\hat{\bW}^{\top}\bomega(t)
+
\frac{\rho(t)}{m(t)},
&
\nabla_{\hat{\bW}}e
&=
\bomega(t),
\\
e_i(\hat{\bW},t)
&=
\hat{\bW}^{\top}\bomega_i
+
\frac{\rho_i}{m_i},
&
\nabla_{\hat{\bW}}e_i
&=
\bomega_i.
\end{aligned}
\label{eq:residual_weight_dependence}
\end{equation}

\begin{remark}[Intrepretation for~\eqref{eq:residual_weight_dependence} ]
For completeness, the affine dependence of the integral residual on
the current critic weight follows directly from the saddle-point
stationarity conditions. Let the approximate Hamiltonian associated
with the learned policies be
\(\hat{\mathcal H}
=\hat r+
\hat{\bW}^{\top}\nabla\bm\phi(\bx)
[\bm f(\bx)+\bm g(\bx)(\hat{\bu}+\hat{\ba})
+\bm k(\bx)\hat{\bd}]\).
Although
\(\hat{\bu}=\hat{\bu}(\hat{\bW})\),
\(\hat{\ba}=\hat{\ba}(\hat{\bW})\), and
\(\hat{\bd}=\hat{\bd}(\hat{\bW})\), differentiation with respect to
\(\hat{\bW}\) gives the direct term
\(\nabla\bm\phi(\bx)
[\bm f+\bm g(\hat{\bu}+\hat{\ba})+\bm k\hat{\bd}]\)
plus terms proportional to
\(\partial\hat{\bu}/\partial\hat{\bW}\),
\(\partial\hat{\ba}/\partial\hat{\bW}\), and
\(\partial\hat{\bd}/\partial\hat{\bW}\).
The latter vanish because
\(\partial U(\hat{\bu})/\partial\hat{\bu}
+\bm g^{\top}\nabla\bm\phi^{\top}\hat{\bW}=\bm0\),
\(-2\gamma_a^2\bTmat\hat{\ba}
+\bm g^{\top}\nabla\bm\phi^{\top}\hat{\bW}=\bm0\), and
\(-2\gamma_d^2\bS\hat{\bd}
+\bm k^{\top}\nabla\bm\phi^{\top}\hat{\bW}=\bm0\).
Hence
\(\partial\hat{\mathcal H}/\partial\hat{\bW}
=\nabla\bm\phi(\bx)
[\bm f+\bm g(\hat{\bu}+\hat{\ba})+\bm k\hat{\bd}]
=\nabla\bm\phi(\bx)\dot{\bx}\).
Over the fixed integration window,
\(\int_{t-T}^{t}
\nabla\bm\phi(\bx(\tau))\dot{\bx}(\tau)\,d\tau
=\bm\phi(\bx(t))-\bm\phi(\bx(t-T))\).
Therefore, with the normalization already introduced in the integral
residual, the current regressor is fixed during instantaneous
weight differentiation and
\(\nabla_{\hat{\bW}}e(\hat{\bW},t)=\bomega(t)\).
Likewise, each stored tuple
\((\bomega_i,\rho_i,m_i)\) is frozen after insertion into the replay
stack, so
\(\nabla_{\hat{\bW}}e_i(\hat{\bW},t)=\bomega_i\).
  
\end{remark}

A conventional quadratic Bellman-error loss generates a residual
term proportional only to \(e\). To obtain the two-power structure
required for fixed- and predefined-time convergence, the critic loss
must instead generate one residual power below one and another above
one \cite{Polyakov2012,Kokkolakis2025}. Accordingly, select

\begin{equation}
\begin{aligned}
\mathcal L_c(\hat{\bW},t)
:={}&
\sum_{q\in\{\gamma_1,\gamma_2\}}
\frac{1}{q+1}
\Bigg[
|e(\hat{\bW},t)|^{q+1}
\\[-1mm]
&\qquad\qquad+
\sum_{i=1}^{k}
|e_i(\hat{\bW},t)|^{q+1}
\Bigg],
\end{aligned}
\label{eq:critic_objective}
\end{equation}
where \(0<\gamma_1<1<\gamma_2\), so that
\(\gamma_1+1\in(1,2)\) and \(\gamma_2+1>2\).

the first term dominates the local error regime, whereas the second
prevents the settling-time bound from increasing with a large initial
weight error.

Using
\eqref{eq:residual_weight_dependence}, and the chain rule gives
\begin{equation}
\begin{aligned}
\nabla_{\hat{\bW}}\mathcal L_c
={}&
\bomega(t)
\left[
\spow{e(t)}{\gamma_1}
+
\spow{e(t)}{\gamma_2}
\right]
\\[-1mm]
&+
\sum_{i=1}^{k}
\bomega_i
\left[
\spow{e_i(t)}{\gamma_1}
+
\spow{e_i(t)}{\gamma_2}
\right].
\end{aligned}
\label{eq:critic_objective_gradient}
\end{equation}
The critic is updated through the continuous-time negative-gradient
flow
\begin{equation}
\dot{\hat{\bW}}
=
-\alpha_c
\nabla_{\hat{\bW}}\mathcal L_c,
\qquad
\alpha_c>0,
\label{eq:critic_gradient_flow}
\end{equation}
or, equivalently,
\begin{equation}
\begin{aligned}
\dot{\hat{\bW}}
={}&
-\alpha_c\bomega(t)
\left[
\spow{e(t)}{\gamma_1}
+
\spow{e(t)}{\gamma_2}
\right]
\\[-1mm]
&-
\alpha_c
\sum_{i=1}^{k}
\bomega_i
\left[
\spow{e_i(t)}{\gamma_1}
+
\spow{e_i(t)}{\gamma_2}
\right].
\end{aligned}
\label{eq:critic_update}
\end{equation}

The current residual supplies new trajectory information, while the
replay terms preserve previously excited directions. Under
\eqref{eq:richness}, the latter convert the projected errors
\(\widetilde{\bW}^{\top}\bomega_i\) into coercive bounds on the full
weight error. In view of
Theorem~\ref{thm:normalized_predefined_time}, once this two-power
dissipation structure is established, the resulting critic
convergence-time scale is governed by the learning gain
\(\alpha_c\). The next section therefore derives an explicit
selection of \(\alpha_c\) from the assigned critic-learning horizon
\(T_W\).

\section{Predefined-Time Integral Critic}
\label{sec:critic}

The following exact two-power integral provides the basis for the
deadline-dependent critic-gain design. In contrast with conventional
fixed-time estimates obtained by splitting the Lyapunov domain into
low- and high-energy regions, the result below evaluates the complete
settling-time integral and therefore yields a less conservative
deadline parameterization.

\begin{lemma}[Exact two-power settling-time integral]
\label{lem:Gamma_comparison}
Let \(Y:[t_0,\infty)\rightarrow\mathbb R_{\geq0}\) be absolutely
continuous and satisfy, almost everywhere,
\begin{equation}
\dot Y
\le
-aY^{\frac{\gamma_1+1}{2}}
-bY^{\frac{\gamma_2+1}{2}},
\qquad
a,b>0,\quad
0<\gamma_1<1<\gamma_2.
\label{eq:two_power_exact}
\end{equation}
Then \(Y\) reaches zero in a time satisfying
\begin{equation}
\begin{aligned}
T(Y(t_0))
&\le
\int_{0}^{Y(t_0)}
\frac{ds}{
a s^{\frac{\gamma_1+1}{2}}
+
b s^{\frac{\gamma_2+1}{2}}
}
\\
&<
\frac{
2
\Gamma_{\!E}\!\left(
\dfrac{\gamma_2-1}{\gamma_2-\gamma_1}
\right)
\Gamma_{\!E}\!\left(
\dfrac{1-\gamma_1}{\gamma_2-\gamma_1}
\right)
}{
(\gamma_2-\gamma_1)
a^{\frac{\gamma_2-1}{\gamma_2-\gamma_1}}
b^{\frac{1-\gamma_1}{\gamma_2-\gamma_1}}
},
\end{aligned}
\label{eq:Gamma_time_bound}
\end{equation}
where the last bound is independent of \(Y(t_0)\). Moreover,
\eqref{eq:Gamma_time_bound} is the \(r=1\) two-power specialization of
the normalized predefined-time comparison in
Theorem~\ref{thm:normalized_predefined_time}.
\end{lemma}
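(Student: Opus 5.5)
The proof runs by separation of variables, followed by an explicit evaluation of the resulting improper integral through the Beta function. Write \(h(s):=a s^{\frac{\gamma_1+1}{2}}+b s^{\frac{\gamma_2+1}{2}}\), which is positive and continuous on \((0,\infty)\). Let \(T^\star:=\inf\{t\ge t_0:Y(t)=0\}\). Once \(Y\) hits zero it stays there, because \(\dot Y\le0\) and \(Y\ge0\). On \([t_0,T^\star)\) we have \(Y>0\), and the map \(G(y):=\int_0^{y}ds/h(s)\) is \(C^1\) on \((0,\infty)\). Since \(\frac{\gamma_1+1}{2}<1\), we have \(1/h(s)\le a^{-1}s^{-\frac{\gamma_1+1}{2}}\), which is integrable near zero, so \(G\) is finite and continuous at \(0\). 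The chain rule for absolutely continuous functions composed with a locally Lipschitz \(G\) on compact subintervals gives \(\tfrac{d}{dt}G(Y(t))=\dot Y/h(Y)\le-1\) almost everywhere. Integrating yields \(G(Y(t))\le G(Y(t_0))-(t-t_0)\). Because \(G\ge0\), this forces \(T^\star-t_0\le G(Y(t_0))\), which is the first inequality in \eqref{eq:Gamma_time_bound}.

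For the uniform bound, I would use \(G(Y(t_0))<\int_0^\infty ds/h(s)\). The inequality is strict since the integrand is positive on \((Y(t_0),\infty)\). The integral converges at infinity because \(\frac{\gamma_2+1}{2}>1\). To evaluate it, factor
\[
h(s)=a s^{\frac{\gamma_1+1}{2}}\bigl(1+(b/a)s^{\frac{\gamma_2-\gamma_1}{2}}\bigr)
\]
and substitute \(u=(b/a)s^{\frac{\gamma_2-\gamma_1}{2}}\), so that \(s=(a u/b)^{\frac{2}{\gamma_2-\gamma_1}}\). The integral then becomes a constant multiple of \(\int_0^\infty u^{c-1}(1+u)^{-1}du\) with \(c=\frac{1-\gamma_1}{\gamma_2-\gamma_1}\in(0,1)\). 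By the Notation section, this equals \(\mathrm B(c,1-c)=\Gamma_{\!E}(c)\Gamma_{\!E}(1-c)\), and \(1-c=\frac{\gamma_2-1}{\gamma_2-\gamma_1}\).

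The main obstacle is the exponent bookkeeping for the powers of \(a\) and \(b\). From \(ds=\frac{2}{\gamma_2-\gamma_1}(a/b)^{\frac{2}{\gamma_2-\gamma_1}}u^{\frac{2}{\gamma_2-\gamma_1}-1}du\), I would collect the prefactor
\[
\tfrac{2}{(\gamma_2-\gamma_1)a}\,(a/b)^{\frac{2}{\gamma_2-\gamma_1}\left(1-\frac{\gamma_1+1}{2}\right)}=\tfrac{2}{(\gamma_2-\gamma_1)a}(a/b)^{c}.
\]
This simplifies to \(\frac{2}{(\gamma_2-\gamma_1)}a^{-(1-c)}b^{-c}\), which gives exactly \eqref{eq:Gamma_time_bound}.

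For the final claim, set \(r=1\), \(p=\frac{\gamma_1+1}{2}\), \(q=\frac{\gamma_2+1}{2}\), and \(\alpha=a\), \(\beta=b\) with the factor \(\gamma_{p,q,1}/T_s\) absorbed. Then \(q-p=\frac{\gamma_2-\gamma_1}{2}\), so \(\frac{1-p}{q-p}=c\) and \(\frac{q-1}{q-p}=1-c\). Since \(\Gamma_{\!E}(1)=1\), \eqref{eq:gamma_normalization} coincides with the bound just computed. Hence choosing \(a,b\) scaled by \(\gamma_{p,q,1}/T_s\) reproduces the settling bound \(T_s\) of Theorem~\ref{thm:normalized_predefined_time}, which is the asserted specialization.
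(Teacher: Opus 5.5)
Your proposal is correct and follows essentially the paper's route: separation of variables gives the first bound, and the uniform bound is the \(r=1\) case of the Beta-function evaluation in Appendix~\ref{App1}. You carry that evaluation out directly (same substitution, same exponents \(c=\frac{1-\gamma_1}{\gamma_2-\gamma_1}\) and \(1-c=\frac{\gamma_2-1}{\gamma_2-\gamma_1}\)) instead of citing it, and you also supply details the paper leaves implicit, namely the chain rule for \(G\circ Y\) on compact subintervals, integrability at \(0\) and \(\infty\), and strictness of the second inequality.
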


\begin{proof}
Whenever \(Y>0\), \eqref{eq:two_power_exact} gives
\(dt\leq-dY/
\bigl(aY^{\frac{\gamma_1+1}{2}}
+bY^{\frac{\gamma_2+1}{2}}\bigr)\), which yields the first bound in
\eqref{eq:Gamma_time_bound}. For the uniform bound, use the normalized
integral identity established in
Theorem~\ref{thm:normalized_predefined_time} with
\(r=1\),
\(p=(\gamma_1+1)/2\),
\(q=(\gamma_2+1)/2\),
\(\alpha=a\), and \(\beta=b\).
Since
\(q-p=(\gamma_2-\gamma_1)/2\),
\((1-p)/(q-p)=(1-\gamma_1)/(\gamma_2-\gamma_1)\), and
\((q-1)/(q-p)=(\gamma_2-1)/(\gamma_2-\gamma_1)\),
\eqref{eq:gamma_normalization} reduces exactly to the last term in
\eqref{eq:Gamma_time_bound}. Hence the bound is finite and independent
of \(Y(t_0)\).
\end{proof}

\begin{remark}
\label{rem:two_power_quantitative} 
Lemma~\ref{lem:Gamma_comparison} addresses the converse of the question posed in Theorem~\ref{thm:normalized_predefined_time} specifically, determining the maximum convergence time for the dynamics described in~\eqref{lem:Gamma_comparison} and subsequently links this to Theorem~\ref{thm:normalized_predefined_time} to establish the relationship between this desired time and the coefficients of the differential equation; this is referred to as the inverse design criterion.
Let
\(A_\gamma:=(\gamma_2-1)/(\gamma_2-\gamma_1)\) and
\(B_\gamma:=(1-\gamma_1)/(\gamma_2-\gamma_1)\), so that
\(A_\gamma,B_\gamma\in(0,1)\) and \(A_\gamma+B_\gamma=1\).
Lemma~\ref{lem:Gamma_comparison} can then be written as
\(T(Y(t_0))<C_\gamma a^{-A_\gamma}b^{-B_\gamma}\), where
\(C_\gamma:=
2\Gamma_{\!E}(A_\gamma)\Gamma_{\!E}(B_\gamma)/
(\gamma_2-\gamma_1)\).
Hence \(a\) and \(b\) quantify, respectively, the low- and high-power
contributions to the uniform settling-time bound, while their balance
occurs at
\(Y_\times=(a/b)^{2/(\gamma_2-\gamma_1)}\).
In particular, the common scaling
\((a,b)\mapsto(ca,cb)\) gives \(T\mapsto T/c\).
For the critic dynamics below, with
\(\mu=(\gamma_1+1)/2\), \(\nu=(\gamma_2+1)/2\),
\(a=a_0(2\alpha_c)^\mu\), and
\(b=b_0(2\alpha_c)^\nu\), one has
\(\mu A_\gamma+\nu B_\gamma=1\); consequently,
\(T_c\) scales as \(1/\alpha_c\). This identity is the quantitative
basis for selecting \(\alpha_c\) directly from the assigned critic
horizon \(T_W\).
\end{remark}

\begin{lemma}[Replay-stack coercivity]
\label{lem:replay_coercivity}
Suppose that the integral replay stack satisfies
\eqref{eq:richness}. Then
\begin{align}
\sum_{i=1}^{k}
\left|
\widetilde{\bW}^{\top}\bomega_i
\right|^{\gamma_1+1}
&\ge
\sigma_{\min}^{\gamma_1+1}(\bar\bOmega)
\left\|
\widetilde{\bW}
\right\|^{\gamma_1+1},
\label{eq:coercive_low}
\\
\sum_{i=1}^{k}
\left|
\widetilde{\bW}^{\top}\bomega_i
\right|^{\gamma_2+1}
&\ge
k^{\frac{1-\gamma_2}{2}}
\sigma_{\min}^{\gamma_2+1}(\bar\bOmega)
\left\|
\widetilde{\bW}
\right\|^{\gamma_2+1}.
\label{eq:coercive_high}
\end{align}
In particular, both lower-bound coefficients are strictly positive.
\end{lemma}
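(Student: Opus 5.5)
The plan is to reduce both inequalities to a comparison between \(\ell_p\)-norms of the single vector
\(\bm v:=\bar\bOmega^{\top}\widetilde{\bW}\in\R^{k}\). Its components are exactly \(v_i=\widetilde{\bW}^{\top}\bomega_i\), so the two sums in the statement are \(\|\bm v\|_{\gamma_1+1}^{\gamma_1+1}\) and \(\|\bm v\|_{\gamma_2+1}^{\gamma_2+1}\), respectively.

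\textbf{Step 1: Euclidean coercivity.} First I use \eqref{eq:richness}. Since \(\bar\bOmega\in\R^{N\times k}\) with \(k\ge N\) has full row rank \(N\), the Gram matrix \(\bar\bOmega\bar\bOmega^{\top}\in\R^{N\times N}\) is positive definite. Its smallest eigenvalue is \(\sigma_{\min}^2(\bar\bOmega)\), where \(\sigma_{\min}\) is understood as the smallest of the \(N\) singular values. Hence
\[
\|\bm v\|_2^2=\widetilde{\bW}^{\top}\bar\bOmega\bar\bOmega^{\top}\widetilde{\bW}\ge\sigma_{\min}^2(\bar\bOmega)\|\widetilde{\bW}\|^2 ,
\]
so \(\|\bm v\|_2\ge\sigma_{\min}(\bar\bOmega)\|\widetilde{\bW}\|\). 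The one point needing care is this singular-value convention. With \(k\ge N\), the \(N\) nonzero singular values of \(\bar\bOmega\) are the relevant ones, and they coincide with the square roots of the eigenvalues of \(\bar\bOmega\bar\bOmega^{\top}\). I will state this explicitly.

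\textbf{Step 2: low-power bound.} Set \(p:=\gamma_1+1\in(1,2)\). The monotonicity of \(\ell_p\)-norms in \(p\), namely \(\|\bm v\|_p\ge\|\bm v\|_2\) for \(p\le 2\), gives
\[
\sum_i|v_i|^{p}\ge\|\bm v\|_2^{p}\ge\sigma_{\min}^{p}(\bar\bOmega)\|\widetilde{\bW}\|^{p},
\]
which is \eqref{eq:coercive_low}. If a direct argument is preferred to citing norm monotonicity, normalize \(\bm v\) to unit \(\ell_2\)-norm. Then \(|v_i|\le 1\) for every \(i\), so \(|v_i|^{p}\ge|v_i|^2\), and summing gives \(\sum_i|v_i|^p\ge 1\).

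\textbf{Step 3: high-power bound.} Set \(p:=\gamma_2+1>2\). Hölder's inequality with exponents \(p/2\) and \(p/(p-2)\), applied to \(\sum_i|v_i|^2\cdot 1\), gives
\[
\|\bm v\|_2\le k^{\frac12-\frac1p}\|\bm v\|_p .
\]
Raising this to the power \(p\) yields \(\|\bm v\|_p^p\ge k^{1-\frac p2}\|\bm v\|_2^p\). Since \(1-\tfrac p2=\tfrac{1-\gamma_2}{2}\), combining with Step 1 gives
\[
\sum_i|v_i|^{\gamma_2+1}\ge k^{\frac{1-\gamma_2}{2}}\sigma_{\min}^{\gamma_2+1}(\bar\bOmega)\|\widetilde{\bW}\|^{\gamma_2+1},
\]
which is \eqref{eq:coercive_high}.

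\textbf{Positivity of the constants.} Both coefficients are strictly positive, since \(\sigma_{\min}(\bar\bOmega)\ge\underline\sigma_\Omega>0\) by \eqref{eq:richness} and \(k\ge 1\).

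I expect no genuine obstacle in this argument. The only place demanding precision is the exponent bookkeeping in the Hölder step, which I have checked above.
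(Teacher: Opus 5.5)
Your proposal is correct and follows the paper's proof essentially step for step. Both arguments write the two sums as $\|\bar\bOmega^{\top}\widetilde{\bW}\|_{\gamma_1+1}^{\gamma_1+1}$ and $\|\bar\bOmega^{\top}\widetilde{\bW}\|_{\gamma_2+1}^{\gamma_2+1}$, invoke the full-row-rank bound $\|\bar\bOmega^{\top}\widetilde{\bW}\|_2\ge\sigma_{\min}(\bar\bOmega)\|\widetilde{\bW}\|$, and then use $\ell_p$-norm monotonicity for $p<2$ and the equivalence $\|\cdot\|_2\le k^{1/2-1/p}\|\cdot\|_p$ for $p>2$; your Gram-matrix justification of the singular-value convention and your H\"older derivation of the constant are details the paper leaves implicit.
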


\begin{proof}
By the definition of the replay matrix,
\[
\bar\bOmega^{\top}\widetilde{\bW}
=
\col\left\{
\widetilde{\bW}^{\top}\bomega_1,\ldots,
\widetilde{\bW}^{\top}\bomega_k
\right\}.
\]
The full-row-rank condition gives
\[
\left\|
\bar\bOmega^{\top}\widetilde{\bW}
\right\|_2
\ge
\sigma_{\min}(\bar\bOmega)
\left\|
\widetilde{\bW}
\right\|.
\]
Since \(0<\gamma_1<1\), one has
\(1<\gamma_1+1<2\) and hence
\[
\left\|
\bar\bOmega^{\top}\widetilde{\bW}
\right\|_{\gamma_1+1}
\ge
\left\|
\bar\bOmega^{\top}\widetilde{\bW}
\right\|_2.
\]
Therefore,
\[
\begin{aligned}
\sum_{i=1}^{k}
\left|
\widetilde{\bW}^{\top}\bomega_i
\right|^{\gamma_1+1}
&=
\left\|
\bar\bOmega^{\top}\widetilde{\bW}
\right\|_{\gamma_1+1}^{\gamma_1+1}
\\
&\ge
\sigma_{\min}^{\gamma_1+1}(\bar\bOmega)
\left\|
\widetilde{\bW}
\right\|^{\gamma_1+1},
\end{aligned}
\]
which proves \eqref{eq:coercive_low}. Since
\(\gamma_2+1>2\), finite-dimensional norm equivalence gives
\[
\left\|
\bar\bOmega^{\top}\widetilde{\bW}
\right\|_{\gamma_2+1}
\ge
k^{\frac{1}{\gamma_2+1}-\frac12}
\left\|
\bar\bOmega^{\top}\widetilde{\bW}
\right\|_2.
\]
Raising both sides to the power \(\gamma_2+1\) yields
\[
\begin{aligned}
\sum_{i=1}^{k}
\left|
\widetilde{\bW}^{\top}\bomega_i
\right|^{\gamma_2+1}
&\ge
k^{1-\frac{\gamma_2+1}{2}}
\sigma_{\min}^{\gamma_2+1}(\bar\bOmega)
\left\|
\widetilde{\bW}
\right\|^{\gamma_2+1}
\\
&=
k^{\frac{1-\gamma_2}{2}}
\sigma_{\min}^{\gamma_2+1}(\bar\bOmega)
\left\|
\widetilde{\bW}
\right\|^{\gamma_2+1},
\end{aligned}
\]
which proves \eqref{eq:coercive_high}.
\end{proof}


\begin{remark}[Data richness and extensions to concurrent/composite learning]
\label{rem:data_richness_extensions}
Lemma~\ref{lem:replay_coercivity} captures the same mechanism underlying parameter convergence in concurrent learning (CL), integral concurrent learning (ICL), and composite learning~\cite{ChowdharyJohnson2010,ParikhKamalapurkarDixon2019,PanYu2016,PanYu2018}. Despite using different richness conditions---full-rank history stacks in CL, finite excitation in ICL, and interval excitation in composite learning---all provide Lyapunov coercivity in the parameter-error direction. Here, the replay-stack condition plays this role by converting stored critic regressors, through Lemma~\ref{lem:replay_coercivity}, into positive lower bounds proportional to powers of \(\norm{\widetilde{\bW}}\), which yield negative critic-error dissipation. This enables critic-weight convergence in the exact case and convergence to an explicit residual set under perturbations. Hence, the replay mechanism may be replaced or augmented by ICL or composite-learning constructions whenever their corresponding finite-data richness conditions are satisfied.
\end{remark}

\begin{lemma}[Perturbed signed-power inequality]
\label{lem:perturbed_power}
For every \(\gamma>0\) and all \(z,\epsilon\in\mathbb R\),
\begin{equation}
z\spow{-z+\epsilon}{\gamma}
\le
-2^{-(\gamma+1)}
|z|^{\gamma+1}
+
\left(
1+2\,3^\gamma
\right)
|\epsilon|^{\gamma+1}.
\label{eq:perturbed_power}
\end{equation}
If \(\epsilon=0\), the sharper identity
\begin{equation}
z\spow{-z}{\gamma}
=
-|z|^{\gamma+1}
\label{eq:exact_signed_power_identity}
\end{equation}
holds.
\end{lemma}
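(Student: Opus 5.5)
The plan is an elementary two-case argument that compares \(|\epsilon|\) with \(|z|\). The sharper identity \eqref{eq:exact_signed_power_identity} comes first and is immediate from the definition \(\spow{s}{\gamma}=|s|^\gamma\sgn(s)\). Since \(\sgn(-z)=-\sgn(z)\), we get \(z\spow{-z}{\gamma}=-z\,|z|^\gamma\sgn(z)=-|z|^{\gamma+1}\), and this also covers \(z=0\).

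For \eqref{eq:perturbed_power}, the threshold will be \(|\epsilon|=|z|/2\).

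\emph{Case 1:} \(|\epsilon|\le|z|/2\). If \(z=0\) then \(\epsilon=0\) and both sides vanish, so assume \(z\neq0\). Then \(|-z+\epsilon|\ge|z|-|\epsilon|\ge|z|/2>0\). Moreover, \(-z+\epsilon\) has the same sign as \(-z\), because the perturbation cannot flip the sign. Hence
\[
z\spow{-z+\epsilon}{\gamma}=-|z|\,|-z+\epsilon|^\gamma\le-2^{-\gamma}|z|^{\gamma+1}\le-2^{-(\gamma+1)}|z|^{\gamma+1}.
\]
The last quantity is at most the right-hand side of \eqref{eq:perturbed_power}, since the \(\epsilon\)-term there is nonnegative.

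\emph{Case 2:} \(|z|<2|\epsilon|\). Here the left-hand side may be positive, so I bound it crudely in absolute value:
\[
z\spow{-z+\epsilon}{\gamma}\le|z|\,(|z|+|\epsilon|)^\gamma<2|\epsilon|\,(3|\epsilon|)^\gamma=2\cdot3^\gamma|\epsilon|^{\gamma+1}.
\]
To produce the required negative term, add and subtract \(2^{-(\gamma+1)}|z|^{\gamma+1}\). In this case \(2^{-(\gamma+1)}|z|^{\gamma+1}<2^{-(\gamma+1)}(2|\epsilon|)^{\gamma+1}=|\epsilon|^{\gamma+1}\). Therefore
\[
z\spow{-z+\epsilon}{\gamma}\le-2^{-(\gamma+1)}|z|^{\gamma+1}+|\epsilon|^{\gamma+1}+2\cdot3^\gamma|\epsilon|^{\gamma+1},
\]
which is exactly \eqref{eq:perturbed_power} with constant \(1+2\cdot3^\gamma\).

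The only delicate point is Case 2, where the sign information is lost. The constants \(2^{-(\gamma+1)}\) and \(1+2\cdot3^\gamma\) are fixed by the choice of the threshold \(|z|/2\) together with this add-and-subtract step. In Case 1, the key fact to state carefully is that the sign is preserved.
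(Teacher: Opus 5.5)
Your proposal is correct and follows essentially the same route as the paper: the same split at $|\epsilon|=|z|/2$, sign preservation with $|-z+\epsilon|\ge|z|/2$ in the first case, and in the second case the crude bound $2\cdot 3^\gamma|\epsilon|^{\gamma+1}$ combined with $2^{-(\gamma+1)}|z|^{\gamma+1}<|\epsilon|^{\gamma+1}$. You also verify the exact identity and the degenerate case $z=0$ explicitly, which the paper leaves implicit.
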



\begin{proof}
If \(|\epsilon|\le|z|/2\), then \(-z+\epsilon\) has the sign opposite to \(z\) and \(|-z+\epsilon|\ge|z|/2\). Hence, \(z\spow{-z+\epsilon}{\gamma}\le-2^{-\gamma}|z|^{\gamma+1}\le-2^{-(\gamma+1)}|z|^{\gamma+1}+\left(1+2\,3^\gamma\right)|\epsilon|^{\gamma+1}\). If instead \(|\epsilon|>|z|/2\), then \(|z|<2|\epsilon|\) and \(|-z+\epsilon|<3|\epsilon|\), implying \(z\spow{-z+\epsilon}{\gamma}\le|z|\,|-z+\epsilon|^\gamma<2\,3^\gamma|\epsilon|^{\gamma+1}\). Moreover, \(-2^{-(\gamma+1)}|z|^{\gamma+1}\ge-|\epsilon|^{\gamma+1}\). Therefore, the right-hand side of \eqref{eq:perturbed_power} is not smaller than \(2\,3^\gamma|\epsilon|^{\gamma+1}\), which proves the claim.
\end{proof}

Consider the critic Lyapunov function
\begin{equation}
V_c(\widetilde{\bW})
:=
\frac{1}{2\alpha_c}
\widetilde{\bW}^{\top}\widetilde{\bW},
\qquad
\alpha_c>0.
\label{eq:critic_Lyapunov}
\end{equation}
The scaling \(1/\alpha_c\) cancels the explicit update gain when
differentiating \(V_c\), whereas the same gain reappears through
\[
\left\|
\widetilde{\bW}
\right\|^2
=
2\alpha_cV_c
\]
in the two Lyapunov dissipation powers.

\begin{theorem}[Predefined-time integral critic]
\label{thm:critic}
Suppose that Proposition~\ref{prop:residual} and
Definition~\ref{def:rich} hold at the data-informativity instant
\(T_E\), and suppose that the full-rank replay stack
\(\bar\bOmega\) is retained for every \(t\ge T_E\). Let \(T_W>0\)
be an assigned critic-learning horizon measured from \(T_E\),
independently of the initial critic-weight error.

\emph{(i) Exact residual case:}
If \(\bar\epsilon_H(t)\equiv0\) and
\(\bar\epsilon_{Hi}=0\), \(i=1,\ldots,k\), and
\begin{equation}
\alpha_c
\ge
\frac{
\Gamma\!\left(
\dfrac{\gamma_2-1}{\gamma_2-\gamma_1}
\right)
\Gamma\!\left(
\dfrac{1-\gamma_1}{\gamma_2-\gamma_1}
\right)
}{
T_W(\gamma_2-\gamma_1)
\sigma_{\min}^{2}(\bar\bOmega)
k^{
\frac{(1-\gamma_1)(1-\gamma_2)}
{2(\gamma_2-\gamma_1)}
}
}.
\label{eq:alpha_exact}
\end{equation}
then
\begin{equation}
\widetilde{\bW}(t)=\bm0,
\qquad
t\ge T_E+T_W.
\label{eq:exact_weight_deadline}
\end{equation}

\emph{(ii) Nonzero residual case:}
Suppose that, for a known constant \(\epsilon_{Hm}>0\),
\begin{equation}
|\bar\epsilon_H(t)|
\le
\epsilon_{Hm},
\qquad
|\bar\epsilon_{Hi}|
\le
\epsilon_{Hm},
\quad i=1,\ldots,k,
\label{eq:residual_uniform_bound}
\end{equation}
and select \(\theta_W\in(0,1)\). If
\begin{equation}
\begin{aligned}
\alpha_c
\ge{}&
\frac{
\Gamma\!\left(
\dfrac{\gamma_2-1}{\gamma_2-\gamma_1}
\right)
\Gamma\!\left(
\dfrac{1-\gamma_1}{\gamma_2-\gamma_1}
\right)
}{
(1-\theta_W)T_W(\gamma_2-\gamma_1)
}
\\[-1mm]
&\times
\frac{1}{
\left[
2^{-(\gamma_1+1)}
\sigma_{\min}^{\gamma_1+1}(\bar\bOmega)
\right]^{
\frac{\gamma_2-1}{\gamma_2-\gamma_1}
}
}
\\[-1mm]
&\times
\frac{1}{
\left[
2^{-(\gamma_2+1)}
k^{\frac{1-\gamma_2}{2}}
\sigma_{\min}^{\gamma_2+1}(\bar\bOmega)
\right]^{
\frac{1-\gamma_1}{\gamma_2-\gamma_1}
}
}.
\end{aligned}
\label{eq:alpha_practical}
\end{equation}
then \(V_c(t)\) enters, no later than \(T_E+T_W\), the forward
invariant set
\begin{equation}
\mathcal B_W
:=
\left\{
V_c\in\mathbb R_{\geq0}:
V_c\le\bar V_c
\right\},
\label{eq:B_W}
\end{equation}
where \(\bar V_c\) is the unique nonnegative solution of
\begin{equation}
\begin{aligned}
&
2^{-(\gamma_1+1)}
\sigma_{\min}^{\gamma_1+1}(\bar\bOmega)
(2\alpha_c)^{\frac{\gamma_1+1}{2}}
\bar V_c^{\frac{\gamma_1+1}{2}}
\\
&\quad+
2^{-(\gamma_2+1)}
k^{\frac{1-\gamma_2}{2}}
\sigma_{\min}^{\gamma_2+1}(\bar\bOmega)
(2\alpha_c)^{\frac{\gamma_2+1}{2}}
\bar V_c^{\frac{\gamma_2+1}{2}}
\\
&=
\frac{k+1}{\theta_W}
\left[
\left(
1+2\,3^{\gamma_1}
\right)
\epsilon_{Hm}^{\gamma_1+1}
+
\left(
1+2\,3^{\gamma_2}
\right)
\epsilon_{Hm}^{\gamma_2+1}
\right].
\end{aligned}
\label{eq:Vbar_root}
\end{equation}
Consequently,
\begin{equation}
\left\|
\widetilde{\bW}(t)
\right\|
\le
\sqrt{2\alpha_c\bar V_c},
\qquad
t\ge T_E+T_W.
\label{eq:weight_ball}
\end{equation}
\end{theorem}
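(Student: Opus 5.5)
The plan is to differentiate the critic Lyapunov function \eqref{eq:critic_Lyapunov} along the update law \eqref{eq:critic_update}, rewrite each summand in the form handled by Lemma~\ref{lem:perturbed_power}, and use Lemma~\ref{lem:replay_coercivity} to turn the replay terms into a two-power dissipation in \(V_c\). Lemma~\ref{lem:Gamma_comparison} then converts that dissipation into a deadline.

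\textbf{Step 1: derivative of \(V_c\).} Since \(\bW^*\) is constant, \(\dot{\widetilde{\bW}}=-\dot{\hat{\bW}}\). Hence
\[
\dot V_c=\sum_{\gamma\in\{\gamma_1,\gamma_2\}}\Big[z\spow{e}{\gamma}+\sum_{i=1}^{k}z_i\spow{e_i}{\gamma}\Big],
\]
with \(z:=\widetilde{\bW}^\top\bomega(t)\) and \(z_i:=\widetilde{\bW}^\top\bomega_i\). The factor \(1/\alpha_c\) in \(V_c\) cancels the gain \(\alpha_c\). By Proposition~\ref{prop:residual}, \(e=-z+\bar\epsilon_H\) and \(e_i=-z_i+\bar\epsilon_{Hi}\), so every summand has the form \(z\spow{-z+\epsilon}{\gamma}\).

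\textbf{Step 2: exact residual case (i).} The identity \eqref{eq:exact_signed_power_identity} gives \(\dot V_c=-\sum_\gamma\big(|z|^{\gamma+1}+\sum_i|z_i|^{\gamma+1}\big)\). I drop the online contribution, which is nonpositive; it is time-varying but harmless. Lemma~\ref{lem:replay_coercivity} and \(\|\widetilde{\bW}\|^2=2\alpha_cV_c\) then give, for \(t\ge T_E\),
\[
\dot V_c\le -aV_c^{\mu}-bV_c^{\nu},\qquad \mu=\tfrac{\gamma_1+1}{2},\ \nu=\tfrac{\gamma_2+1}{2},
\]
where \(a=\sigma_{\min}^{\gamma_1+1}(2\alpha_c)^{\mu}\) and \(b=k^{(1-\gamma_2)/2}\sigma_{\min}^{\gamma_2+1}(2\alpha_c)^{\nu}\).

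Next I apply Lemma~\ref{lem:Gamma_comparison} with \(A_\gamma,B_\gamma\) as in Remark~\ref{rem:two_power_quantitative}. The exponent bookkeeping is:
\begin{itemize}
\item \(\mu A_\gamma+\nu B_\gamma=1\), so the \(\alpha_c\)-dependence collapses to \((2\alpha_c)^{-1}\), and this \(2\) cancels the \(2\) in the Gamma bound;
\item \((\gamma_1+1)A_\gamma+(\gamma_2+1)B_\gamma=2\), so \(\sigma_{\min}\) appears as \(\sigma_{\min}^{-2}\);
\item the power of \(k\) is \(\tfrac{1-\gamma_2}{2}B_\gamma=\tfrac{(1-\gamma_1)(1-\gamma_2)}{2(\gamma_2-\gamma_1)}\).
\end{itemize}
The settling time measured from \(T_E\) is therefore strictly less than the right-hand side of \eqref{eq:alpha_exact} multiplied by \(T_W/\alpha_c\). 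Condition \eqref{eq:alpha_exact} makes this at most \(T_W\), and \(\widetilde{\bW}\equiv0\) afterwards because \(\dot V_c\le0\).

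\textbf{Step 3: nonzero residual case (ii).} I apply \eqref{eq:perturbed_power} to each of the \(2(k+1)\) summands and use \eqref{eq:residual_uniform_bound}. I discard the negative online parts and apply Lemma~\ref{lem:replay_coercivity} to the replay parts. This yields
\[
\dot V_c\le-(\tilde aV_c^{\mu}+\tilde bV_c^{\nu})+\Delta,
\]
where \(\tilde a,\tilde b\) are \(a,b\) multiplied by \(2^{-(\gamma_1+1)}\) and \(2^{-(\gamma_2+1)}\), respectively, and \(\Delta\) is the bracketed residual quantity on the right of \eqref{eq:Vbar_root} times \(k+1\).

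I then split the dissipation as \(-(1-\theta_W)(\cdot)-\theta_W(\cdot)+\Delta\). The left side of \eqref{eq:Vbar_root} is continuous and strictly increasing from \(0\) to \(\infty\), so \(\bar V_c\) exists and is unique. By its definition, \(\theta_W(\tilde aV_c^\mu+\tilde bV_c^\nu)\ge\Delta\) whenever \(V_c\ge\bar V_c\). So outside \(\mathcal B_W\) we have \(\dot V_c\le-(1-\theta_W)(\tilde aV_c^\mu+\tilde bV_c^\nu)\), and on the boundary \(\dot V_c<0\), which gives forward invariance of \(\mathcal B_W\).

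The reaching time is bounded by \(\int_{\bar V_c}^{V_c(T_E)}\) of the reciprocal dissipation. This integral is at most the full integral from \(0\), so Lemma~\ref{lem:Gamma_comparison} applies with \((1-\theta_W)\tilde a,(1-\theta_W)\tilde b\). The same exponent arithmetic reproduces exactly the threshold \eqref{eq:alpha_practical}, here keeping \(\sigma_{\min}\) inside the bracketed coefficients rather than simplifying to \(\sigma_{\min}^{-2}\). Finally, \eqref{eq:weight_ball} follows from \(\|\widetilde{\bW}\|^2=2\alpha_cV_c\).

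\textbf{Main obstacle.} I expect the hardest part to be the exponent bookkeeping in the Gamma-bound conversion, in particular confirming that the \(2\alpha_c\) and \(\sigma_{\min}\), \(k\) powers combine exactly into \eqref{eq:alpha_exact} and \eqref{eq:alpha_practical}. A secondary point is justifying the comparison on the time-varying, a.e.-differentiable \(V_c\) once the stack is frozen after \(T_E\).
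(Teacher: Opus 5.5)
Your proposal is correct and follows essentially the same route as the paper. You differentiate $V_c$ along the update, apply the exact signed-power identity or Lemma~\ref{lem:perturbed_power}, use replay coercivity with $\|\widetilde{\bW}\|^2=2\alpha_cV_c$, absorb the residual through the $\theta_W$-split that defines $\bar V_c$, and convert the two-power dissipation into a deadline via Lemma~\ref{lem:Gamma_comparison} using $\mu A_\gamma+\nu B_\gamma=1$ and $(\gamma_1+1)A_\gamma+(\gamma_2+1)B_\gamma=2$. Your exponent bookkeeping checks out, and bounding the reaching time by the integral from $\bar V_c$ to $V_c(T_E)$ is a slightly cleaner justification than the paper's direct appeal to the lemma.
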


\begin{proof}
Since \(\bW^*\) is constant,
\[
\dot{\widetilde{\bW}}
=
-\dot{\hat{\bW}}.
\]
Differentiating \eqref{eq:critic_Lyapunov} and substituting
\eqref{eq:critic_update} yield
\begin{equation}
\begin{aligned}
\dot V_c
={}&
\widetilde{\bW}^{\top}\bomega(t)
\left[
\spow{e(t)}{\gamma_1}
+
\spow{e(t)}{\gamma_2}
\right]
\\[-1mm]
&+
\sum_{i=1}^{k}
\widetilde{\bW}^{\top}\bomega_i
\left[
\spow{e_i(t)}{\gamma_1}
+
\spow{e_i(t)}{\gamma_2}
\right].
\end{aligned}
\label{eq:Vc_dot_raw}
\end{equation}

In the exact-residual case,
\[
e(t)
=
-\widetilde{\bW}^{\top}\bomega(t),
\qquad
e_i(t)
=
-\widetilde{\bW}^{\top}\bomega_i.
\]
Using \eqref{eq:exact_signed_power_identity} in
\eqref{eq:Vc_dot_raw} gives
\begin{equation}
\begin{aligned}
\dot V_c
={}&
-
\left|
\widetilde{\bW}^{\top}\bomega(t)
\right|^{\gamma_1+1}
-
\left|
\widetilde{\bW}^{\top}\bomega(t)
\right|^{\gamma_2+1}
\\
&-
\sum_{i=1}^{k}
\left|
\widetilde{\bW}^{\top}\bomega_i
\right|^{\gamma_1+1}
-
\sum_{i=1}^{k}
\left|
\widetilde{\bW}^{\top}\bomega_i
\right|^{\gamma_2+1}.
\end{aligned}
\label{eq:Vc_exact_raw}
\end{equation}
Discarding the nonpositive online terms and applying
Lemma~\ref{lem:replay_coercivity} yield
\begin{equation}
\begin{aligned}
\dot V_c
\le{}&
-
\sigma_{\min}^{\gamma_1+1}(\bar\bOmega)
\left\|
\widetilde{\bW}
\right\|^{\gamma_1+1}
\\
&-
k^{\frac{1-\gamma_2}{2}}
\sigma_{\min}^{\gamma_2+1}(\bar\bOmega)
\left\|
\widetilde{\bW}
\right\|^{\gamma_2+1}.
\end{aligned}
\label{eq:Vc_exact_weight}
\end{equation}
Using
\[
\left\|
\widetilde{\bW}
\right\|^2
=
2\alpha_cV_c
\]
in \eqref{eq:Vc_exact_weight} gives
\begin{equation}
\begin{aligned}
\dot V_c
\le{}&
-
\sigma_{\min}^{\gamma_1+1}(\bar\bOmega)
(2\alpha_c)^{\frac{\gamma_1+1}{2}}
V_c^{\frac{\gamma_1+1}{2}}
\\
&-
k^{\frac{1-\gamma_2}{2}}
\sigma_{\min}^{\gamma_2+1}(\bar\bOmega)
(2\alpha_c)^{\frac{\gamma_2+1}{2}}
V_c^{\frac{\gamma_2+1}{2}}.
\end{aligned}
\label{eq:Vc_exact_comparison}
\end{equation}
Lemma~\ref{lem:Gamma_comparison} therefore gives
Using
\begin{equation}
\begin{aligned}
&
\frac{\gamma_1+1}{2}
\frac{\gamma_2-1}{\gamma_2-\gamma_1}
+
\frac{\gamma_2+1}{2}
\frac{1-\gamma_1}{\gamma_2-\gamma_1}
=1,
\\
&
(\gamma_1+1)
\frac{\gamma_2-1}{\gamma_2-\gamma_1}
+
(\gamma_2+1)
\frac{1-\gamma_1}{\gamma_2-\gamma_1}
=2,
\end{aligned}
\label{eq:exact_exponent_identities}
\end{equation}
the settling-time estimate reduces to
\begin{equation}
\begin{aligned}
T_c
<
&\frac{
\Gamma\!\left(
\dfrac{\gamma_2-1}{\gamma_2-\gamma_1}
\right)
\Gamma\!\left(
\dfrac{1-\gamma_1}{\gamma_2-\gamma_1}
\right)
}{
\alpha_c(\gamma_2-\gamma_1)
\sigma_{\min}^{2}(\bar\bOmega)
}
\\[-1mm]
&\times
k^{
\frac{
(1-\gamma_1)(\gamma_2-1)
}{
2(\gamma_2-\gamma_1)
}
}.
\end{aligned}
\label{eq:Tc_exact_before}
\end{equation}
reduce the denominator product in
\eqref{eq:Tc_exact_before} to
\begin{equation}
2\alpha_c
\sigma_{\min}^{2}(\bar\bOmega)
k^{
\frac{(1-\gamma_1)(1-\gamma_2)}
{2(\gamma_2-\gamma_1)}
}.
\label{eq:exact_product_direct}
\end{equation}

Hence,
\begin{equation}
T_c
<
\frac{
\Gamma_{\!E}\!\left(
\dfrac{\gamma_2-1}{\gamma_2-\gamma_1}
\right)
\Gamma_{\!E}\!\left(
\dfrac{1-\gamma_1}{\gamma_2-\gamma_1}
\right)
}{
\alpha_c(\gamma_2-\gamma_1)
\sigma_{\min}^{2}(\bar\bOmega)
k^{
\frac{(1-\gamma_1)(1-\gamma_2)}
{2(\gamma_2-\gamma_1)}
}
}.
\label{eq:Tc_exact_simplified}
\end{equation}
To enforce the assigned critic-learning deadline \(T_c\leq T_W\),
it is sufficient to require
\[
\frac{
\Gamma_{\!E}\!\left(
\dfrac{\gamma_2-1}{\gamma_2-\gamma_1}
\right)
\Gamma_{\!E}\!\left(
\dfrac{1-\gamma_1}{\gamma_2-\gamma_1}
\right)
}{
\alpha_c(\gamma_2-\gamma_1)
\sigma_{\min}^{2}(\bar\bOmega)
k^{
\frac{(1-\gamma_1)(1-\gamma_2)}
{2(\gamma_2-\gamma_1)}
}
}
\leq T_W.
\]
Solving the above inequality for \(\alpha_c\) gives
\begin{equation}
\alpha_c
\geq
\frac{
\Gamma_{\!E}\!\left(
\dfrac{\gamma_2-1}{\gamma_2-\gamma_1}
\right)
\Gamma_{\!E}\!\left(
\dfrac{1-\gamma_1}{\gamma_2-\gamma_1}
\right)
}{
T_W(\gamma_2-\gamma_1)
\sigma_{\min}^{2}(\bar\bOmega)
k^{
\frac{(1-\gamma_1)(1-\gamma_2)}
{2(\gamma_2-\gamma_1)}
}
}.
\label{eq:alpha_exact}
\end{equation}
Therefore, under \eqref{eq:alpha_exact},
\(T_c<T_W\), and hence
\(\widetilde{\bW}(t)=0\) for all
\(t\geq T_E+T_W\), which proves
\eqref{eq:exact_weight_deadline}.

For the nonzero-residual case, Proposition~\ref{prop:residual} gives
\[
e(t)
=
-\widetilde{\bW}^{\top}\bomega(t)
+
\bar\epsilon_H(t),
\qquad
e_i(t)
=
-\widetilde{\bW}^{\top}\bomega_i
+
\bar\epsilon_{Hi}.
\]
Applying Lemma~\ref{lem:perturbed_power} with
\(\gamma=\gamma_1\) and \(\gamma=\gamma_2\) to every term in
\eqref{eq:Vc_dot_raw}, using
\eqref{eq:residual_uniform_bound}, and discarding the remaining
nonpositive online dissipation give
\begin{equation}
\begin{aligned}
\dot V_c
\le{}&
-
2^{-(\gamma_1+1)}
\sum_{i=1}^{k}
\left|
\widetilde{\bW}^{\top}\bomega_i
\right|^{\gamma_1+1}
\\
&-
2^{-(\gamma_2+1)}
\sum_{i=1}^{k}
\left|
\widetilde{\bW}^{\top}\bomega_i
\right|^{\gamma_2+1}
\\
&+
(k+1)
\left[
\left(
1+2\,3^{\gamma_1}
\right)
\epsilon_{Hm}^{\gamma_1+1}
+
\left(
1+2\,3^{\gamma_2}
\right)
\epsilon_{Hm}^{\gamma_2+1}
\right].
\end{aligned}
\label{eq:Vc_perturbed_before_coercivity}
\end{equation}
Lemma~\ref{lem:replay_coercivity} and
\(\|\widetilde{\bW}\|^2=2\alpha_cV_c\) now imply
\begin{equation}
\begin{aligned}
\dot V_c
\le{}&
-
2^{-(\gamma_1+1)}
\sigma_{\min}^{\gamma_1+1}(\bar\bOmega)
(2\alpha_c)^{\frac{\gamma_1+1}{2}}
V_c^{\frac{\gamma_1+1}{2}}
\\
&-
2^{-(\gamma_2+1)}
k^{\frac{1-\gamma_2}{2}}
\sigma_{\min}^{\gamma_2+1}(\bar\bOmega)
(2\alpha_c)^{\frac{\gamma_2+1}{2}}
V_c^{\frac{\gamma_2+1}{2}}
\\
&+
(k+1)
\left[
\left(
1+2\,3^{\gamma_1}
\right)
\epsilon_{Hm}^{\gamma_1+1}
+
\left(
1+2\,3^{\gamma_2}
\right)
\epsilon_{Hm}^{\gamma_2+1}
\right].
\end{aligned}
\label{eq:Vc_practical}
\end{equation}




The left-hand side of \eqref{eq:Vbar_root}, viewed as a function of
\(\bar V_c\), is continuous, strictly increasing on
\(\mathbb R_{\geq0}\), vanishes at zero, and diverges as
\(\bar V_c\rightarrow\infty\). Hence, \eqref{eq:Vbar_root} admits a
unique nonnegative solution. Equivalently, at \(V_c=\bar V_c\),
\eqref{eq:Vbar_root} gives the exact balance
\begin{equation}
\begin{aligned}
&(k+1)
\left[
\left(
1+2\,3^{\gamma_1}
\right)
\epsilon_{Hm}^{\gamma_1+1}
+
\left(
1+2\,3^{\gamma_2}
\right)
\epsilon_{Hm}^{\gamma_2+1}
\right]
\\
&=
\theta_W
\Big[
2^{-(\gamma_1+1)}
\sigma_{\min}^{\gamma_1+1}(\bar\bOmega)
(2\alpha_c)^{\frac{\gamma_1+1}{2}}
\bar V_c^{\frac{\gamma_1+1}{2}}
\\[-1mm]
&\hspace{19mm}
+
2^{-(\gamma_2+1)}
k^{\frac{1-\gamma_2}{2}}
\sigma_{\min}^{\gamma_2+1}(\bar\bOmega)
(2\alpha_c)^{\frac{\gamma_2+1}{2}}
\bar V_c^{\frac{\gamma_2+1}{2}}
\Big].
\end{aligned}
\label{eq:Vbar_balance_direct}
\end{equation}

For every \(V_c>\bar V_c\), the strict monotonicity of the two power
terms and \eqref{eq:Vbar_balance_direct} imply
\begin{equation}
\begin{aligned}
&(k+1)
\left[
\left(
1+2\,3^{\gamma_1}
\right)
\epsilon_{Hm}^{\gamma_1+1}
+
\left(
1+2\,3^{\gamma_2}
\right)
\epsilon_{Hm}^{\gamma_2+1}
\right]
\\
&<
\theta_W
\Big[
2^{-(\gamma_1+1)}
\sigma_{\min}^{\gamma_1+1}(\bar\bOmega)
(2\alpha_c)^{\frac{\gamma_1+1}{2}}
V_c^{\frac{\gamma_1+1}{2}}
\\[-1mm]
&\hspace{19mm}
+
2^{-(\gamma_2+1)}
k^{\frac{1-\gamma_2}{2}}
\sigma_{\min}^{\gamma_2+1}(\bar\bOmega)
(2\alpha_c)^{\frac{\gamma_2+1}{2}}
V_c^{\frac{\gamma_2+1}{2}}
\Big].
\end{aligned}
\label{eq:residual_absorption_direct}
\end{equation}

Substituting \eqref{eq:residual_absorption_direct} into
\eqref{eq:Vc_practical} yields, outside \(\mathcal B_W\),
\begin{equation}
\begin{aligned}
\dot V_c
\le{}&
-(1-\theta_W)
2^{-(\gamma_1+1)}
\sigma_{\min}^{\gamma_1+1}(\bar\bOmega)
(2\alpha_c)^{\frac{\gamma_1+1}{2}}
V_c^{\frac{\gamma_1+1}{2}}
\\
&-
(1-\theta_W)
2^{-(\gamma_2+1)}
k^{\frac{1-\gamma_2}{2}}
\sigma_{\min}^{\gamma_2+1}(\bar\bOmega)
(2\alpha_c)^{\frac{\gamma_2+1}{2}}
V_c^{\frac{\gamma_2+1}{2}}.
\end{aligned}
\label{eq:Vc_outside}
\end{equation}

From ~\eqref{eq:Vc_outside} and by Lemma~\ref{lem:Gamma_comparison} we obtain

\begin{equation}
\begin{aligned}
T_{c,\mathcal B_W}
<&
\frac{4}{
(1-\theta_W)
(\gamma_2-\gamma_1)
\alpha_c
\sigma_{\min}^{2}(\bar\bOmega)
}
\\[-1mm]
&\times
\Gamma\!\left(
\frac{\gamma_2-1}{\gamma_2-\gamma_1}
\right)
\Gamma\!\left(
\frac{1-\gamma_1}{\gamma_2-\gamma_1}
\right)
\\[-1mm]
&\times
k^{
\frac{
(1-\gamma_1)(\gamma_2-1)
}{
2(\gamma_2-\gamma_1)
}
},
\end{aligned}
\label{eq:Tc_practical}
\end{equation}



where \(T_{c,\mathcal B_W}\) denotes the critic entrance time into
\(\mathcal B_W\), i.e.,
\(T_{c,\mathcal B_W}:=
\inf\{\tau\geq0:V_c(t_0+\tau)\leq\bar V_c\}\).
To enforce \(T_{c,\mathcal B_W}\leq T_W\),
which gives
\(\alpha_c\geq
4\Gamma_{\!E}\!\left(\frac{\gamma_2-1}{\gamma_2-\gamma_1}\right)
\Gamma_{\!E}\!\left(\frac{1-\gamma_1}{\gamma_2-\gamma_1}\right)
k^{\frac{(1-\gamma_1)(\gamma_2-1)}
{2(\gamma_2-\gamma_1)}}/
\bigl[(1-\theta_W)T_W(\gamma_2-\gamma_1)
\sigma_{\min}^2(\bar\bOmega)\bigr]\).
Using
\(A_\gamma:=(\gamma_2-1)/(\gamma_2-\gamma_1)\) and
\(B_\gamma:=(1-\gamma_1)/(\gamma_2-\gamma_1)\), together with
\((\gamma_1+1)A_\gamma+(\gamma_2+1)B_\gamma=2\) and
\(-\frac{1-\gamma_2}{2}B_\gamma
=\frac{(1-\gamma_1)(\gamma_2-1)}
{2(\gamma_2-\gamma_1)}\), one has
\(4\sigma_{\min}^{-2}(\bar\bOmega)
k^{\frac{(1-\gamma_1)(\gamma_2-1)}
{2(\gamma_2-\gamma_1)}}
=
[2^{-(\gamma_1+1)}
\sigma_{\min}^{\gamma_1+1}(\bar\bOmega)]^{-A_\gamma}
[2^{-(\gamma_2+1)}
k^{\frac{1-\gamma_2}{2}}
\sigma_{\min}^{\gamma_2+1}(\bar\bOmega)]^{-B_\gamma}\).
Hence, the required critic gain can be written equivalently as
\begin{equation}
\begin{aligned}
\alpha_c
\geq{}&
\frac{
\Gamma_{\!E}\!\left(
\dfrac{\gamma_2-1}{\gamma_2-\gamma_1}
\right)
\Gamma_{\!E}\!\left(
\dfrac{1-\gamma_1}{\gamma_2-\gamma_1}
\right)
}{
(1-\theta_W)T_W(\gamma_2-\gamma_1)
}
\\
&\times
\frac{1}{
\left[
2^{-(\gamma_1+1)}
\sigma_{\min}^{\gamma_1+1}(\bar\bOmega)
\right]^{
\frac{\gamma_2-1}{\gamma_2-\gamma_1}
}
}
\\
&\times
\frac{1}{
\left[
2^{-(\gamma_2+1)}
k^{\frac{1-\gamma_2}{2}}
\sigma_{\min}^{\gamma_2+1}(\bar\bOmega)
\right]^{
\frac{1-\gamma_1}{\gamma_2-\gamma_1}
}
}.
\end{aligned}
\label{eq:alpha_practical_derived}
\end{equation}
Thus, \eqref{eq:alpha_practical_derived} guarantees
\(T_{c,\mathcal B_W}\leq T_W\).

Using the first identity in
\eqref{eq:exact_exponent_identities}, the product of the
\((2\alpha_c)\)-powers in \eqref{eq:Tc_practical} equals
\(2\alpha_c\). Therefore, condition
\eqref{eq:alpha_practical} gives
\[
T_{c,\mathcal B_W}
\le
T_W.
\]

At \(V_c=\bar V_c\), equations
\eqref{eq:Vc_practical} and \eqref{eq:Vbar_root} yield
\begin{equation}
\begin{aligned}
\dot V_c
\le
-\frac{1-\theta_W}{\theta_W}
(k+1)
\Big[
&
\left(
1+2\,3^{\gamma_1}
\right)
\epsilon_{Hm}^{\gamma_1+1}
\\[-1mm]
&+
\left(
1+2\,3^{\gamma_2}
\right)
\epsilon_{Hm}^{\gamma_2+1}
\Big]
\le0.
\end{aligned}
\label{eq:boundary_invariance}
\end{equation}
Hence, \(\mathcal B_W\) is forward invariant. Finally,
\eqref{eq:critic_Lyapunov} gives
\[
\left\|
\widetilde{\bW}
\right\|^2
=
2\alpha_cV_c
\le
2\alpha_c\bar V_c,
\]
which proves \eqref{eq:weight_ball}.
\end{proof}


\begin{remark}[Interpretation of the assigned learning time]
The assigned horizon \(T_W\) is measured from the data-informativity instant \(T_E\). Thus, the critic reaches the origin in the exact case, or the residual set \(B_W\) in the perturbed case, no later than \(T_E+T_W\). The quantity \(T_E\) is associated with the finite data-excitation stage rather than with the subsequent critic-convergence dynamics. In practice, the designer may deliberately impose a finite probing signal, for example \(u_{\rm ex}(t)=\sum_{j=1}^{N_e}A_j\sin(\omega_jt+\varphi_j)\) with distinct frequencies \(\omega_i\neq\omega_j\), over a prescribed data-collection interval. If this excitation protocol is designed or certified to guarantee the replay-informativity condition \(\sigma_{\min}(\Omega)\geq\underline\Omega\) within a known horizon, then \(T_E\leq\bar T_E\) is available a priori. Hence, an overall learning deadline \(T_p\) measured from initialization is guaranteed by selecting \(T_W=T_p-\bar T_E>0\). The use of a designed finite excitation interval therefore separates the data-acquisition budget from the critic-convergence budget, while continuous persistent excitation is not required after the replay stack has become informative.
\end{remark}


\section{Predefined-Time Integral Closed Loop}
\label{sec:closed_loop}
\begingroup
\setlength{\jot}{1.5pt}

The preceding section establishes a predefined-time learning bound for
the critic after the data-informativity instant \(T_E\). We now lift
this result to the coupled state--critic dynamics and derive an overall
closed-loop deadline by combining the critic-error dissipation with a
moving-window state-energy estimate. The resulting analysis explicitly
separates the learning horizon, the reinforcement-window length, and
the subsequent state-convergence horizon.

From Remark~\ref{rem:two_power_quantitative}, recall
\begin{equation}
\mu:=\frac{\gamma_1+1}{2},
\qquad
\nu:=\frac{\gamma_2+1}{2},
\qquad
\frac{1}{2}<\mu<1<\nu.
\label{eq:closed_loop_mu_nu}
\end{equation}

To analyze the coupled state--critic dynamics, consider from the
outset the mixed Lyapunov-like functional
\begin{equation}
\mathcal J(t)
:=
\mathcal I_x(t)+V_c(t),
\label{eq:J}
\end{equation}
where
\begin{equation}
\mathcal I_x(t)
:=
\int_{t-T}^{t}
V^*(\bx(\tau))\,d\tau
\label{eq:Ix}
\end{equation}

is the moving-window ideal-value energy, whereas \(V_c\) is the
critic-error Lyapunov function already defined in
\eqref{eq:critic_Lyapunov}. The proof below first derives separate
dissipation estimates for $\mathcal I_x$ and $V_c$, and then combines
them into a two-power comparison inequality for $\mathcal J$.

\emph{Policy-mismatch estimate.}
The learned controller is generated by $\nabla\hat V$, whereas the
actual attack and disturbance need not coincide exactly with their
critic-induced saddle reconstructions. Define
\begin{equation}
\Delta\ba_r(t)
:=
\ba(t)-\hat{\ba}(t),
\qquad
\Delta\bd_r(t)
:=
\bd(t)-\hat{\bd}(t).
\label{eq:adversarial_reconstruction_errors}
\end{equation}

\begin{assumption}[Bounded adversarial mismatch]
\label{ass:adversarial_mismatch}
On the certified compact domain \(\Omega^r\), the actual exogenous signals satisfy \(\norm{\Delta\ba_r(t)}\leq\bar a_r\) and \(\norm{\Delta\bd_r(t)}\leq\bar d_r\), for known finite constants \(\bar a_r,\bar d_r\geq0\).
\end{assumption}

Define the compact-set radius \(r_{\Omega^r}:=\max_{\bm\xi\in\Omega^r}\norm{\bm\xi}\) and the corresponding uniform gradient constant \(\bar p_V:=c_{\nabla V}r_{\Omega^r}\). Then, by Lemma~\ref{lem:value_quadratic_gradient}, \(\norm{\nabla V^*(\bx)}\leq c_{\nabla V}\norm{\bx}\leq\bar p_V\) for all \(\bx\in\Omega^r\). Moreover, the value-gradient decomposition yields \(\norm{\nabla V^*(\bx)-\nabla\hat V(\bx)}\leq\bar\phi_g\norm{\widetilde{\bW}}+\bar\varepsilon_g\).

Since \(\tanh(\cdot)\) is globally one-Lipschitz, the ideal and learned control policies satisfy \(\norm{\hat{\bu}-\bu^*}\leq\frac{1}{2}\norm{\bUbar}\norm{\bR^{-1}\bUbar^{-1}}\bar g\bar\phi_g\norm{\widetilde{\bW}}+\frac{1}{2}\norm{\bUbar}\norm{\bR^{-1}\bUbar^{-1}}\bar g\bar\varepsilon_g\). Similarly, \(\norm{\hat{\ba}-\ba^*}\leq\frac{\norm{\bTmat^{-1}}\bar g\bar\phi_g}{2\gamma_a^2}\norm{\widetilde{\bW}}+\frac{\norm{\bTmat^{-1}}\bar g\bar\varepsilon_g}{2\gamma_a^2}\) and \(\norm{\hat{\bd}-\bd^*}\leq\frac{\norm{\bS^{-1}}\bar k\bar\phi_g}{2\gamma_d^2}\norm{\widetilde{\bW}}+\frac{\norm{\bS^{-1}}\bar k\bar\varepsilon_g}{2\gamma_d^2}\).

Using \(\ba-\ba^*=\hat{\ba}-\ba^*+\Delta\ba_r\) and \(\bd-\bd^*=\hat{\bd}-\bd^*+\Delta\bd_r\), one obtains \(\norm{\ba-\ba^*}\leq\frac{\norm{\bTmat^{-1}}\bar g\bar\phi_g}{2\gamma_a^2}\norm{\widetilde{\bW}}+\frac{\norm{\bTmat^{-1}}\bar g\bar\varepsilon_g}{2\gamma_a^2}+\bar a_r\) and \(\norm{\bd-\bd^*}\leq\frac{\norm{\bS^{-1}}\bar k\bar\phi_g}{2\gamma_d^2}\norm{\widetilde{\bW}}+\frac{\norm{\bS^{-1}}\bar k\bar\varepsilon_g}{2\gamma_d^2}+\bar d_r\).

Define the critic-dependent and critic-independent mismatch coefficients as
\(c_W:=\bar p_V\bigl\{\bar g[\frac{1}{2}\norm{\bUbar}\norm{\bR^{-1}\bUbar^{-1}}\bar g\bar\phi_g+\frac{\norm{\bTmat^{-1}}\bar g\bar\phi_g}{2\gamma_a^2}]+\bar k\frac{\norm{\bS^{-1}}\bar k\bar\phi_g}{2\gamma_d^2}\bigr\}\)
and
\(c_0:=\bar p_V\bigl\{\bar g[\frac{1}{2}\norm{\bUbar}\norm{\bR^{-1}\bUbar^{-1}}\bar g\bar\varepsilon_g+\frac{\norm{\bTmat^{-1}}\bar g\bar\varepsilon_g}{2\gamma_a^2}+\bar a_r]+\bar k[\frac{\norm{\bS^{-1}}\bar k\bar\varepsilon_g}{2\gamma_d^2}+\bar d_r]\bigr\}\).
It follows that
\(\left|\nabla V^{*\top}\left[\bm g(\hat{\bu}-\bu^*)+\bm g(\ba-\ba^*)+\bm k(\bd-\bd^*)\right]\right|\leq c_W\norm{\widetilde{\bW}}+c_0\).



\begin{lemma}
\label{lem:optimized_young}
For every \(p>1\), \(b>0\), \(c\geq0\), and \(s\geq0\),
\begin{equation}
cs
\leq
bs^p
+
(p-1)p^{-\frac{p}{p-1}}
b^{-\frac{1}{p-1}}
c^{\frac{p}{p-1}}.
\label{eq:optimized_Young}
\end{equation}
Moreover, the second term on the right-hand side is the smallest constant for which \eqref{eq:optimized_Young} holds uniformly over \(s\geq0\).
\end{lemma}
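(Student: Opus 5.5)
The inequality is a Fenchel--Young estimate, and I would prove it by maximizing a one-variable function. The case \(c=0\) is trivial: the right-hand side is nonnegative and the left-hand side vanishes, with the constant equal to \(0\), which is clearly minimal. So assume \(c>0\) and define
\[
h(s):=cs-bs^p,\qquad s\geq0 .
\]
The inequality \eqref{eq:optimized_Young} is then equivalent to \(\sup_{s\geq0}h(s)\leq C_\ast\), where
\[
C_\ast:=(p-1)p^{-\frac{p}{p-1}}b^{-\frac{1}{p-1}}c^{\frac{p}{p-1}} .
\]
The minimality claim is equivalent to \(\sup_{s\geq0}h(s)=C_\ast\). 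Hence both parts of the lemma reduce to computing this supremum exactly.

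Since \(p>1\), \(h\) is strictly concave on \(\mathbb R_{\geq0}\), satisfies \(h(0)=0\), and tends to \(-\infty\) as \(s\to\infty\). Its derivative \(h'(s)=c-pbs^{p-1}\) vanishes only at
\[
s_\ast=\left(\frac{c}{pb}\right)^{\frac{1}{p-1}}>0,
\]
so \(s_\ast\) is the unique global maximizer on \([0,\infty)\). To evaluate \(h(s_\ast)\), use the stationarity relation \(bs_\ast^{p}=\frac{c}{p}s_\ast\). This gives \(h(s_\ast)=cs_\ast(1-1/p)\). Substituting \(s_\ast\) and collecting the powers of \(c\), \(p\) and \(b\) yields
\[
h(s_\ast)=\frac{p-1}{p}\,c^{\frac{p}{p-1}}(pb)^{-\frac{1}{p-1}}=C_\ast .
\]
This proves the inequality for every \(s\geq0\).

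For the optimality statement, suppose a constant \(C\) makes \(cs\leq bs^p+C\) hold for all \(s\geq0\). Taking \(s=s_\ast\) gives \(C\geq h(s_\ast)=C_\ast\). Equality is attained at \(s_\ast\), so \(C_\ast\) is the smallest admissible constant.

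As a cross-check, one can instead apply the weighted Young inequality \(xy\leq x^p/p+y^{p'}/p'\) with \(p'=p/(p-1)\), choosing \(x=(pb)^{1/p}s\) and \(y=c(pb)^{-1/p}\). This produces the same constant. The only step needing care is the exponent bookkeeping in the evaluation of \(h(s_\ast)\); there is no genuine obstacle.
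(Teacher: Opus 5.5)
Your proof is correct and takes the same route as the paper: you maximize \(cs-bs^p\) at its unique stationary point \(s_\ast=(c/(pb))^{1/(p-1)}\) and evaluate the maximum, which gives exactly the stated constant. You also state explicitly the minimality claim (equality is attained at \(s_\ast\)) and the case \(c=0\), both of which the paper's proof leaves implicit.
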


\begin{proof}
Define \(f(s):=cs-bs^p\). Since \(p>1\) and \(b>0\), \(f\) is concave on \(s>0\), and its maximum is attained at \(f'(s)=c-bps^{p-1}=0\), yielding \(s^\star=\left(\frac{c}{bp}\right)^{\frac{1}{p-1}}\). Substitution gives \(\max_{s\geq0}f(s)=(p-1)p^{-\frac{p}{p-1}}b^{-\frac{1}{p-1}}c^{\frac{p}{p-1}}\). Hence, \(cs-bs^p\leq\max_{s\geq0}f(s)\), which directly yields \eqref{eq:optimized_Young}.
\end{proof}


For arbitrary $b_1,b_2>0$, define
\begin{equation}
\begin{aligned}
\Delta_x
:={}&
\Delta_H
+
c_0
+
\gamma_1
(\gamma_1+1)^{
-\frac{\gamma_1+1}{\gamma_1}
}
b_1^{-\frac{1}{\gamma_1}}
\left(
\frac{c_W}{2}
\right)^{
\frac{\gamma_1+1}{\gamma_1}
}
\\
&+
\gamma_2
(\gamma_2+1)^{
-\frac{\gamma_2+1}{\gamma_2}
}
b_2^{-\frac{1}{\gamma_2}}
\left(
\frac{c_W}{2}
\right)^{
\frac{\gamma_2+1}{\gamma_2}
}.
\end{aligned}
\label{eq:Delta_x}
\end{equation}

Splitting
$c_W\norm{\widetilde{\bW}}$ into two equal parts and applying Lemma~\ref{lem:optimized_young}
with $p=\gamma_1+1$ and
$p=\gamma_2+1$ give
\begin{equation}
\begin{aligned}
&
\left|
\nabla V^{*\top}
\Bigl[
\bm g(\hat{\bu}-\bu^*)
+
\bm g(\ba-\ba^*)
+
\bm k(\bd-\bd^*)
\Bigr]
\right|
\\
&\leq
b_1\norm{\widetilde{\bW}}^{\gamma_1+1}
+
b_2\norm{\widetilde{\bW}}^{\gamma_2+1}
+
\Delta_x.
\end{aligned}
\label{eq:policy_mismatch}
\end{equation}

Along the learned closed loop,
\begin{equation}
\begin{aligned}
\dot V^*
\leq{}&
-\lambda_xc_{x1}(V^*)^{\gamma_1}
-\lambda_xc_{x2}(V^*)^\nu
\\
&+
b_1\norm{\widetilde{\bW}}^{\gamma_1+1}
+
b_2\norm{\widetilde{\bW}}^{\gamma_2+1}
+
\Delta_x.
\end{aligned}
\label{eq:Vstar_learned}
\end{equation}

\emph{Moving-window dissipation estimate.}
Since \(\Omega^r\) is compact and \(V^*(\bx)\leq\bar v\norm{\bx}^2\), define
\begin{equation}
\bar I_x
:=
T\bar v
\left(
\max_{\bm\xi\in\Omega^r}\norm{\bm\xi}
\right)^2.
\label{eq:Ibar_x}
\end{equation}

Then \(0\leq\mathcal I_x(t)\leq\bar I_x\). Moreover, by Assumption~\ref{ass:value_regularity}, \(V^*\) is continuously differentiable on the considered compact domain, while the learned closed-loop vector field is continuous on \(\Omega^r\). Hence, \(\dot V^*(\bx)=\nabla V^*(\bx)^\top \dot{\bx}\) is continuous on \(\Omega^r\). Since \(\Omega^r\) is compact, the extreme-value theorem guarantees that \(|\dot V^*|\) attains a finite maximum on \(\Omega^r\). Therefore, there exists \(L_{\dot V}>0\) such that
\begin{equation}
\left|
\dot V^*(\bx(t))
\right|
\leq
L_{\dot V},
\label{eq:Vdot_uniform_bound}
\end{equation}
for all \(\bx(t)\in\Omega^r\).

\begin{remark}[Why a derivative-limited bound is needed]
\label{rem:need_derivative_window}
The construction in \eqref{eq:J}--\eqref{eq:Ix} is intended to combine the state-side dissipation with the critic dynamics through the common Lyapunov-like functional \(\mathcal J\). Accordingly, after the estimate in \eqref{eq:Vstar_learned}, the low-order state term must ultimately be converted into a dissipation term expressed in the moving-window energy of \eqref{eq:Ix}. This requires a lower estimate of \(\int_{t-T}^{t}(V^*(\bx(\tau)))^{\gamma_1}d\tau\) in terms of \(\mathcal I_x^\mu(t)\). To isolate this issue, the following lemma is stated for a generic nonnegative scalar signal \(y\) and is subsequently applied with \(y(\tau)=V^*(\bx(\tau))\). For \(0<\gamma_1<1\), concavity of \(s\mapsto s^{\gamma_1}\) yields only an upper Jensen estimate and hence cannot provide the required lower bound. In fact, without derivative information no uniform constant \(\rho>0\) can guarantee \(\int y^{\gamma_1}\geq\rho(\int y)^\mu\). To see this quantitatively, consider a triangular pulse of half-width \(\varepsilon\) and height \(I/\varepsilon\), for which \(\int y=I\), whereas \(\int y^{\gamma_1}=\frac{2}{\gamma_1+1}I^{\gamma_1}\varepsilon^{1-\gamma_1}\to0\) as \(\varepsilon\to0\); simultaneously, its slope scales as \(I/\varepsilon^2\to\infty\). Thus, fixed window energy alone cannot prevent arbitrarily narrow concentration, whereas the available uniform derivative bound excludes precisely such behavior and makes a positive quantitative lower estimate possible. This motivates the derivative-limited window bound established next.
\end{remark}

\begin{lemma}[Derivative-limited low-order window bound]
\label{lem:low_window}
Let \(y:[t-T,t]\rightarrow\mathbb R_{\geq0}\) be absolutely continuous and satisfy \(|\dot y(\tau)|\leq L_{\dot y}\) and \(\int_{t-T}^{t}y(\tau)\,d\tau\leq\bar I\), where \(L_{\dot y},\bar I>0\). For \(0<\gamma_1<1\),
\begin{equation}
\int_{t-T}^{t}
y^{\gamma_1}(\tau)\,d\tau
\geq
\rho_{\gamma_1,T}
\left(
\int_{t-T}^{t}y(\tau)\,d\tau
\right)^\mu,
\label{eq:low_integral_bound}
\end{equation}
where
\begin{equation}
\rho_{\gamma_1,T}
:=
\min\left\{
2^{\gamma_1-1}
L_{\dot y}^{-\frac{1-\gamma_1}{2}},
\;
4^{\gamma_1-1}
T^{1-\gamma_1}
\bar I^{\frac{\gamma_1-1}{2}}
\right\}.
\label{eq:rho_low}
\end{equation}
\end{lemma}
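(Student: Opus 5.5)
The plan is a maximum-and-slope argument. Set \(I:=\int_{t-T}^{t}y(\tau)\,d\tau\le\bar I\) and \(M:=\max_{[t-T,t]}y\). The maximum is attained at some \(\tau^\star\) because \(y\) is continuous. If \(I=0\) there is nothing to prove, so assume \(I>0\), and hence \(M>0\). The estimate combines two lower bounds for \(\int y^{\gamma_1}\) and then splits into cases by comparing \(M\) with the two quantities appearing in \eqref{eq:rho_low}.

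The first bound uses no derivative information. Since \(0\le y\le M\) and \(\gamma_1<1\), we have \(y^{\gamma_1}=y\,y^{\gamma_1-1}\ge M^{\gamma_1-1}y\), so
\begin{equation*}
\int_{t-T}^{t}y^{\gamma_1}\,d\tau\ \ge\ M^{\gamma_1-1}I .
\end{equation*}
This bound is strong when \(M\) is small.

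The second bound uses the slope limit. The condition \(|\dot y|\le L_{\dot y}\) gives \(y(\tau)\ge M-L_{\dot y}|\tau-\tau^\star|\) on the window. At least one side of \(\tau^\star\) inside \([t-T,t]\) has length \(\ge T/2\). On that side we get a one-sided "tent" of height \(M\) and length \(h:=\min\{M/L_{\dot y},T/2\}\). This is exactly the behaviour that rules out the narrow-pulse counterexample of Remark~\ref{rem:need_derivative_window}.

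The first case is \(M\le L_{\dot y}T/2\), where the whole tent fits. Integrating \((M-L_{\dot y}s)^{\gamma_1}\) over \([0,M/L_{\dot y}]\) gives
\begin{equation*}
\int y^{\gamma_1}\ge \frac{M^{\gamma_1+1}}{(\gamma_1+1)L_{\dot y}}\ge \frac{M^{\gamma_1+1}}{2L_{\dot y}} .
\end{equation*}
We now compare \(M\) with \(\sqrt{L_{\dot y}I}\).
\begin{itemize}
\item If \(M\le\sqrt{L_{\dot y}I}\), the first bound gives \(\int y^{\gamma_1}\ge (L_{\dot y}I)^{(\gamma_1-1)/2}I=L_{\dot y}^{-(1-\gamma_1)/2}I^{\mu}\).
\item If \(M>\sqrt{L_{\dot y}I}\), the tent bound gives \(\int y^{\gamma_1}\ge (L_{\dot y}I)^{\mu}/(2L_{\dot y})=\tfrac12 L_{\dot y}^{-(1-\gamma_1)/2}I^{\mu}\).
\end{itemize}
Since \(2^{\gamma_1-1}\ge\tfrac12\), both subcases yield the first entry of \(\rho_{\gamma_1,T}\).

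The second case is \(M>L_{\dot y}T/2\), where the tent is truncated by the window. Here I would use a plateau estimate instead. On an interval of length comparable to \(T\) (for instance \(\min\{T/2,M/(2L_{\dot y})\}\), which is at least \(T/4\) in this regime), we have \(y\ge M/2\). This gives \(\int y^{\gamma_1}\gtrsim T(M/2)^{\gamma_1}\). Then insert \(M\ge I/T\) (the maximum dominates the mean) and \(I^{\gamma_1}=I^{\mu}I^{(\gamma_1-1)/2}\ge I^{\mu}\bar I^{(\gamma_1-1)/2}\), the last step using \(\gamma_1<1\) and \(I\le\bar I\). The result is a bound of the form \(c\,T^{1-\gamma_1}\bar I^{(\gamma_1-1)/2}I^{\mu}\), which is the second entry of \(\rho_{\gamma_1,T}\). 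Taking the minimum over the cases gives \eqref{eq:low_integral_bound}.

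The main obstacle is the constant bookkeeping in the truncated case. The plateau length and height must be chosen so that the numerical factor is at least \(4^{\gamma_1-1}\). The naive choice, length \(T/4\) and height \(M/2\), yields \(4^{-1}2^{-\gamma_1}\), which is not obviously \(\ge 4^{\gamma_1-1}\) for every \(\gamma_1\). I would first try to optimize the sub-interval: use the full length-\(T/2\) side with the linear lower profile \(M-L_{\dot y}s\) and integrate it exactly, and refine the case threshold between \(M\) and \(L_{\dot y}T\). If the stated factor still cannot be reached, I would adjust the case threshold so that the two regimes match the two entries of \eqref{eq:rho_low} with the stated powers of \(4\) and \(2\).
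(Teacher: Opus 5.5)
Your first case (\(M\le L_{\dot y}T/2\)) is correct: the one-sided tent fits inside the long side, and splitting at \(M=\sqrt{L_{\dot y}I}\) gives at least \(2^{\gamma_1-1}L_{\dot y}^{-\frac{1-\gamma_1}{2}}I^{\mu}\).

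The second case is a genuine gap, not a matter of constant bookkeeping. Your plateau estimate gives \(2^{-2-\gamma_1}T^{1-\gamma_1}I^{\gamma_1}\). Since \(2^{-2-\gamma_1}=2^{-3\gamma_1}4^{\gamma_1-1}<4^{\gamma_1-1}\) for every \(\gamma_1\in(0,1)\), this never reaches the stated constant.

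Your refinement does not close the gap either. Suppose you bound \(\int y^{\gamma_1}\) from below by integrating \((M-L_{\dot y}s)^{\gamma_1}\) over a side of length \(T/2\), and then insert \(M\ge I/T\). That chain can certify at most the factor \(\tfrac12\) in front of \(T^{1-\gamma_1}I^{\gamma_1}\). At your threshold \(M\approx L_{\dot y}T/2\) it gives only \(\frac{1}{2(1+\gamma_1)}\). These fall below \(4^{\gamma_1-1}\) for \(\gamma_1>1/2\) and for \(\gamma_1\) above roughly \(0.3\), respectively.

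The fallback of ``adjusting the threshold so the regimes match the two entries'' is not an argument, because the entries of \(\rho_{\gamma_1,T}\) are fixed by the statement. The structural problem is that \(M\ge I/T\) points the wrong way. It is tight only when \(y\) is nearly constant, which is exactly when a half-window estimate discards half of the integral.

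The missing idea is to run the slope argument in the other direction. Use it to bound \(I=\int y\) from below, i.e.\ to bound \(M\) from \emph{above} in terms of \(I\). Then feed that into your derivative-free bound \(\int y^{\gamma_1}\ge M^{\gamma_1-1}I\), whose right side is decreasing in \(M\).

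In your regime \(M>L_{\dot y}T/2\), integrating the linear profile on the long side gives \(I\ge MT/2-L_{\dot y}T^2/8>MT/4\). The paper splits at \(M\gtrless L_{\dot y}T\) instead and uses \(y\ge M/2\) on an interval of length \(T/2\), reaching the same \(I\ge MT/4\). Either way \(M<4I/T\), and
\[
\int_{t-T}^{t}y^{\gamma_1}\,d\tau\ \ge\ \Bigl(\frac{4I}{T}\Bigr)^{\gamma_1-1}I=4^{\gamma_1-1}T^{1-\gamma_1}I^{\gamma_1}\ \ge\ 4^{\gamma_1-1}T^{1-\gamma_1}\bar I^{\frac{\gamma_1-1}{2}}I^{\mu},
\]
which is exactly the second entry.

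The same inversion also shortens your first case. The full tent gives \(I\ge M^2/(2L_{\dot y})\), so \(M\le\sqrt{2L_{\dot y}I}\). Then \(\int y^{\gamma_1}\ge 2^{\frac{\gamma_1-1}{2}}L_{\dot y}^{-\frac{1-\gamma_1}{2}}I^{\mu}\), which already exceeds the first entry with no subcase split.

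This is precisely the paper's proof. The only lower bound ever placed on \(\int y^{\gamma_1}\) is \(M^{\gamma_1-1}I\), and the Lipschitz condition is used solely to cap \(M\) by \(2\sqrt{L_{\dot y}I}\) or by \(4I/T\).
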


\begin{proof}
    See Appendix~\ref{uppbound}
\end{proof}




For \(y(\tau)=V^*(\bx(\tau))\), define
\begin{equation}
\rho_{\gamma_1,T}
:=
\min\left\{
2^{\gamma_1-1}
L_{\dot V}^{-\frac{1-\gamma_1}{2}},
\;
4^{\gamma_1-1}
T^{1-\gamma_1}
\bar I_x^{\frac{\gamma_1-1}{2}}
\right\}.
\label{eq:rho_state_window}
\end{equation}

For an assigned deadline satisfying \(T_p>T_E+T\), select
\(\chi\in(0,1)\) and allocate the critic horizon of
Theorem~\ref{thm:critic} together with the closed-loop horizon as
\begin{equation}
T_W:=\chi(T_p-T_E-T),
\qquad
T_X:=(1-\chi)(T_p-T_E-T),
\label{eq:deadline_allocation}
\end{equation}
so that \(T_E+T_W+T+T_X=T_p\).

By \eqref{eq:critic_Lyapunov}, define the post-learning critic radius
\(\bar W_c:=\sqrt{2\alpha_c\bar V_c}\), where \(\bar V_c\) denotes the
post-learning residual level of the critic Lyapunov function established
in Theorem~\ref{thm:critic}. Indeed, since
\(V_c(\widetilde{\bW})=\norm{\widetilde{\bW}}^2/(2\alpha_c)\) and
\(V_c(t)\leq\bar V_c\) for \(t\geq T_E+T_W\), one has
\(\norm{\widetilde{\bW}(t)}^2=2\alpha_cV_c(t)
\leq2\alpha_c\bar V_c=\bar W_c^2\), and therefore
\begin{equation}
\norm{\widetilde{\bW}(t)}
\leq
\bar W_c,
\qquad
t\geq T_E+T_W.
\label{eq:critic_bound_for_closed_loop}
\end{equation}
Hence, for every \(t\geq t_0:=T_E+T_W+T\), the complete interval
\([t-T,t]\) lies inside the post-learning critic regime, and
\(\int_{t-T}^{t}\norm{\widetilde{\bW}(\tau)}^{\gamma_1+1}d\tau
\leq T\bar W_c^{\gamma_1+1}\) and
\(\int_{t-T}^{t}\norm{\widetilde{\bW}(\tau)}^{\gamma_2+1}d\tau
\leq T\bar W_c^{\gamma_2+1}\).

Differentiating \eqref{eq:Ix} gives
\(\dot{\mathcal I}_x(t)=V^*(\bx(t))-V^*(\bx(t-T))
=\int_{t-T}^{t}\dot V^*(\bx(\tau))\,d\tau\). Moreover, convexity of
\(s\mapsto s^\nu\) gives
\(\int_{t-T}^{t}(V^*(\bx(\tau)))^\nu d\tau
\geq T^{1-\nu}\mathcal I_x^\nu(t)\). Define
\(a_x:=\lambda_xc_{x1}\rho_{\gamma_1,T}\),
\(b_x:=\lambda_xc_{x2}T^{1-\nu}\), and
\(\Delta_I:=T[b_1\bar W_c^{\gamma_1+1}
+b_2\bar W_c^{\gamma_2+1}+\Delta_x]\).

Substitution of \eqref{eq:Vstar_learned} into the preceding expression
for \(\dot{\mathcal I}_x\), followed by
Lemma~\ref{lem:low_window} and the above post-learning and convexity
bounds, yields
\begin{equation}
\dot{\mathcal I}_x
\leq
-a_x\mathcal I_x^\mu
-b_x\mathcal I_x^\nu
+\Delta_I,
\qquad
t\geq t_0.
\label{eq:Ix_comparison}
\end{equation}
Indeed, integrating \eqref{eq:Vstar_learned} over \([t-T,t]\) produces
the two state-dissipation terms
\(-\lambda_xc_{x1}\int_{t-T}^{t}(V^*)^{\gamma_1}d\tau\) and
\(-\lambda_xc_{x2}\int_{t-T}^{t}(V^*)^\nu d\tau\), together with the
critic-error contributions
\(b_1\int_{t-T}^{t}\norm{\widetilde{\bW}}^{\gamma_1+1}d\tau\),
\(b_2\int_{t-T}^{t}\norm{\widetilde{\bW}}^{\gamma_2+1}d\tau\), and the
residual \(T\Delta_x\). Lemma~\ref{lem:low_window} converts the first
state integral into
\(\rho_{\gamma_1,T}\mathcal I_x^\mu\), convexity gives
\(T^{1-\nu}\mathcal I_x^\nu\) for the second, and the post-learning
critic bound gives
\(T\bar W_c^{\gamma_1+1}\) and
\(T\bar W_c^{\gamma_2+1}\) for the two critic-error integrals,
respectively. Hence, the definitions of \(a_x\), \(b_x\), and
\(\Delta_I\) collect these terms exactly into
\eqref{eq:Ix_comparison}.

\emph{Composite Lyapunov dissipation.}
Using \eqref{eq:closed_loop_mu_nu}, the practical critic estimate \eqref{eq:Vc_practical} can be written as
\begin{equation}
\dot V_c
\leq
-h_1V_c^\mu
-h_2V_c^\nu
+\Delta_W,
\label{eq:Vc_closed_loop_recall}
\end{equation}
where
\begin{equation}
\begin{aligned}
h_1
&:=
2^{-(\gamma_1+1)}
\sigma_{\min}^{\gamma_1+1}(\bar\bOmega)
(2\alpha_c)^\mu,
\\
h_2
&:=
2^{-(\gamma_2+1)}
k^{\frac{1-\gamma_2}{2}}
\sigma_{\min}^{\gamma_2+1}(\bar\bOmega)
(2\alpha_c)^\nu,
\end{aligned}
\label{eq:h1_h2_recall}
\end{equation}
and
\begin{equation}
\Delta_W
:=
(k+1)
\left[
\left(
1+2\,3^{\gamma_1}
\right)
\epsilon_{Hm}^{\gamma_1+1}
+
\left(
1+2\,3^{\gamma_2}
\right)
\epsilon_{Hm}^{\gamma_2+1}
\right].
\label{eq:Delta_W}
\end{equation}
To see the correspondence explicitly, \eqref{eq:critic_Lyapunov} gives \(\norm{\widetilde{\bW}}^2=2\alpha_cV_c\). Hence, using \(\mu=(\gamma_1+1)/2\) and \(\nu=(\gamma_2+1)/2\), one has \(\norm{\widetilde{\bW}}^{\gamma_1+1}=(2\alpha_c)^\mu V_c^\mu\) and \(\norm{\widetilde{\bW}}^{\gamma_2+1}=(2\alpha_c)^\nu V_c^\nu\). Substitution of these identities into the two critic-error dissipation powers in \eqref{eq:Vc_practical} produces the coefficients \(h_1\) and \(h_2\) in \eqref{eq:h1_h2_recall}, whereas all HJI-residual contributions are collected into the bounded term \(\Delta_W\) in \eqref{eq:Delta_W}. Thus, \eqref{eq:Vc_closed_loop_recall} has the same low-/high-power structure as the state-window estimate \eqref{eq:Ix_comparison}, which enables their subsequent combination through the composite Lyapunov-like functional.

Adding \eqref{eq:Ix_comparison} and \eqref{eq:Vc_closed_loop_recall} gives
\begin{equation}
\begin{aligned}
\dot{\mathcal J}
\leq{}&
-a_x\mathcal I_x^\mu
-b_x\mathcal I_x^\nu
-h_1V_c^\mu
-h_2V_c^\nu
+\Delta_I+\Delta_W.
\end{aligned}
\label{eq:J_pre}
\end{equation}

\begin{lemma}[Power-sum bounds]
\label{lem:power_sum}
For any \(a,b\geq0\), if \(0<\mu<1\), then \(a^\mu+b^\mu\geq(a+b)^\mu\), whereas if \(\nu>1\), then \(a^\nu+b^\nu\geq2^{1-\nu}(a+b)^\nu\).
\end{lemma}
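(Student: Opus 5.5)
Both inequalities are elementary. I would prove each by a homogeneity reduction followed by a one-variable argument, handling the degenerate case first. If \(a+b=0\), then \(a=b=0\) and both claims read \(0\ge0\), so throughout I assume \(s:=a+b>0\). I then set \(\theta:=a/s\in[0,1]\), so that \(a=\theta s\) and \(b=(1-\theta)s\). Since both sides of each inequality are positively homogeneous of the same degree (\(\mu\) or \(\nu\)), dividing by \(s^\mu\) (respectively \(s^\nu\)) reduces the two claims to the scalar statements
\[
\theta^\mu+(1-\theta)^\mu\ge1,
\qquad
\theta^\nu+(1-\theta)^\nu\ge2^{1-\nu},
\qquad
\theta\in[0,1].
\]

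\emph{Sub-unit exponent.} For \(0<\mu<1\) and \(x\in[0,1]\) one has \(x^\mu\ge x\), because \(x^{\mu-1}\ge1\) for \(x\in(0,1]\) and the case \(x=0\) is trivial. Applying this to \(x=\theta\) and \(x=1-\theta\) and adding gives
\[
\theta^\mu+(1-\theta)^\mu\ge\theta+(1-\theta)=1,
\]
which is the first inequality (subadditivity of \(s\mapsto s^\mu\)).

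\emph{Super-unit exponent.} For \(\nu>1\) the map \(x\mapsto x^\nu\) is convex on \(\mathbb R_{\ge0}\). Jensen's inequality with equal weights \(1/2\) gives
\[
\frac{\theta^\nu+(1-\theta)^\nu}{2}\ge\left(\frac{\theta+1-\theta}{2}\right)^\nu=2^{-\nu},
\]
so \(\theta^\nu+(1-\theta)^\nu\ge2^{1-\nu}\). Equivalently, one can apply Jensen directly to \(a,b\): \(a^\nu+b^\nu\ge2\bigl((a+b)/2\bigr)^\nu=2^{1-\nu}(a+b)^\nu\), which avoids the normalization altogether.

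No step is a genuine obstacle. The only points needing care are the degenerate case \(a+b=0\) and, for the first claim, the endpoints \(\theta\in\{0,1\}\); at \(x=0\) the bound \(x^\mu\ge x\) holds trivially, so the proof is uniform. Both constants are sharp: the first inequality is attained at \(\theta\in\{0,1\}\) and the second at \(\theta=1/2\). This matters for the subsequent use of the lemma in \eqref{eq:J_pre}, where the four power terms are combined into a two-power inequality in \(\mathcal J=\mathcal I_x+V_c\) with coefficients \(\min\{a_x,h_1\}\) and \(2^{1-\nu}\min\{b_x,h_2\}\).
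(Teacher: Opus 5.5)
Your proof is correct and follows essentially the same route as the paper. The second inequality is exactly the paper's equal-weight Jensen argument for the convex map \(s\mapsto s^\nu\). For the first, your normalization \(\theta=a/(a+b)\) together with \(x^\mu\ge x\) on \([0,1]\) explicitly proves the subadditivity of \(s\mapsto s^\mu\), which the paper only asserts from concavity.
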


\begin{proof}
For \(0<\mu<1\), the map \(s\mapsto s^\mu\) is concave and subadditive on \(\mathbb R_{\geq0}\), which gives \((a+b)^\mu\leq a^\mu+b^\mu\). For \(\nu>1\), convexity of \(s\mapsto s^\nu\) yields \(((a+b)/2)^\nu\leq(a^\nu+b^\nu)/2\), or equivalently \(a^\nu+b^\nu\geq2^{1-\nu}(a+b)^\nu\).
\end{proof}

Applying Lemma~\ref{lem:power_sum} with \(a=\mathcal I_x\) and \(b=V_c\) gives \(\mathcal I_x^\mu+V_c^\mu\geq(\mathcal I_x+V_c)^\mu\) and \(\mathcal I_x^\nu+V_c^\nu\geq2^{1-\nu}(\mathcal I_x+V_c)^\nu\). Hence, the separate state-window and critic dissipation terms can be combined directly in terms of the composite functional \(\mathcal J=\mathcal I_x+V_c\). Define
\begin{equation}
C_1
:=
\min\{a_x,h_1\},
\qquad
C_2
:=
2^{1-\nu}\min\{b_x,h_2\},
\label{eq:C1_C2}
\end{equation}
and \(\Delta_J:=\Delta_I+\Delta_W\). Then the proposed Lyapunov-like functional satisfies
\begin{equation}
\dot{\mathcal J}
\leq
-C_1\mathcal J^\mu
-C_2\mathcal J^\nu
+\Delta_J,
\qquad
t\geq t_0.
\label{eq:J_comparison}
\end{equation}

The comparison inequality above has the standard two-power form with \(0<\mu<1<\nu\). To quantify its convergence time exactly, define the two-power integral constant
\begin{equation}
\mathfrak C_{\mu,\nu}
:=
\frac{
\Gamma\!\left(
\dfrac{\nu-1}{\nu-\mu}
\right)
\Gamma\!\left(
\dfrac{1-\mu}{\nu-\mu}
\right)
}{
\nu-\mu
}.
\label{eq:C_mu_nu_closed_loop}
\end{equation}
For a selected \(\theta_J\in(0,1)\), define
\begin{equation}
\Lambda_X
:=
\frac{
\mathfrak C_{\mu,\nu}
}{
(1-\theta_J)T_X
}.
\label{eq:Lambda_X}
\end{equation}

\begin{theorem}[Predefined-time learned closed loop]
\label{thm:closed_loop}
Suppose that
\eqref{eq:Vstar_learned}, and
Assumption~\ref{ass:adversarial_mismatch} hold on \(\Omega\), and that
the hypotheses of Theorem~\ref{thm:critic} are satisfied. Let
\(T_p>T_E+T\), choose \(T_W,T_X\) according to
\eqref{eq:deadline_allocation}, retain the same
\(\theta_W\in(0,1)\) used in Theorem~\ref{thm:critic}, and select
\(\theta_J\in(0,1)\). Let the same critic gain \(\alpha_c\) used in
Section~\ref{sec:critic} satisfy \eqref{eq:alpha_practical} for the
assigned horizon \(T_W\) and, in addition,
\begin{equation}
\begin{aligned}
\alpha_c
\geq
\frac{1}{2}
\max\Bigg\{&
\left[
\frac{\Lambda_X}{
2^{-(\gamma_1+1)}
\sigma_{\min}^{\gamma_1+1}(\bar\bOmega)}
\right]^{\frac{1}{\mu}},
\\[-1mm]
&
\left[
\frac{2^{\nu-1}\Lambda_X}{
2^{-(\gamma_2+1)}
k^{\frac{1-\gamma_2}{2}}
\sigma_{\min}^{\gamma_2+1}(\bar\bOmega)}
\right]^{\frac{1}{\nu}}
\Bigg\}.
\end{aligned}
\label{eq:alpha_closed}
\end{equation}


Here, \(\alpha_c\) is the same critic gain used in
Theorem~\ref{thm:critic}. The lower bound in
\eqref{eq:alpha_practical}, denoted by
\(\underline{\alpha}_W\), guarantees that the critic reaches its
prescribed residual set within the assigned horizon \(T_W\), whereas
the two bounds in \eqref{eq:alpha_closed}, denoted by
\(\underline{\alpha}_{X,1}\) and
\(\underline{\alpha}_{X,2}\), enforce
\(h_1\geq\Lambda_X\) and
\(h_2\geq2^{\nu-1}\Lambda_X\), respectively. Hence, a single critic
gain is selected to satisfy all three requirements simultaneously,
\(\alpha_c\geq
\max\{\underline{\alpha}_W,
\underline{\alpha}_{X,1},
\underline{\alpha}_{X,2}\}\).

Select
\begin{equation}
\lambda_x
\geq
\max
\left\{
\frac{\Lambda_X}{c_{x1}\rho_{\gamma_1,T}},
\;
\frac{2^{\nu-1}\Lambda_X}{c_{x2}T^{1-\nu}}
\right\}.
\label{eq:lambda_x}
\end{equation}
Then \(\mathcal J(t)\) enters, no later than \(T_p\), the forward
invariant compact set
\(\mathcal B_J:=
\{\mathcal J\in\mathbb R_{\geq0}:
\mathcal J\leq\bar{\mathcal J}\}\),
where \(\bar{\mathcal J}\) is the unique nonnegative solution of
\begin{equation}
C_1\bar{\mathcal J}^{\mu}
+
C_2\bar{\mathcal J}^{\nu}
=
\frac{\Delta_J}{\theta_J}.
\label{eq:Jbar_root}
\end{equation}
In particular, for every \(t\geq T_p\),
\begin{equation}
\norm{\bx(t)}
\leq
r_x
:=
\sqrt{
\frac{1}{\underline v}
\max
\left\{
2L_{\dot V}T,
\frac{2\bar{\mathcal J}}{T}
\right\}
}.
\label{eq:pointwise_x}
\end{equation}
Moreover,
\begin{equation}
\norm{\widetilde{\bW}(t)}
\leq
r_W
:=
\min
\left\{
\bar W_c,
\sqrt{2\alpha_c\bar{\mathcal J}}
\right\}.
\label{eq:pointwise_W}
\end{equation}
Hence, the augmented signal
\(\bGamma(t):=\col\{\bx(t),\widetilde{\bW}(t)\}\) is practically
predefined-time stable with respect to
\(\mathcal B_\Gamma:=
\{\col\{\bx,\widetilde{\bW}\}:
\norm{\bx}\leq r_x,\;
\norm{\widetilde{\bW}}\leq r_W\}\).


If, in addition,
\begin{equation}
\bar\varepsilon_H(t)\equiv 0,
\qquad
\bar\varepsilon_{H_i}=0,
\quad
i=1,\ldots,k,
\label{eq:closed_loop_exact_HJI_condition}
\end{equation}
and
\begin{equation}
\bar\varepsilon_g=0,
\qquad
\bar a_r=0,
\qquad
\bar d_r=0,
\label{eq:closed_loop_exact_mismatch_condition}
\end{equation}
then
\begin{equation}
\bx(t)\in B_{r_-}(0),
\qquad
\widetilde{\bW}(t)=\bm0,
\qquad
t\geq T_p.
\label{eq:exact_closed_loop}
\end{equation}
Moreover, the learned secure input satisfies
\begin{equation}
|\hat u_i(t)|
<
\bar u_i,
\qquad
i=1,\ldots,m,
\quad
t\geq0.
\label{eq:learned_saturation_theorem}
\end{equation}

\end{theorem}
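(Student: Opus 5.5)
The plan is to chain the critic result with the composite comparison inequality \eqref{eq:J_comparison}, and then apply the two-power settling-time estimate of Lemma~\ref{lem:Gamma_comparison} to \(\mathcal J\) on \([t_0,\infty)\), with \(t_0=T_E+T_W+T\).

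First, I would invoke Theorem~\ref{thm:critic}(ii) with the horizon \(T_W\) from \eqref{eq:deadline_allocation}. Since \(\alpha_c\) satisfies \eqref{eq:alpha_practical}, \(V_c\le\bar V_c\) for all \(t\ge T_E+T_W\), so \eqref{eq:critic_bound_for_closed_loop} holds. After one further reinforcement window, every interval \([t-T,t]\) with \(t\ge t_0\) lies in the post-learning regime. This justifies \eqref{eq:Ix_comparison}, hence \eqref{eq:J_pre}, and through Lemma~\ref{lem:power_sum} gives \eqref{eq:J_comparison} for \(t\ge t_0\).

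Next, I would verify the gain conditions. By \eqref{eq:h1_h2_recall}, \eqref{eq:alpha_closed} is exactly \(h_1\ge\Lambda_X\) and \(h_2\ge 2^{\nu-1}\Lambda_X\), obtained by raising to the powers \(1/\mu\) and \(1/\nu\). By the definitions of \(a_x\) and \(b_x\), \eqref{eq:lambda_x} gives \(a_x\ge\Lambda_X\) and \(b_x\ge 2^{\nu-1}\Lambda_X\). Hence \(C_1\ge\Lambda_X\) and \(C_2\ge\Lambda_X\).

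Then I would reproduce the absorption argument of Theorem~\ref{thm:critic}. The left side of \eqref{eq:Jbar_root} is strictly increasing from \(0\) to \(\infty\), so \(\bar{\mathcal J}\) is unique, and for \(\mathcal J>\bar{\mathcal J}\) we have \(\Delta_J<\theta_J(C_1\mathcal J^\mu+C_2\mathcal J^\nu)\). It follows that \(\dot{\mathcal J}\le-(1-\theta_J)\Lambda_X(\mathcal J^\mu+\mathcal J^\nu)\) outside \(\mathcal B_J\). Lemma~\ref{lem:Gamma_comparison}, rewritten with exponents \(\mu,\nu\) (so \(\gamma_i=2\mu-1,2\nu-1\)), bounds the entrance time by \(\mathfrak C_{\mu,\nu}/((1-\theta_J)\Lambda_X)=T_X\). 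The constant factor \(2\) in \eqref{eq:Gamma_time_bound} has to be reconciled with the normalization \eqref{eq:C_mu_nu_closed_loop}; this bookkeeping is where I expect to spend care. Therefore \(\mathcal J\) enters \(\mathcal B_J\) by \(t_0+T_X=T_p\). Forward invariance follows as in \eqref{eq:boundary_invariance}, since \(\dot{\mathcal J}<0\) on the boundary \(\mathcal J=\bar{\mathcal J}\).

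Finally, I would extract the pointwise bounds. For \(\widetilde{\bW}\), \(V_c\le\mathcal J\le\bar{\mathcal J}\) gives \(\|\widetilde{\bW}\|\le\sqrt{2\alpha_c\bar{\mathcal J}}\), and combining this with \(\bar W_c\) yields \eqref{eq:pointwise_W}.

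The state bound \eqref{eq:pointwise_x} is the main obstacle, because it must convert window energy into a pointwise estimate. I would use \(|\dot V^*|\le L_{\dot V}\) from \eqref{eq:Vdot_uniform_bound}. Setting \(v:=V^*(\bx(t))\), we have \(V^*(\bx(\tau))\ge v-L_{\dot V}(t-\tau)\) on the window. If \(v\ge 2L_{\dot V}T\), this gives \(V^*\ge v/2\) on the whole window, so \(\mathcal I_x\ge vT/2\) and \(v\le 2\bar{\mathcal J}/T\). Otherwise \(v<2L_{\dot V}T\). Together with \(V^*\ge\underline v\|\bx\|^2\) from Lemma~\ref{lem:value_quadratic_gradient} (valid on \(\Omega^r\); otherwise \(\bx\in B_{r_-}(0)\) trivially), this yields \(r_x\).

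For the exact case, \eqref{eq:closed_loop_exact_HJI_condition} gives \(\epsilon_{Hm}=0\), so \(\Delta_W=0\), and Theorem~\ref{thm:critic}(i) yields \(\widetilde{\bW}=0\) after \(T_E+T_W\). Then \(\bar W_c=0\), and \eqref{eq:closed_loop_exact_mismatch_condition} kills \(c_0\) and the \(\bar\varepsilon_g\) terms. The step I expect to be delicate here is that \(\Delta_H\) may still be positive, so \(\Delta_J\) need not vanish. The conclusion must therefore be read on \(\Omega^r\): while \(\|\bx\|\ge r_-\), the strict dissipation from the cost-induced bound \eqref{eq:ideal_fixed_time_value_inequality}, with \(\Delta_H\) absorbed by \(\lambda_x\) on the compact set \(\Omega^r\), forces entry into \(B_{r_-}(0)\) by \(T_p\). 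I would argue this via the same comparison, with the residual absorbed by the lower bound \(\|\bx\|\ge r_-\).

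The saturation claim \eqref{eq:learned_saturation_theorem} is immediate from \eqref{eq:u_hat}, since \(|\tanh|<1\).
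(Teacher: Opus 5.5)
Your argument for the practical part follows the paper's proof step for step. Theorem~\ref{thm:critic} plus one flushing window places every window \([t-T,t]\), \(t\ge t_0=T_E+T_W+T\), in the post-learning regime. Then \eqref{eq:alpha_closed} and \eqref{eq:lambda_x} give \(C_1,C_2\ge\Lambda_X\), the \(\theta_J\)-absorption and Lemma~\ref{lem:Gamma_comparison} give entry into \(\mathcal B_J\) by \(t_0+T_X=T_p\), and the \(L_{\dot V}\)-argument yields \eqref{eq:pointwise_x}. The factor \(2\) you were worried about cancels exactly. With \(\gamma_1=2\mu-1\), \(\gamma_2=2\nu-1\) one has \(\gamma_2-\gamma_1=2(\nu-\mu)\), so the constant \(2\Gamma_{\!E}(A_\gamma)\Gamma_{\!E}(B_\gamma)/(\gamma_2-\gamma_1)\) in \eqref{eq:Gamma_time_bound} is precisely \(\mathfrak C_{\mu,\nu}\). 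One minor point: your parenthetical ``otherwise \(\bx\in B_{r_-}(0)\) trivially'' gives only \(\norm{\bx(t)}\le\max\{r_x,r_-\}\), not \(\norm{\bx(t)}\le r_x\). The paper glosses over the same domain issue by using \(V^*\ge\underline v\norm{\bx}^2\) without restricting to \(\Omega^r\).

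The genuine gap is in the exact case. You are right that \(\Delta_H\) survives. \(\Delta_x\) in \eqref{eq:Delta_x} contains \(\Delta_H\), which bounds \(\gamma_a^2\ba^{*\top}\bTmat\ba^*+\gamma_d^2\bd^{*\top}\bS\bd^*\) and is untouched by \eqref{eq:closed_loop_exact_HJI_condition}--\eqref{eq:closed_loop_exact_mismatch_condition}. The paper's proof asserts \(\Delta_I=\Delta_J=0\) and then uses \(\dot{\mathcal I}_x\le-a_x\mathcal I_x^\mu-b_x\mathcal I_x^\nu\); it silently drops \(T\Delta_H\), so that step is not sound as written either. However, your repair does not prove \eqref{eq:exact_closed_loop}, for two reasons.

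First, absorbing \(\Delta_H\) through \(\norm{\bx}\ge r_-\) requires a condition of the type \(\theta_J\lambda_xc_{x1}(\underline c\,r_-^2)^{\gamma_1}\ge\Delta_H\). This is not among the hypotheses, since \eqref{eq:lambda_x} involves only \(\Lambda_X\). It also cannot simply be met by enlarging \(\lambda_x\), because \(\Delta_H\), \(\bar c\) (hence \(c_{x1}\)), and \(\underline c\) are all computed from \(V^*\), which itself changes with \(\lambda_x\) through \(Q\).

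Second, even granting such a condition, the comparison holds only while the entire window \([t-T,t]\) stays in \(\Omega^r\). It therefore bounds only the \emph{first} entry time into \(B_{r_-}(0)\), whereas the statement asserts \(\bx(t)\in B_{r_-}(0)\) for all \(t\ge T_p\). Nothing makes that ball forward invariant. Under \eqref{eq:closed_loop_exact_mismatch_condition} and \(\widetilde{\bW}=\bm0\), the adversary plays exactly \(\ba^*,\bd^*\), whose contribution to \(\dot V^*\) is \(+\gamma_a^2\ba^{*\top}\bTmat\ba^*+\gamma_d^2\bd^{*\top}\bS\bd^*\ge0\). Section~\ref{subsec:normalized_comparison} explicitly allows such re-exit.

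Your use of Theorem~\ref{thm:critic}(i) is fine, since the right-hand side of \eqref{eq:alpha_practical} equals \(4/(1-\theta_W)\) times the exact-case gain bound. What your ingredients do establish rigorously is the practical conclusion with \(\Delta_W=0\), \(\bar W_c=0\), and \(\Delta_J=T\Delta_H\). This requires bounding the policy mismatch directly by \(c_W\cdot0+c_0=0\) rather than through Young's inequality. The exact claim needs an additional hypothesis: for example \(\Delta_H=0\), or an absorbing \(\lambda_x\)-condition together with a separate invariance argument for \(B_{r_-}(0)\).
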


\begin{proof}
Condition \eqref{eq:alpha_practical} and
Theorem~\ref{thm:critic} imply that the critic reaches its certified
residual set no later than \(T_E+T_W\). The same gain \(\alpha_c\)
is retained in the subsequent closed-loop analysis. From
\eqref{eq:h1_h2_recall}, the two post-learning critic coefficients
have the form
\(h_1=
2^{-(\gamma_1+1)}
\sigma_{\min}^{\gamma_1+1}(\bar{\bOmega})
(2\alpha_c)^\mu\)
and
\(h_2=
2^{-(\gamma_2+1)}
k^{\frac{1-\gamma_2}{2}}
\sigma_{\min}^{\gamma_2+1}(\bar{\bOmega})
(2\alpha_c)^\nu\).
Hence, \(h_1\geq\Lambda_X\) is equivalent to
\(
2^{-(\gamma_1+1)}
\sigma_{\min}^{\gamma_1+1}(\bar{\bOmega})
(2\alpha_c)^\mu
\geq\Lambda_X
\),
or
\(
\alpha_c\geq
\frac12
[
\Lambda_X/
(
2^{-(\gamma_1+1)}
\sigma_{\min}^{\gamma_1+1}(\bar{\bOmega})
)
]^{1/\mu}
\).
Likewise,
\(h_2\geq2^{\nu-1}\Lambda_X\) is equivalent to
\(
2^{-(\gamma_2+1)}
k^{\frac{1-\gamma_2}{2}}
\sigma_{\min}^{\gamma_2+1}(\bar{\bOmega})
(2\alpha_c)^\nu
\geq2^{\nu-1}\Lambda_X
\),
and therefore to
\(
\alpha_c\geq
\frac12
[
2^{\nu-1}\Lambda_X/
(
2^{-(\gamma_2+1)}
k^{\frac{1-\gamma_2}{2}}
\sigma_{\min}^{\gamma_2+1}(\bar{\bOmega})
)
]^{1/\nu}
\).
Thus, \eqref{eq:alpha_closed} gives
\(h_1\geq\Lambda_X\) and
\(h_2\geq2^{\nu-1}\Lambda_X\). Together with
\eqref{eq:alpha_practical}, the single implemented critic gain
simultaneously satisfies the learning-horizon requirement of
Theorem~\ref{thm:critic} and both closed-loop requirements above.

Similarly, from
\(a_x=\lambda_xc_{x1}\rho_{\gamma_1,T}\) and
\(b_x=\lambda_xc_{x2}T^{1-\nu}\), condition
\eqref{eq:lambda_x} yields
\(
\lambda_xc_{x1}\rho_{\gamma_1,T}\geq\Lambda_X
\)
and
\(
\lambda_xc_{x2}T^{1-\nu}
\geq2^{\nu-1}\Lambda_X
\);
hence
\(a_x\geq\Lambda_X\) and
\(b_x\geq2^{\nu-1}\Lambda_X\).
Using \eqref{eq:C1_C2},
\(
C_1=\min\{a_x,h_1\}\geq\Lambda_X
\),
while
\(
C_2=
2^{1-\nu}\min\{b_x,h_2\}
\geq
2^{1-\nu}(2^{\nu-1}\Lambda_X)
=\Lambda_X
\).
Consequently,
\[
C_1\geq\Lambda_X,\qquad
C_2\geq\Lambda_X.
\]

Consider
\(F(s):=C_1s^\mu+C_2s^\nu\) on
\(\mathbb R_{\geq0}\). Since
\(F(0)=0\),
\(
F'(s)=
\mu C_1s^{\mu-1}
+\nu C_2s^{\nu-1}>0
\)
for every \(s>0\), and
\(F(s)\rightarrow\infty\) as \(s\rightarrow\infty\),
\(F\) is continuous and strictly increasing from
\(0\) to \(+\infty\). Hence, \eqref{eq:Jbar_root} admits a unique
nonnegative solution \(\bar{\mathcal J}\), and
\(
\mathcal B_J=
\{\mathcal J\in\mathbb R_{\geq0}:
\mathcal J\leq\bar{\mathcal J}\}
\)
is compact. By \eqref{eq:Jbar_root},
\(
F(\bar{\mathcal J})
=
C_1\bar{\mathcal J}^{\mu}
+C_2\bar{\mathcal J}^{\nu}
=
\Delta_J/\theta_J
\).
For every \(\mathcal J>\bar{\mathcal J}\), strict monotonicity gives
\(
C_1\mathcal J^\mu+C_2\mathcal J^\nu
>
\Delta_J/\theta_J
\),
and therefore
\(
\Delta_J<
\theta_J
(C_1\mathcal J^\mu+C_2\mathcal J^\nu)
\).
Substitution into \eqref{eq:J_comparison} yields
\[
\dot{\mathcal J}
\leq
-(1-\theta_J)
\big(
C_1\mathcal J^\mu+C_2\mathcal J^\nu
\big),
\qquad
\mathcal J>\bar{\mathcal J}.
\]

Let
\(p_1:=(\nu-1)/(\nu-\mu)\) and
\(p_2:=(1-\mu)/(\nu-\mu)\). Since
\(0<\mu<1<\nu\),
\(p_1>0\), \(p_2>0\), and
\(p_1+p_2=1\). Applying
Lemma~\ref{lem:Gamma_comparison} from
\(t_0:=T_E+T_W+T\) gives
\(
T_{\mathcal J}
\leq
\mathfrak C_{\mu,\nu}/
[
(1-\theta_J)C_1^{p_1}C_2^{p_2}
]
\).
The choice \(t_0=T_E+T_W+T\) guarantees
\(
[t-T,t]\subseteq[T_E+T_W,\infty)
\)
for every \(t\geq t_0\), so the entire moving window lies in the
certified post-learning regime. Since
\(C_1,C_2\geq\Lambda_X\),
\(
C_1^{p_1}C_2^{p_2}
\geq
\Lambda_X^{p_1+p_2}
=\Lambda_X
\),
and therefore
\[
T_{\mathcal J}
\leq
\frac{\mathfrak C_{\mu,\nu}}
     {(1-\theta_J)\Lambda_X}
=
\frac{\mathfrak C_{\mu,\nu}}
{(1-\theta_J)
\frac{\mathfrak C_{\mu,\nu}}
     {(1-\theta_J)T_X}}
=
T_X.
\]
Thus,
\(
t_0+T_{\mathcal J}
\leq
T_E+T_W+T+T_X
=T_p
\),
so
\(
\mathcal J(t)\leq\bar{\mathcal J}
\)
for every \(t\geq T_p\).

It remains to verify invariance. At
\(\mathcal J=\bar{\mathcal J}\),
\eqref{eq:J_comparison} and
\eqref{eq:Jbar_root} give
\(
\dot{\mathcal J}
\leq
-C_1\bar{\mathcal J}^{\mu}
-C_2\bar{\mathcal J}^{\nu}
+\Delta_J
=
-\Delta_J/\theta_J+\Delta_J
=
-[(1-\theta_J)/\theta_J]\Delta_J
\leq0
\).
Hence, the vector field cannot point outward through the boundary
of \(\mathcal B_J\), and \(\mathcal B_J\) is forward invariant.
Consequently,
\[
\mathcal J(t)\leq\bar{\mathcal J},
\qquad t\geq T_p .
\]

We next recover the pointwise state bound. Since
\(\mathcal J=\mathcal I_x+V_c\) with
\(\mathcal I_x,V_c\geq0\),
\(
\mathcal I_x(t)\leq\mathcal J(t)
\leq\bar{\mathcal J}
\)
for every \(t\geq T_p\). Recall that
\(
\mathcal I_x(t)
=
\int_{t-T}^{t}V^*(\bx(\tau))\,d\tau
\).
If
\(V^*(\bx(t))\leq2L_{\dot V}T\), then directly
\(
V^*(\bx(t))
\leq
\max\{
2L_{\dot V}T,
2\bar{\mathcal J}/T
\}
\).
Consider instead
\(V^*(\bx(t))>2L_{\dot V}T\). By
\eqref{eq:Vdot_uniform_bound}, for every
\(\tau\in[t-T,t]\),
\(
|V^*(\bx(t))-V^*(\bx(\tau))|
\leq
L_{\dot V}|t-\tau|
\leq
L_{\dot V}T
\),
and hence
\(
V^*(\bx(\tau))
\geq
V^*(\bx(t))-L_{\dot V}T
>
\frac12V^*(\bx(t))
\).
Therefore,
\(
\mathcal I_x(t)
=
\int_{t-T}^{t}V^*(\bx(\tau))\,d\tau
>
\frac{T}{2}V^*(\bx(t))
\).
Since
\(\mathcal I_x(t)\leq\bar{\mathcal J}\),
\(
V^*(\bx(t))
<
2\bar{\mathcal J}/T
\).
Combining the two alternatives gives
\[
V^*(\bx(t))
\leq
\max
\left\{
2L_{\dot V}T,\,
\frac{2\bar{\mathcal J}}{T}
\right\},
\qquad t\geq T_p .
\]
Using
\(V^*(\bx)\geq\underline v\norm{\bx}^2\),
\(
\underline v\norm{\bx(t)}^2
\leq
\max\{
2L_{\dot V}T,
2\bar{\mathcal J}/T
\}
\),
and hence
\(
\norm{\bx(t)}
\leq
\sqrt{
\underline v^{-1}
\max\{
2L_{\dot V}T,
2\bar{\mathcal J}/T
\}
}
=r_x
\),
which proves \eqref{eq:pointwise_x}.

For the critic error,
\(V_c(t)\leq\mathcal J(t)\leq\bar{\mathcal J}\).
From \eqref{eq:critic_Lyapunov},
\(
V_c(t)
=
\norm{\widetilde{\bW}(t)}^2/(2\alpha_c)
\),
so
\(
\norm{\widetilde{\bW}(t)}^2
=
2\alpha_cV_c(t)
\leq
2\alpha_c\bar{\mathcal J}
\),
and consequently
\(
\norm{\widetilde{\bW}(t)}
\leq
\sqrt{2\alpha_c\bar{\mathcal J}}
\).
On the other hand,
\eqref{eq:critic_bound_for_closed_loop} gives the independent
post-learning estimate
\(
\norm{\widetilde{\bW}(t)}\leq\bar W_c
\).
Therefore,
\[
\norm{\widetilde{\bW}(t)}
\leq
\min
\left\{
\bar W_c,\,
\sqrt{2\alpha_c\bar{\mathcal J}}
\right\}
=r_W,
\qquad t\geq T_p ,
\]
which is \eqref{eq:pointwise_W}. Hence, for
\(
\bm\Gamma(t):=
\operatorname{col}\{\bx(t),\widetilde{\bW}(t)\}
\),
one has
\(
\bm\Gamma(t)\in\mathcal B_r
\)
for every \(t\geq T_p\), where
\(
\mathcal B_r=
\{
\operatorname{col}\{\bx,\widetilde{\bW}\}:
\norm{\bx}\leq r_x,\,
\norm{\widetilde{\bW}}\leq r_W
\}
\).
This proves the claimed practical predefined-time stability.

Suppose now that
\eqref{eq:closed_loop_exact_HJI_condition} and
\eqref{eq:closed_loop_exact_mismatch_condition} hold. By
Theorem~\ref{thm:critic},
\(
\widetilde{\bW}(t)=\bm0
\)
for every
\(t\geq T_E+T_W\). Hence the learned value-gradient coincides with
its ideal value-gradient after the critic deadline, and therefore
\(
\hat{\bu}(t)=\bu^*(t)
\),
\(
\ba(t)=\ba^*(t)
\),
and
\(
\bd(t)=\bd^*(t)
\)
under \eqref{eq:closed_loop_exact_mismatch_condition}. Thus all
critic-dependent and critic-independent contributions entering
\(\Delta_W\) and \(\Delta_I\) vanish, so
\(
\Delta_W=0
\),
\(
\Delta_I=0
\),
and
\(
\Delta_J=\Delta_I+\Delta_W=0
\).
In particular, \eqref{eq:Jbar_root} becomes
\(
C_1\bar{\mathcal J}^{\mu}
+C_2\bar{\mathcal J}^{\nu}=0
\),
which, since \(C_1,C_2>0\), implies
\(\bar{\mathcal J}=0\).

More directly, for every
\(t\geq t_0=T_E+T_W+T\), the complete interval
\([t-T,t]\) lies after the exact critic-learning deadline and the
state moving-window functional satisfies
\(
\dot{\mathcal I}_x
\leq
-a_x\mathcal I_x^\mu
-b_x\mathcal I_x^\nu
\).
Lemma~\ref{lem:Gamma_comparison} therefore gives the exact settling
estimate
\(
T_x
\leq
\mathfrak C_{\mu,\nu}/
[
a_x^{p_1}b_x^{p_2}
]
\).
From \eqref{eq:lambda_x},
\(a_x\geq\Lambda_X\) and
\(b_x\geq2^{\nu-1}\Lambda_X\geq\Lambda_X\), so
\(
a_x^{p_1}b_x^{p_2}
\geq\Lambda_X^{p_1+p_2}
=\Lambda_X
\).
Consequently,
\[
T_x
\leq
\frac{\mathfrak C_{\mu,\nu}}{\Lambda_X}
=
(1-\theta_J)T_X
<
T_X .
\]
Hence
\(
\mathcal I_x(t)=0
\)

no later than
\(T_E+T_W+T+T_X=T_p\), and thus
\(
\mathcal I_x(t)=0
\)
for every \(t\geq T_p\). Since
\(
\mathcal I_x(t)
=
\int_{t-T}^{t}V^*(\bx(\tau))\,d\tau
\),
\(V^*(\bx(\tau))\geq0\), and
\(V^*(\bx(\tau))\) is continuous,
\(
\mathcal I_x(t)=0
\)
implies that the trajectory lies in the zero-level set of \(V^*\)
throughout \([t-T,t]\). By the positive definiteness of \(V^*\), this
zero-level set is \(\{\bm0\}\); hence, in particular,
\(
\bx(t)\in B_{r_-}(0)
\)
for every \(t\geq T_p\). Together with
\(
\widetilde{\bW}(t)=\bm0
\)
for \(t\geq T_E+T_W\), this proves
\eqref{eq:exact_closed_loop}.

Finally, the learned saturation-aware policy is defined
componentwise through the factor \(\tanh(\cdot)\). Since
\(
|\tanh(s)|<1
\)
for every finite \(s\in\mathbb R\), each learned input component
satisfies
\(
|\hat u_i(t)|
=
\bar u_i|\tanh(\cdot)|
<
\bar u_i
\)
for
\(i=1,\ldots,m\) and every \(t\geq0\).
Therefore, \eqref{eq:learned_saturation_theorem} holds independently
of the convergence argument above.
\end{proof}

\begin{remark}[Deadline accounting and convergence interpretation]
\label{rem:deadline_convergence_interpretation}
The decomposition \(T_p=T_E+T_W+T+T_X\) is an intrinsic but conservative certificate of integral learning: \(T_E\) builds an informative replay stack, \(T_W\) is the assigned critic-adaptation horizon, \(T\) flushes pre-learning data from the moving window, and \(T_X\) is the assigned convergence horizon of \(\mathcal J\). It does not prescribe sequential evolution, since \(x(t)\) and \(\widetilde W(t)\) evolve simultaneously and the state may substantially decrease, or even enter its ultimate neighborhood, during critic adaptation. The proof intentionally ignores this transient improvement: with \(t_0:=T_E+T_W+T\), it invokes the certified post-learning critic regime and guarantees \(T_J\leq T_X\), so that \(T_{\Gamma}\leq T_{\mathrm{cert}}:=t_0+T_J\leq T_p\), where \(T_{\Gamma}:=\inf\{t\geq0:\Gamma(\tau)\in\mathcal B_{\Gamma},\ \forall\,\tau\geq t\}\) is the actual permanent-entry time. Hence, even if simultaneous state--critic convergence yields \(T_{\Gamma}\leq T_E+T_W\), the certificate remains valid. The radius of \(\mathcal B_\Gamma\) is explicitly determined by the HJI residual, value-gradient approximation error, actual attack/disturbance reconstruction mismatch, and post-learning critic radius \(\bar W_c\). If these perturbations persist, the augmented closed loop reaches \(\mathcal B_\Gamma\) no later than \(T_p\); under \eqref{eq:closed_loop_exact_HJI_condition} and \eqref{eq:closed_loop_exact_mismatch_condition}, the residual set reduces to its exact limiting case and exact predefined-time convergence follows. Thus, the additive deadline is a conservative worst-case time-budget certificate, not a prescribed temporal ordering of state and critic convergence.
\end{remark}

\endgroup

\begin{remark}
\label{rem:inner_region_interpretation}
The origin remains the nominal regulation target and is not excluded from the
HJI domain $\Omega_i$. The restricted set
$\Omega_i^r:=\{x_i\in\Omega_i:\|x_i\|\ge r_{i,-}\}$ is introduced only
for the predefined-time comparison analysis. In particular, whenever
$0<\|x_i\|<r_{i,-}$, the local coordination error has already entered the
prescribed terminal neighborhood $\mathcal B_{r_{i,-}}(0)$, and no further
outer-region comparison is required. Hence, the exclusion of the origin from
$\Omega_i^r$ is purely analytical and does not modify the underlying HJI
problem or the nominal coordination objective. If disturbances or attacks
subsequently drive $x_i$ outside $\mathcal B_{r_{i,-}}(0)$, the same
predefined-time comparison becomes applicable again.
\end{remark}

\section{Numerical Illustration}
\label{sec:simulation}

\label{subsec:simulation_setup}

\subsection{Simulation Setup}
\label{subsec:simulation_setup}

A two-link planar manipulator with
\(\bx=\col\{q_1,q_2,\dot q_1,\dot q_2\}\) is considered, whose dynamics are
\(\mathbf M(\bm q)\ddot{\bm q}
+\mathbf C(\bm q,\dot{\bm q})\dot{\bm q}
+\mathbf D\dot{\bm q}
+\mathbf G(\bm q)
=\bu+\ba+\boldsymbol{\tau}_d\).
The nonzero model entries are
\(M_{11}=2.70+0.70\cos q_2\),
\(M_{12}=M_{21}=0.80+0.35\cos q_2\),
\(M_{22}=0.80\),
\(C_{11}=-0.35\dot q_2\sin q_2\),
\(C_{12}=-0.35(\dot q_1+\dot q_2)\sin q_2\),
\(C_{21}=0.35\dot q_1\sin q_2\),
\(\mathbf D=\diag(0.12,0.08)\),
\(G_1=8.5\sin q_1+2.6\sin(q_1+q_2)\), and
\(G_2=2.6\sin(q_1+q_2)\).
The game parameters are
\(\mathbf Q_x=\diag(7,6,1.5,1.2)\),
\(\mathbf R=0.06\mathbf I_2\),
\(\mathbf T=0.8\mathbf I_2\),
\(\mathbf S=\mathbf I_4\),
\((\gamma_a,\gamma_d)=(3.5,4.5)\), and
\(\bar u=8\).
The predefined-time parameters are
\((\gamma_1,\gamma_2)=(0.5,2)\),
\((\mu,\nu)=(0.75,1.5)\),
\((\kappa_1,\kappa_2)=(0.60,0.04)\),
\(\theta_W=\theta_J=0.10\),
\(\bar V=1\),
\(L_{\dot V}=10\), and
\(r_\omega=2\).

The critic uses \(L=20\) Gaussian radial basis functions
\(\phi_j(x)=\exp(-\|x-\mu_j\|^2/(2\sigma_j^2))\),
\(j=1,\ldots,L\), and a replay stack of \(k=200\) samples.
The finite-data informativity requirement is numerically assessed using
the threshold
\(\sigma_{\mathrm{req}}=3\times10^{-2}\),
together with the conditioning requirement
\(\operatorname{cond}(\Omega)<5\times10^3\).
The data-acquisition interval is
\(T_E=0.30~\mathrm{s}\), the integral window is
\(T=0.05~\mathrm{s}\), and the deadline-allocation factor is
\(\chi=0.30\).
Two prescribed deadlines are considered,
\(T_p=2.2~\mathrm{s}\) and \(T_p=10~\mathrm{s}\), with
\(T_W=\chi(T_p-T_E-T)\) and
\(T_X=(1-\chi)(T_p-T_E-T)\).
For \(T_p=2.2~\mathrm{s}\),
\((T_W,T_X)=(0.555,1.295)~\mathrm{s}\),
\(T_E+T_W+T=0.905~\mathrm{s}\),
\(\lambda_x=4.137109\times10^{1}\), and
\(\alpha_c=1.064897\times10^{4}\);
for \(T_p=10~\mathrm{s}\),
\((T_W,T_X)=(2.895,6.755)~\mathrm{s}\),
\(T_E+T_W+T=3.245~\mathrm{s}\),
\(\lambda_x=7.931245\), and
\(\alpha_c=2.876914\times10^{3}\).
The terminal radius is \(r_x=0.04\), the actuator bound remains
\(|u_j|<8\), and all responses are displayed over \(12~\mathrm{s}\).

The matched FDI attack is
\(\ba(t)=e^{-0.26t}\col\{\psi_{a1}(t),\psi_{a2}(t)\}\), where
\(\psi_{a1}=1.45\sin(5.6t+0.20)+0.95\sin(11.8t-0.10)
+0.48\cos(17.4t+0.35)\) and
\(\psi_{a2}=-1.30\cos(6.2t-0.25)+0.82\sin(12.5t+0.40)
-0.42\cos(18.6t-0.15)\).
The external torque disturbance is
\(\boldsymbol{\tau}_d(t)=
e^{-0.31t}\col\{\psi_{d1}(t),\psi_{d2}(t)\}\), with
\(\psi_{d1}=0.40\sin(6.8t+0.15)+0.22\cos(13.2t+0.25)
+0.11\sin(19.5t-0.30)\) and
\(\psi_{d2}=-0.35\cos(6.1t-0.50)+0.20\sin(12.6t+0.05)
-0.10\cos(18.8t+0.38)\).

The nominal initial state is
\(\bx(0)=\col\{1.470,-1.155,0.420,-0.315\}\).
The multi-initial-condition test uses five levels with
\(\|x_0\|\approx0.55,1.02,1.48,1.94,2.40\).
For comparison, the proposed Predefined-time IRL method is evaluated
against Fixed-time IRL and conventional ADP-IRL, implemented according
to the corresponding formulations in~\cite{Vu2026FixedTimeIRL}.
For a fair comparison, both baselines use the same replay data and the
same learning interval \(T_W\) as the corresponding Predefined-time IRL
case.

Since the initial critic weights determine the initial approximate value
gradient and therefore the induced control policy, their choice is not
merely numerical but directly affects whether the initial closed loop
satisfies the admissibility and stability requirements of
Definition~\ref{ass:bounded_exogenous} under
Assumption~\ref{ass:plant_regularity}.
Hence, the initialization should be selected within a region that
preserves an admissible stabilizing policy before learning improves the
critic; related constructions are discussed in~\cite{Vu2026FixedTimeIRL}.

\subsection{Simulation Results}
\label{sec:simulation_results}


The finite-data condition is satisfied before the end of the acquisition
phase. In particular, the required level
\(\sigma_{\mathrm{req}}=3\times10^{-2}\), where
\(\sigma_{\mathrm{req}}\) is the prescribed finite-data threshold,
is reached at approximately
\(0.13~\mathrm{s}<T_E=0.30~\mathrm{s}\), while the smallest-singular-value
indicator subsequently increases to approximately
\(1.24\times10^{-1}\). Hence, the replay data become informative with
a clear margin before critic learning is completed.

The critic-weight responses are shown in Fig.~\ref{fig:critic_weights}. Most weight variation occurs within the first \(5\times10^{-2}~\mathrm{s}\), after which both groups become practically stationary well before \(T_W\). This rapid transient results from the informative replay stack available before learning, the simultaneous use of all stored directions, regressor normalization, and the two-power correction, which ensures strong descent both far from and near the residual set. The deadline-dependent critic gain further accelerates convergence, particularly for the shorter prescribed time. Hence, the fast critic transient is mainly due to informative finite replay data, normalization, and two-power learning. Since both deadline tests use the same replay information and critic problem, their resulting weight solutions remain essentially identical.


\begin{figure}[H]
    \centering
    \makebox[\columnwidth][c]{%
    \subfloat[Quadratic weights]{
        \includegraphics[
            width=0.465\columnwidth,
            height=0.4\columnwidth
        ]{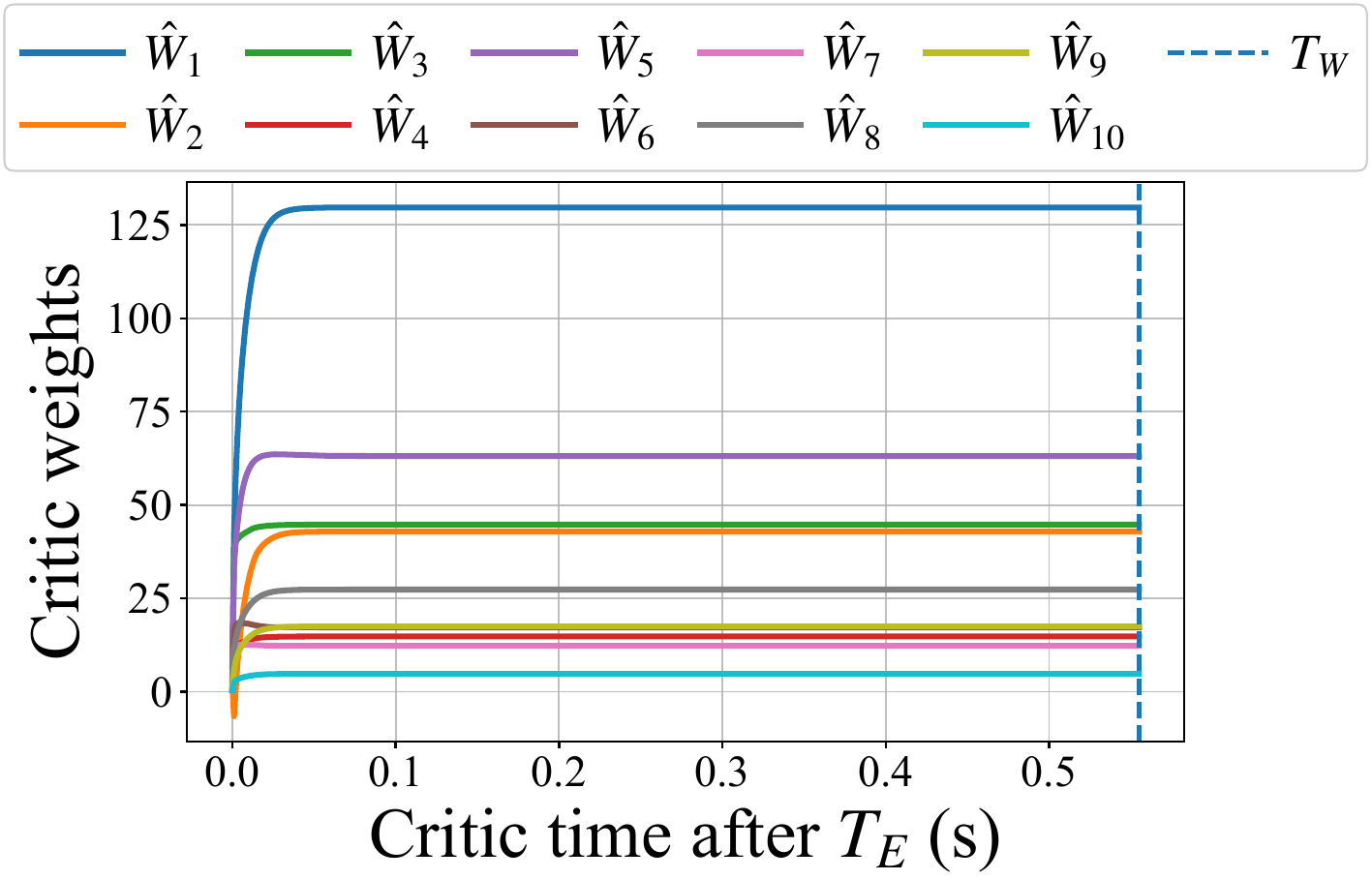}
        \label{fig:critic_weights_1}
    }%
    \hspace{0.02\columnwidth}%
    \subfloat[Higher-order weights]{
        \includegraphics[
            width=0.465\columnwidth,
            height=0.4\columnwidth
        ]{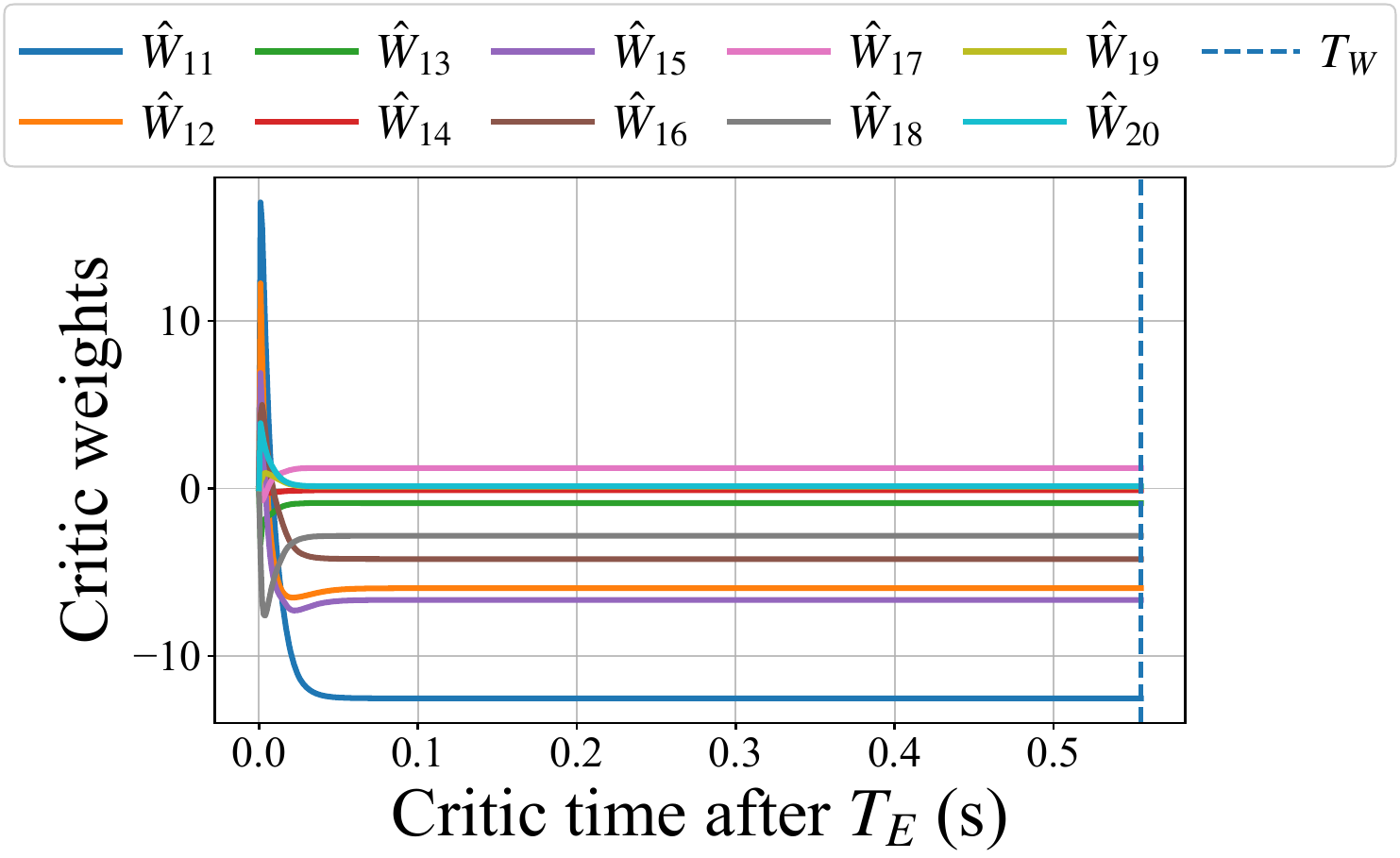}
        \label{fig:critic_weights_2}
    }}
    \caption{Critic-weight evolution during the finite learning interval.}
    \label{fig:critic_weights}
\end{figure}

The influence of the assigned deadline is shown in
Fig.~\ref{fig:state_deadline_comparison}. With
\(T_p=10~\mathrm{s}\), the available time budget is large and the
states approach the origin several seconds before the assigned
deadline. When the deadline is reduced to \(T_p=2.2~\mathrm{s}\), the
transient is compressed considerably and the states are driven toward
the origin much faster. The regulation objective is unchanged; what
changes is the amount of time available to achieve it.

\begin{figure}[H]
    \centering
    \makebox[\columnwidth][c]{%
    \subfloat[\(T_p=2.2~\mathrm{s}\)]{
        \includegraphics[width=0.465\columnwidth]
        {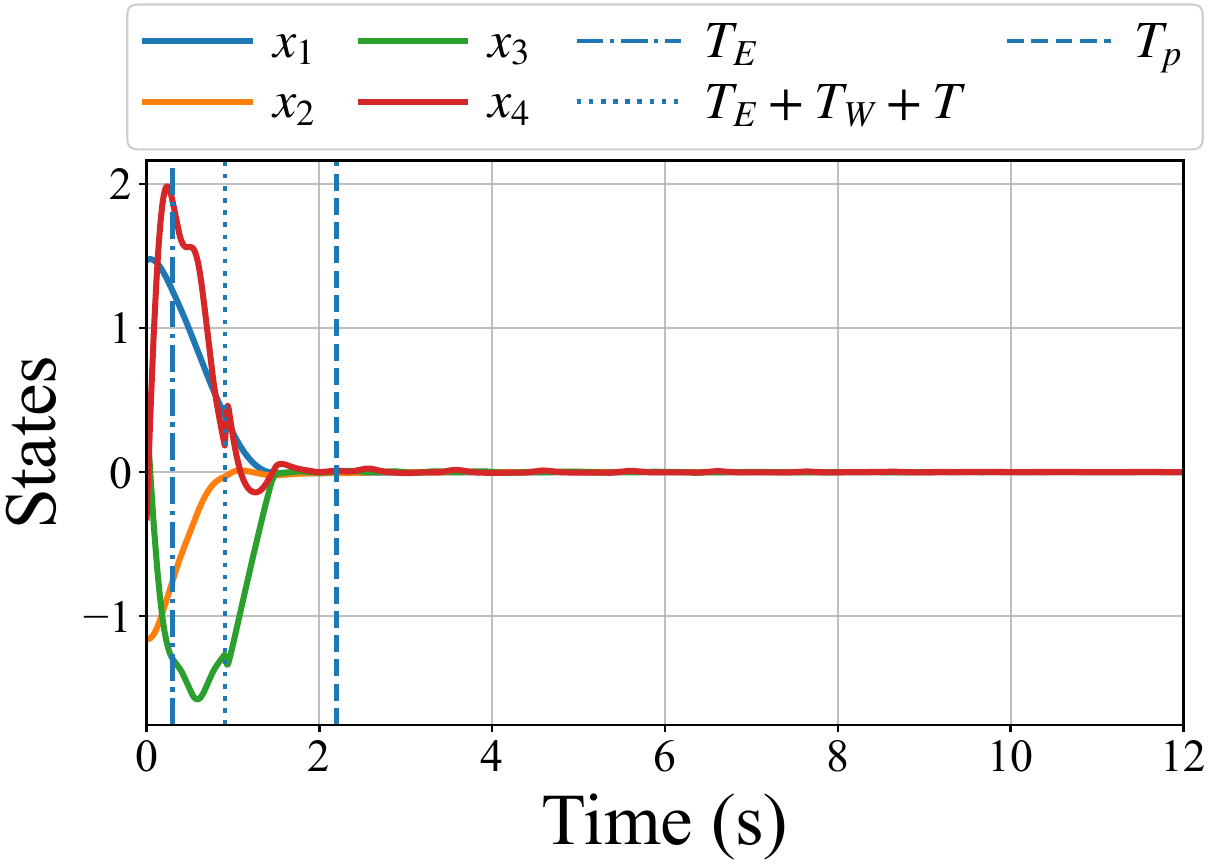}
        \label{fig:states_T22}
    }%
    \hspace{0.02\columnwidth}%
    \subfloat[\(T_p=10~\mathrm{s}\)]{
        \includegraphics[width=0.465\columnwidth]
        {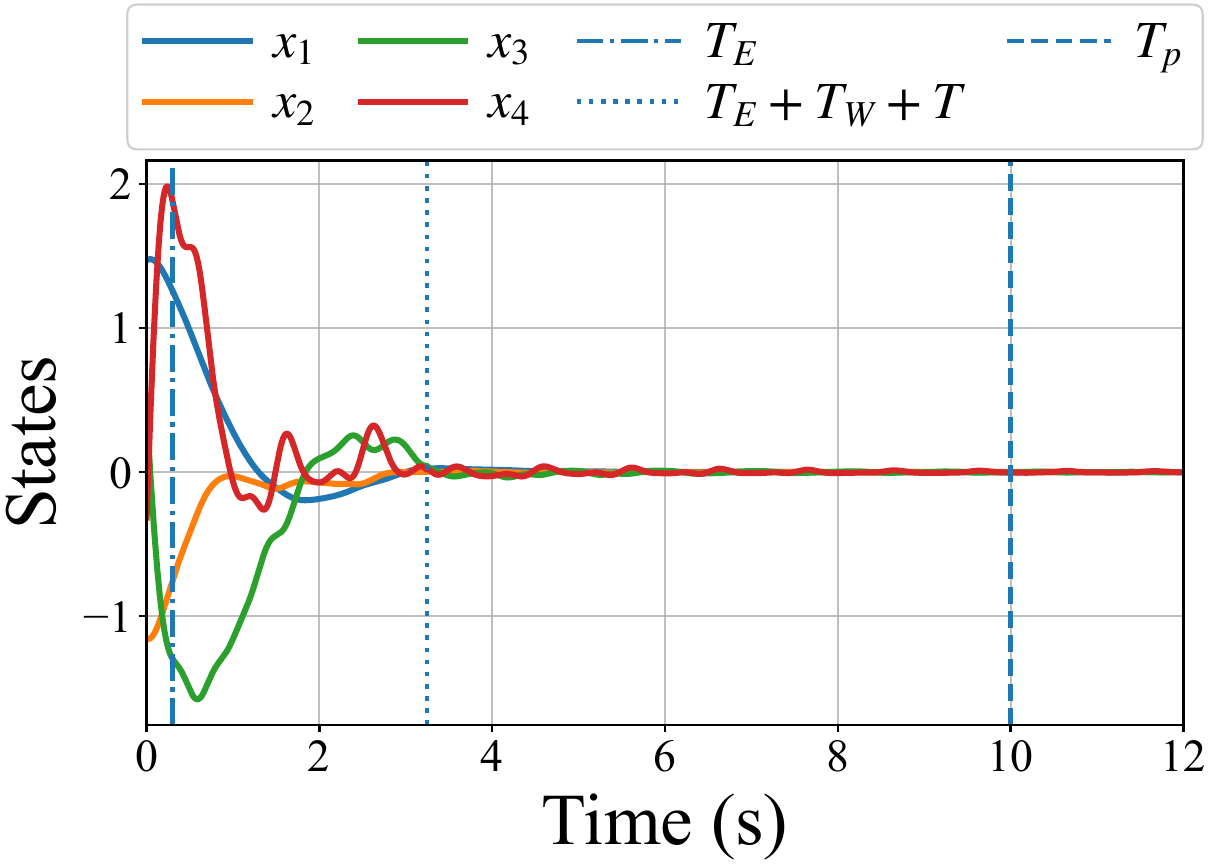}
        \label{fig:states_T10}
    }}
    \caption{Closed-loop state responses under two assigned deadlines.}
    \label{fig:state_deadline_comparison}
\end{figure}

The corresponding control signals in
Fig.~\ref{fig:control_deadline_comparison} show the direct cost of
reducing \(T_p\). For \(T_p=10~\mathrm{s}\), a large actuator margin
remains during most of the transient. For \(T_p=2.2~\mathrm{s}\), one
control channel reaches approximately the hard limit
\(\bar u=8\) over a finite interval. Therefore, the shorter deadline is
obtained by using more available control authority. This mechanism is
not unlimited because the saturation-aware input penalty enforces
\(|u_j|<\bar u\).


\begin{figure}[H]
    \centering
    \makebox[\columnwidth][c]{%
    \subfloat[\(T_p=2.2~\mathrm{s}\)]{
        \includegraphics[
            width=0.465\columnwidth,
            height=0.5\columnwidth
        ]{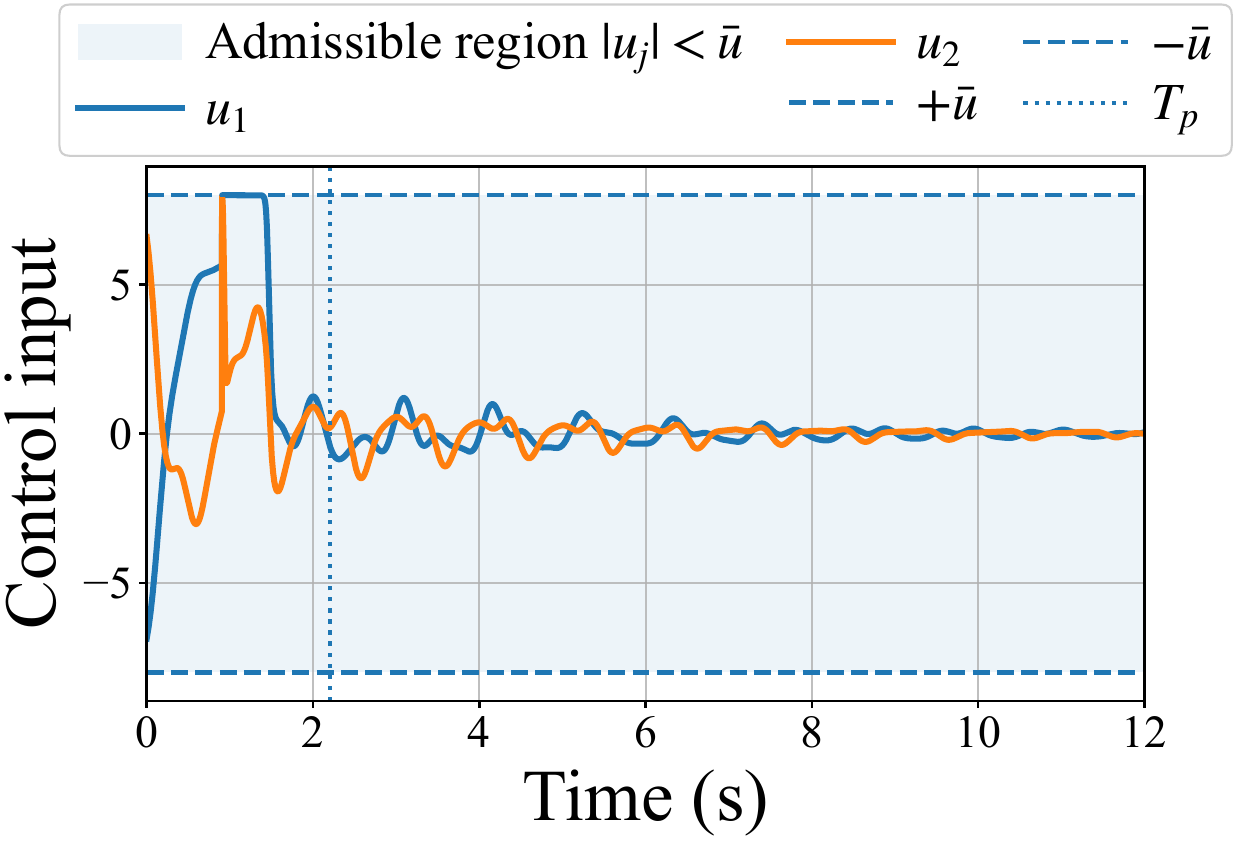}
        \label{fig:control_T22}
    }%
    \hspace{0.02\columnwidth}%
    \subfloat[\(T_p=10~\mathrm{s}\)]{
        \includegraphics[
            width=0.465\columnwidth,
            height=0.5\columnwidth
        ]{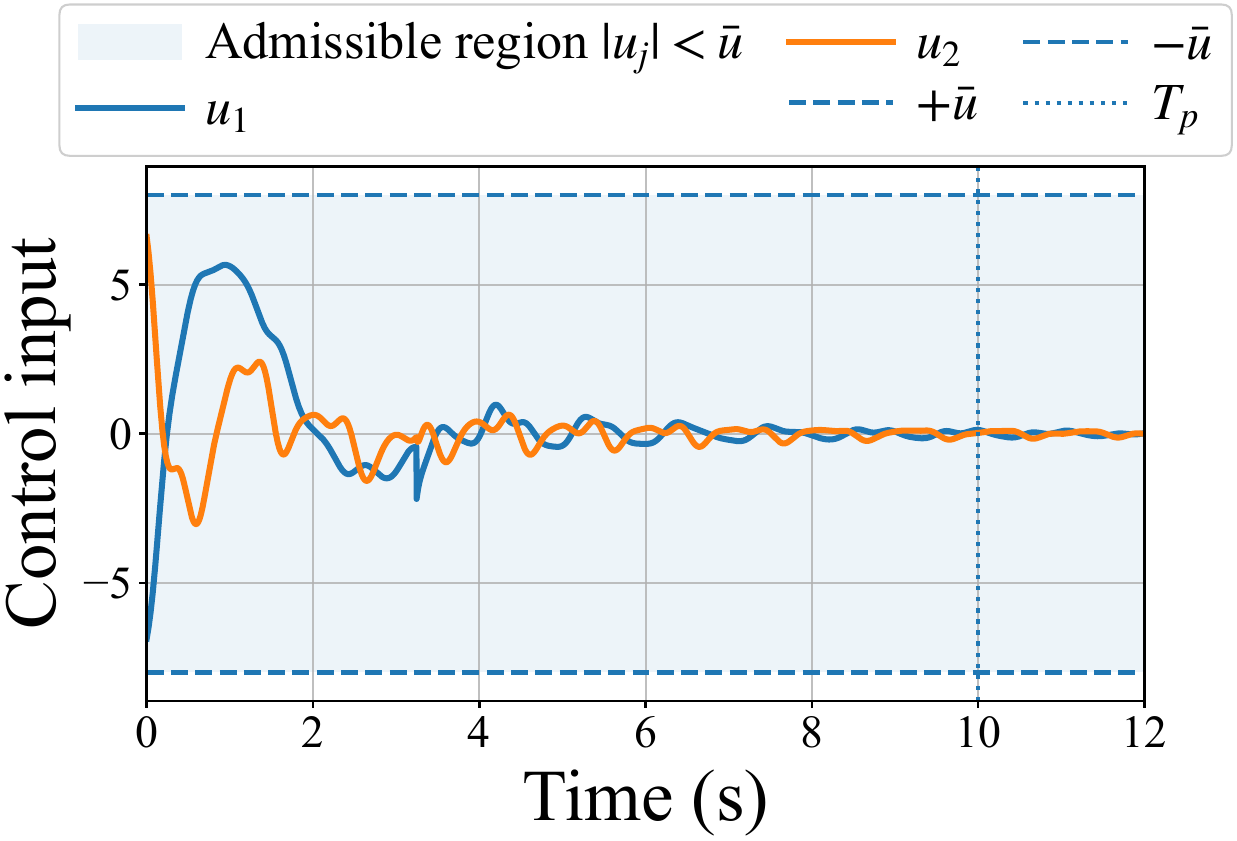}
        \label{fig:control_T10}
    }}
    \caption{Control signals under the same actuator bound
    \(\bar u=8\).}
    \label{fig:control_deadline_comparison}
\end{figure}

The adverse signals used in both tests are shown in
Fig.~\ref{fig:attack_disturbance}. Both the matched FDI attack and the
external torque disturbance remain nonzero around the short deadline.
Hence, the fast response obtained for \(T_p=2.2~\mathrm{s}\) is not
produced by removing the adverse signals before the prescribed time.


\begin{figure}[H]
    \centering
    \makebox[\columnwidth][c]{%
    \subfloat[FDI attack]{
        \includegraphics[
            width=0.465\columnwidth,
            height=0.5\columnwidth
        ]{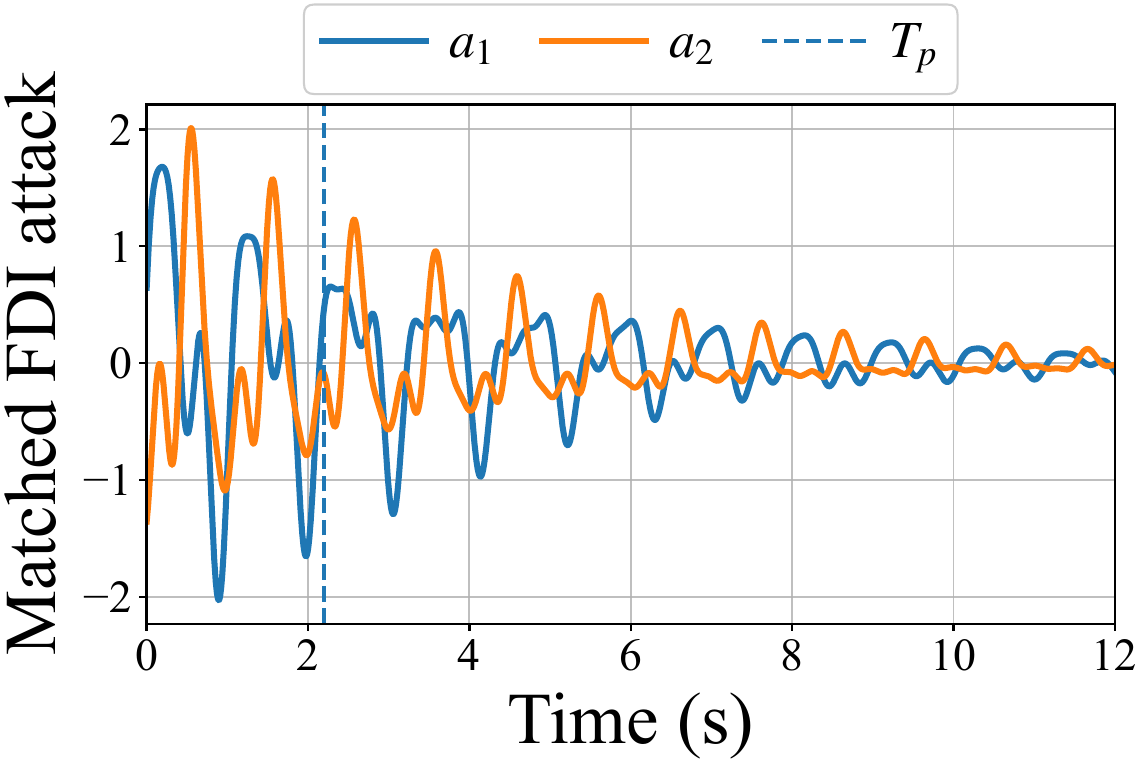}
        \label{fig:attack}
    }%
    \hspace{0.02\columnwidth}%
    \subfloat[Torque disturbance]{
        \includegraphics[
            width=0.465\columnwidth,
            height=0.5\columnwidth
        ]{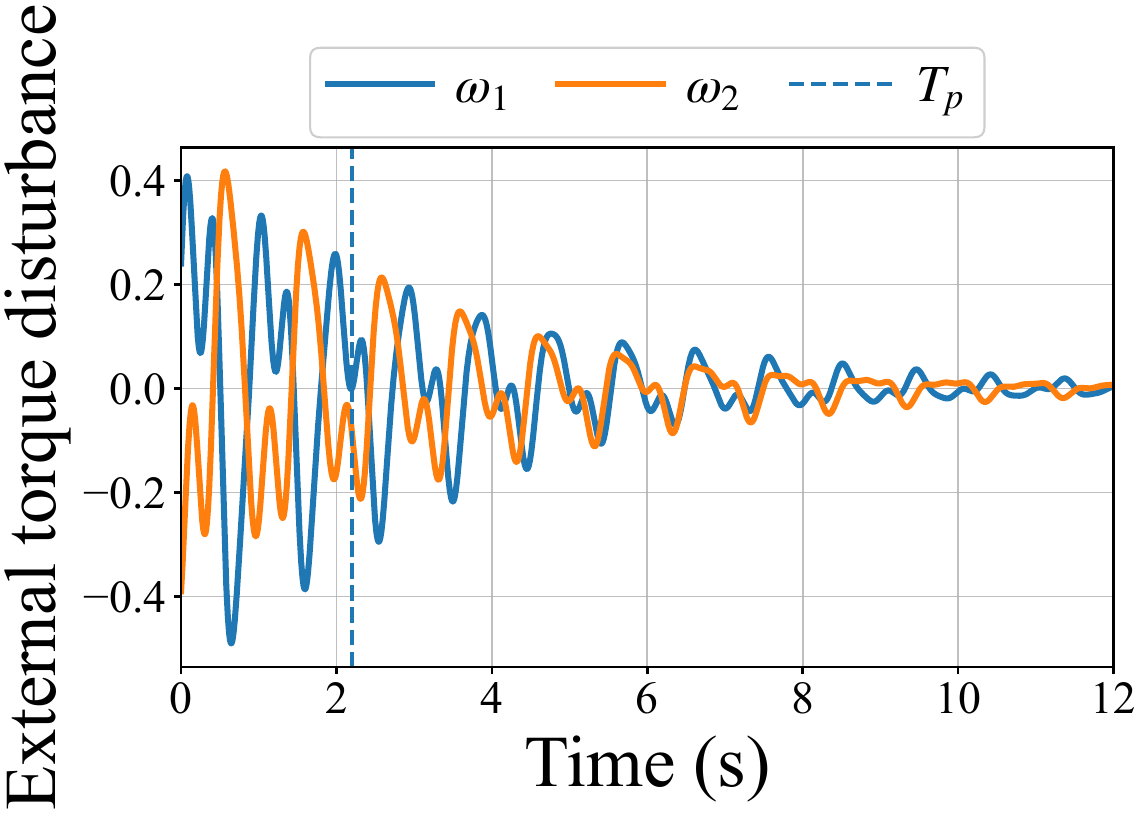}
        \label{fig:disturbance}
    }}
    \caption{Matched FDI attacks and external torque disturbances.}
    \label{fig:attack_disturbance}
\end{figure}

The influence of the initial displacement is shown in
Fig.~\ref{fig:multi_ic_comparison}. For \(T_p=10~\mathrm{s}\), all
five initial conditions, with norms ranging from \(0.55\) to \(2.40\),
enter the practical region \(r_x=0.04\) with a large time margin. For
\(T_p=2.2~\mathrm{s}\), the closer initial conditions enter the
practical region within or near the prescribed time, whereas the
farthest condition, \(\|x_0\|=2.40\), remains slightly above
\(r_x\) at \(T_p\) and enters the region shortly afterwards.


\begin{figure}[H]
    \centering

    \subfloat[\(T_p=2.2~\mathrm{s}\)]{
        \includegraphics[width=0.92\columnwidth]
        {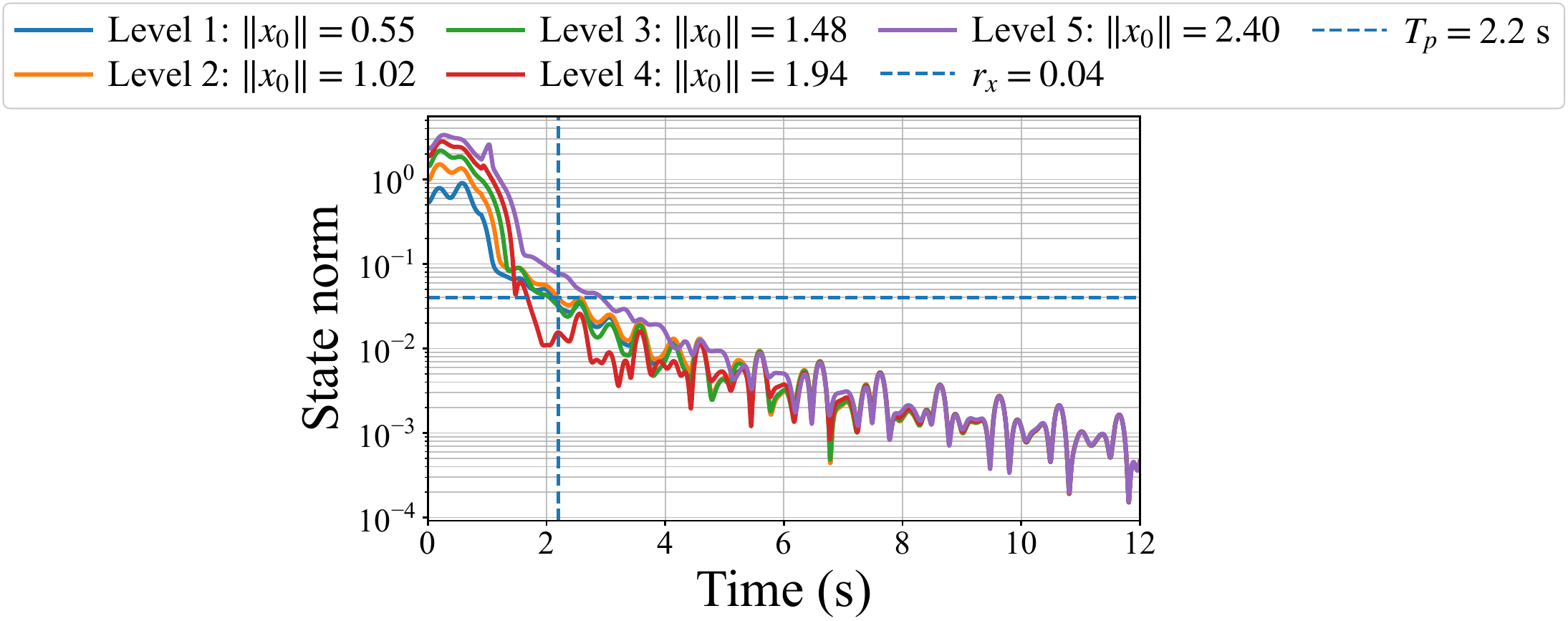}
        \label{fig:multi_T22}
    }

    \vspace{0.5em}

    \subfloat[\(T_p=10~\mathrm{s}\)]{
        \includegraphics[width=0.92\columnwidth]
        {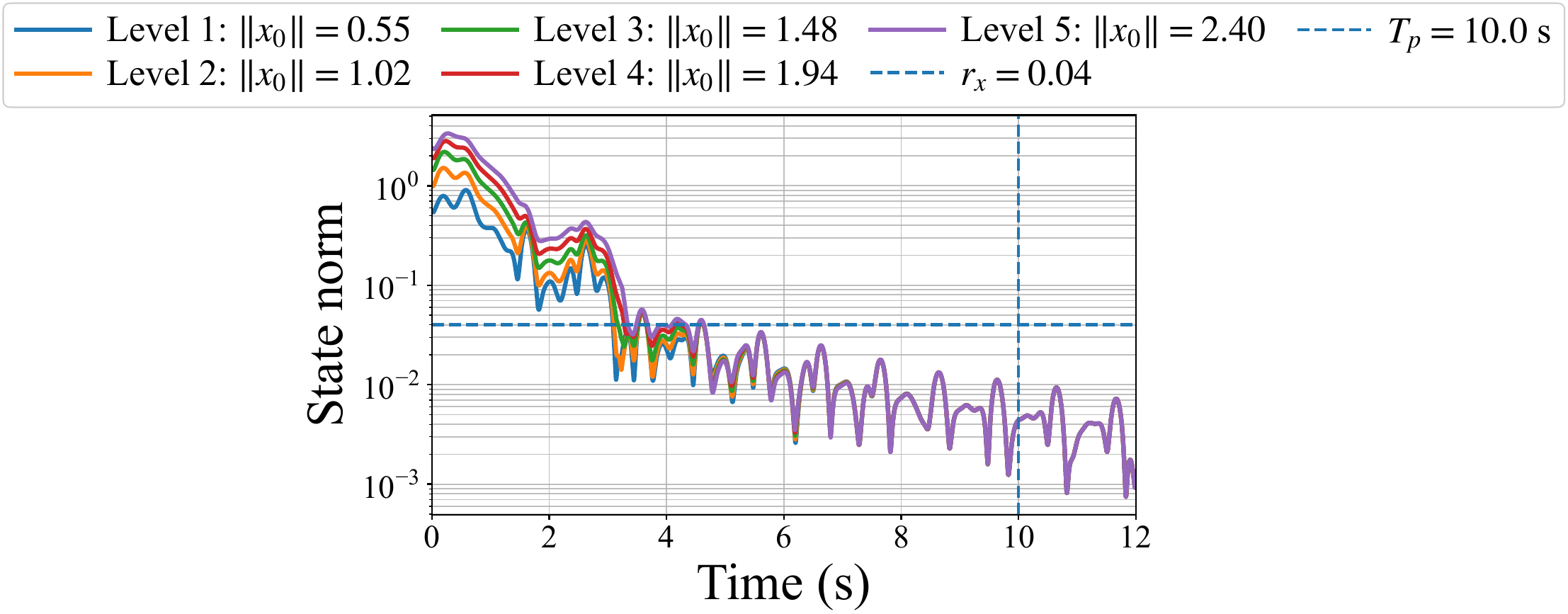}
        \label{fig:multi_T10}
    }

    \caption{State-norm responses for five initial-condition levels.}
    \label{fig:multi_ic_comparison}
\end{figure}

This result shows the feasibility tradeoff directly. A farther initial
condition requires a larger state displacement within the same finite
interval, while the available physical input remains bounded by
\(\bar u=8\). Consequently, reducing \(T_p\) does not create unlimited
convergence capability. Once the actuator is close to saturation,
further reduction of the prescribed time cannot be compensated by an
arbitrarily large control input.

A direct comparison among Predefined-time IRL, Fixed-time IRL, and
ADP\_IRL is given in Fig.~\ref{fig:controller_state_comparison}. The
three methods use the same initial condition, actuator bound,
attack/disturbance signals, replay information, and simulation horizon.
At \(T_p=2.2~\mathrm{s}\), the Predefined-time IRL state norm has
already decreased to the order of \(10^{-2}\) and lies below
\(r_x=0.04\), whereas the Fixed-time IRL and ADP\_IRL responses remain
approximately one order larger. At \(T_p=10~\mathrm{s}\), all three
methods have enough time to approach a small neighborhood of the
origin, so the difference between them becomes less significant.


\begin{figure}[H]
    \centering

    \subfloat[\(T_p=2.2~\mathrm{s}\)]{
        \includegraphics[width=0.6\columnwidth]
        {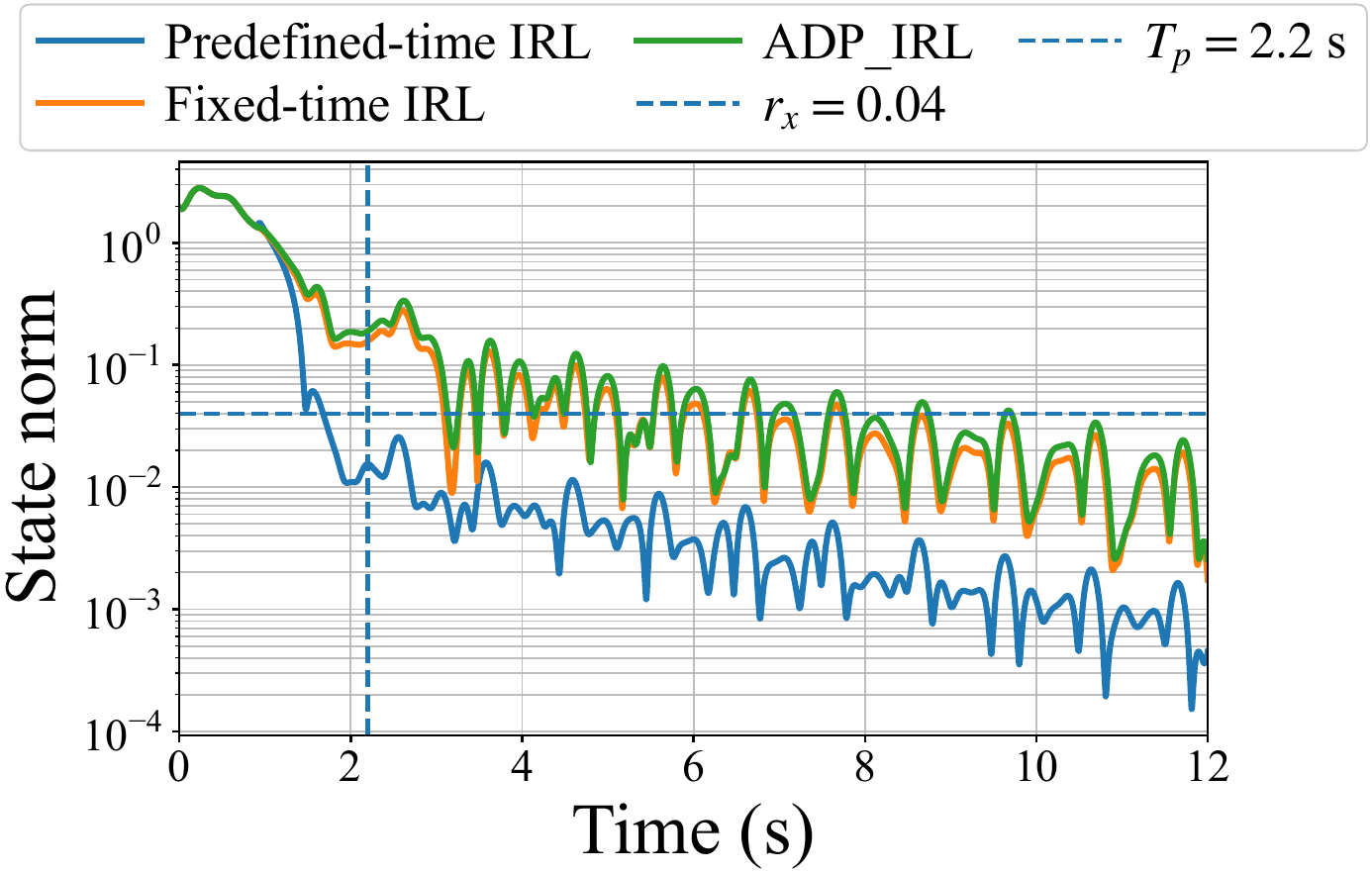}
        \label{fig:comparison_state_T22}
    }

    \vspace{0.5em}

    \subfloat[\(T_p=10~\mathrm{s}\)]{
        \includegraphics[width=0.6\columnwidth]
        {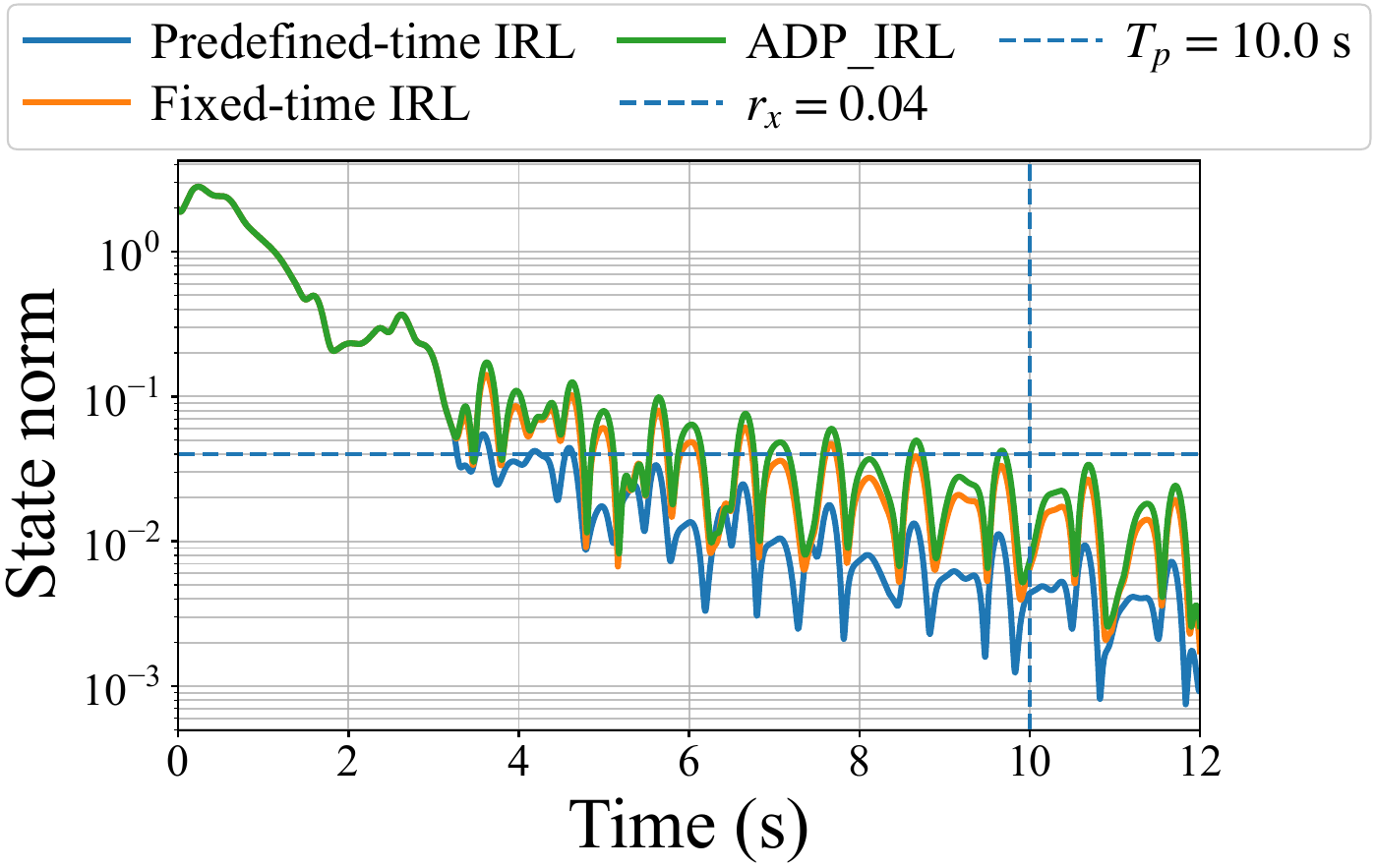}
        \label{fig:comparison_state_T10}
    }

    \caption{State-norm comparison among Predefined-time IRL,
    Fixed-time IRL, and ADP\_IRL.}
    \label{fig:controller_state_comparison}
\end{figure}

The control-effort comparison in
Fig.~\ref{fig:controller_control_comparison} explains the difference
in the state responses. For \(T_p=2.2~\mathrm{s}\),
Predefined-time IRL uses nearly the full admissible magnitude
\(\bar u=8\), whereas the peak values of Fixed-time IRL and
ADP\_IRL are approximately \(6\)--\(6.5\). For
\(T_p=10~\mathrm{s}\), such strong actuation is no longer necessary.
The shorter settling behavior is therefore obtained through a larger,
but still admissible, control effort.







\begin{figure}[H]
    \centering

    \subfloat[\(T_p=2.2~\mathrm{s}\)]{
        \includegraphics[width=0.7\columnwidth]
        {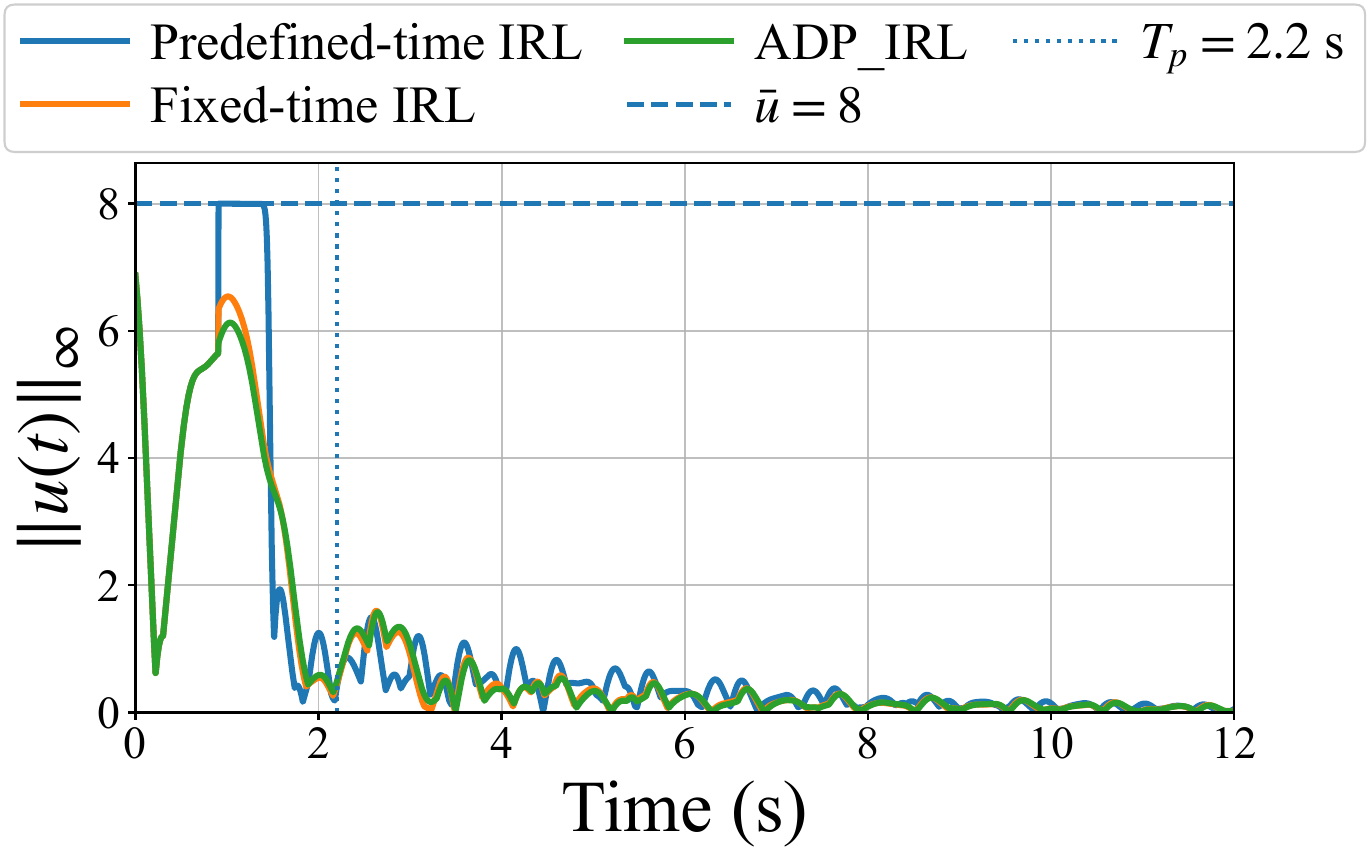}
        \label{fig:comparison_control_T22}
    }

    \vspace{-0.4em}

    \subfloat[\(T_p=10~\mathrm{s}\)]{
        \includegraphics[width=0.7\columnwidth]
        {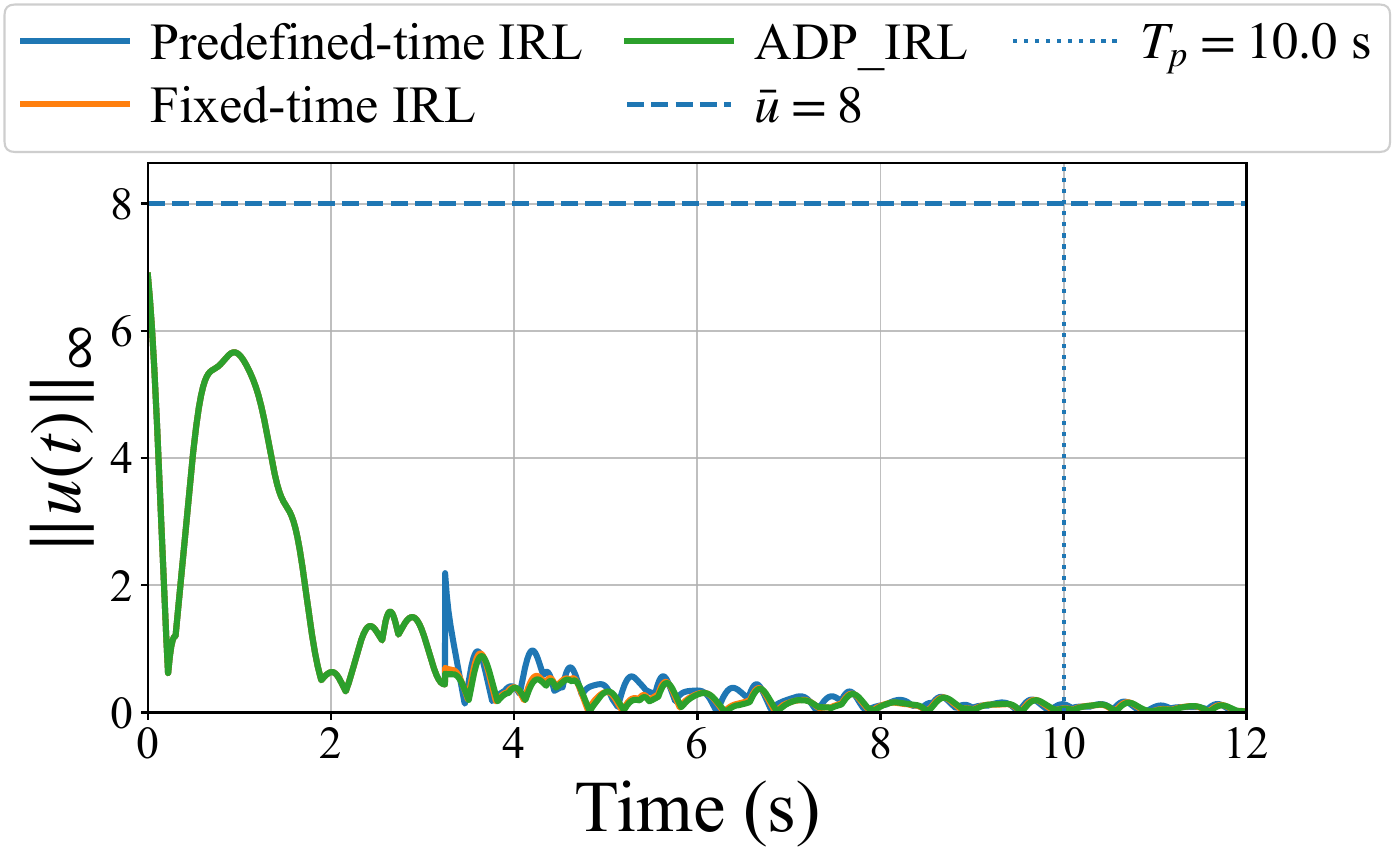}
        \label{fig:comparison_control_T10}
    }

    \vspace{-0.3em}
    \caption{Control-effort comparison among Predefined-time IRL,
    Fixed-time IRL, and ADP\_IRL.}
    \label{fig:controller_control_comparison}
    \vspace{-0.5em}
\end{figure}

Overall, the two deadline tests show a clear
settling-time/control-effort tradeoff. Reducing \(T_p\) from
\(10~\mathrm{s}\) to \(2.2~\mathrm{s}\) produces a much faster nominal
response, but at the same time forces the controller toward the
saturation boundary. The multiple-initial-condition test shows the same
effect from another direction: when the initial state is farther from
the origin, the same bounded actuator requires more time to bring the
trajectory into the practical terminal region. Therefore,
Predefined-time IRL does not mean that \(T_p\) can be reduced toward
zero independently of the plant and actuator. The assigned deadline
must remain compatible with the available control authority, the
considered initial-condition region, and the learning and approximation
conditions. The \(T_p=2.2~\mathrm{s}\) experiment therefore also shows
the practical boundary of feasible predefined-time operation.

\begin{remark}
The finite control authority and system limitations imply that the
prescribed convergence time cannot, in general, be reduced arbitrarily.
This observation suggests a promising research direction: establishing
an explicit lower bound on the admissible predefined time as a function
of actuator limits, system parameters, the initial-condition region,
and disturbance/attack bounds. Such a bound would characterize the
smallest physically feasible prescribed convergence time for a given
system.
\end{remark}

\section{Discussion}

\label{subsec:scalability}

The present framework suggests several extensions beyond predefined-time stabilization. Our recent work~\cite{VuEtAl2026FixedTimeMASIRL} extended fixed-time IRL to saturated nonlinear multi-agent systems under FDI attacks, showing that finite-window learning and resilient optimal control can be distributed, although its settling-time bound remains gain-induced rather than designer-assigned. This motivates distributed predefined-time IRL with a common prescribed deadline preserved under graph coupling, heterogeneity, partial measurements, delays, switching information structures, and network growth, together with explicit network-dependent constants.

A major challenge is networked and cyber--physical control under intermittent communication, DoS attacks, packet losses, finite-rate channels, quantization, delays, bandwidth limits, and decentralized communication decisions. Existing works separately address stochastic sensor/actuator dropouts, event-triggered discrete-time ADP under DoS, resilient model-free ADP, predefined-time self/event-triggered control under DoS, quantized predefined-time coordination, and bit-rate-constrained secure control~\cite{Jiang2024DropoutRL,Lu2024DoSADP,Zhong2025DoSADP,Yang2025PredefinedDoS,Hua2026PredefinedCPS,Shi2026QuantizedPredefined,Wang2024BitRate}, but a unified treatment remains largely open. A key direction is predefined-time IRL for distributed nonlinear discrete-time systems jointly addressing learning, control, communication scheduling, security, and channel capacity. The prescribed horizon should be a true wall-clock guarantee, rather than one accumulated only over successful-transmission or attack-free intervals, thereby exposing explicit tradeoffs among deadline, required communication/bit rate, packet loss and DoS severity, quantization, and asynchronous updates. Although fixed-time RL has recently been developed for uncertain discrete-time nonlinear systems~\cite{Ali2026FixedTimeDiscreteRL}, the distributed networked IRL counterpart with unknown dynamics and combined communication impairments remains largely unresolved, despite the natural relevance of discrete-time models to digital CPS implementation.

Beyond stabilization, similar issues arise in predefined-time optimal tracking and safety-critical learning. Reachability methods could jointly guarantee forward safety and an assigned convergence deadline, including multi-pursuer/multi-evader games with bounded rationality under uncertainty, while Gaussian-process and physics-informed learning may address model uncertainty and stabilize HJB/HJI approximation for autonomous navigation. Another fundamental issue is the need for an initially admissible stabilizing policy in many ADP/IRL schemes, which creates the circularity that stabilization is required before an optimal stabilizer can be learned; in predefined-time ADP, this initialization also consumes part of the prescribed horizon. A critic-weight initialization methodology developed in our related work~\cite{Vu2026FixedTimeIRL} provides a systematic way to select initial weights consistent with admissibility and closed-loop stability, and its integration with predefined-time learning is therefore promising. A complementary remedy is a Koopman warm start that reuses replay-stack data to construct a lifted predictor, constrained stabilizing policy, and initial value-function approximation before nonlinear IRL is activated. Recent connections between Koopman spectral theory and Hamilton--Jacobi equations~\cite{Vaidya2025KoopmanHJ}, together with certified data-driven Koopman control~\cite{Strasser2026SafEDMD}, support this direction. The key issue is not computational speed, but certification of regional admissibility, explicit propagation of lifting and finite-data errors, and inclusion of Koopman construction time within the same predefined-time budget.

Finally, the present theory assumes sufficient value-function smoothness. Replacing classical differentiability with viscosity sub/supersolution arguments may establish prescribed-time properties directly for generalized HJB/HJI solutions while accommodating nonsmooth state constraints, safety boundaries, switching communication modes, and hybrid dynamics within a broader predefined-time ADP framework.

\section{Conclusion}

This paper developed a predefined-time resilient critic-only IRL
framework for nonlinear systems with unknown drift dynamics, while
simultaneously accounting for input constraints, FDI attacks, and
external disturbances. By combining finite-window Bellman--Isaacs
identities, finite replay data, and a two-power critic update, both the
critic and the closed-loop state are driven into practical regions
within designer-assigned times, with the associated gains determined
explicitly from the prescribed deadline. The Lyapunov analysis accounts
directly for the data-acquisition time and the learning-window delay,
while the saturated policy preserves the actuator constraints by
construction. The simulations further show that the convergence time
can be actively shaped, but only at the cost of increased control effort
and within the finite physical capability of the actuators.

\section*{Acknowledgment}
The authors would like to thank Ho Chi Minh City University of Technology (HCMUT) and Vietnam National University Ho Chi Minh City (VNU-HCM) for supporting this research.

\appendices

\section{Proof of Theorem~\ref{thm:normalized_predefined_time}}
\label{App1}
\begin{proof}
Let \(x(\cdot;x_0)\) denote the forward-complete trajectory initialized at \(x_0\), and define \(T(x_0):=\inf\{t\geq0:V(x(t;x_0))=0\}\). Since \(V\) is positive definite, \(V(x_0)=0\) implies \(x_0=0\), in which case \(T(x_0)=0\). Hence, consider \(V_0:=V(x_0)>0\). Introduce \(\phi(s):=(\alpha s^p+\beta s^q)^r\) and \(\Psi(v):=\int_0^v\phi^{-1}(s)\,ds\), \(v\geq0\).

We first evaluate the improper integral that normalizes the prescribed settling-time bound. Factoring \(\alpha s^p\) from the denominator gives \(\int_0^\infty(\alpha s^p+\beta s^q)^{-r}ds=\alpha^{-r}\int_0^\infty s^{-pr}\left(1+\frac{\beta}{\alpha}s^{q-p}\right)^{-r}ds\). Set \(z:=\frac{\beta}{\alpha}s^{q-p}\), \(s=\left(\frac{\alpha}{\beta}\right)^{\frac{1}{q-p}}z^{\frac{1}{q-p}}\), so that \(ds=\frac{1}{q-p}\left(\frac{\alpha}{\beta}\right)^{\frac{1}{q-p}}z^{\frac{1}{q-p}-1}dz\) and \(s^{-pr}=\left(\frac{\alpha}{\beta}\right)^{-\frac{pr}{q-p}}z^{-\frac{pr}{q-p}}\). Consequently, \(\int_0^\infty(\alpha s^p+\beta s^q)^{-r}ds=\frac{1}{\alpha^r(q-p)}\left(\frac{\alpha}{\beta}\right)^{\frac{1-pr}{q-p}}\int_0^\infty z^{\frac{1-pr}{q-p}-1}(1+z)^{-r}dz\).

Define \(a:=\frac{1-pr}{q-p}\) and \(b:=\frac{qr-1}{q-p}\). The conditions \(pr<1<qr\) and \(q>p\) imply \(a>0\) and \(b>0\), while \(a+b=\frac{1-pr+qr-1}{q-p}=r\). Using the Euler beta-function identity \(\mathrm B(a,b):=\int_0^\infty z^{a-1}(1+z)^{-(a+b)}dz=\frac{\Gamma_{\!E}(a)\Gamma_{\!E}(b)}{\Gamma_{\!E}(a+b)}\), we therefore obtain \(\int_0^\infty(\alpha s^p+\beta s^q)^{-r}ds=\frac{\Gamma_{\!E}\!\left(\frac{1-pr}{q-p}\right)\Gamma_{\!E}\!\left(\frac{qr-1}{q-p}\right)}{\alpha^r\Gamma_{\!E}(r)(q-p)}\left(\frac{\alpha}{\beta}\right)^{\frac{1-pr}{q-p}}=\gamma_{p,q,r}\). This also shows explicitly why \(pr<1\) guarantees integrability at \(s=0\), whereas \(qr>1\) guarantees integrability as \(s\to\infty\).

For every \(t<T(x_0)\), one has \(V(x(t))>0\). Since \(V(x(t))\) is locally absolutely continuous and \(\Psi\) is continuously differentiable on \((0,\infty)\), the chain rule for absolutely continuous functions gives, for almost all \(t<T(x_0)\), \(\frac{d}{dt}\Psi(V(x(t)))=\frac{\dot V(x(t))}{[\alpha V^p(x(t))+\beta V^q(x(t))]^r}\leq-\frac{\gamma_{p,q,r}}{T_s}\). Integration over \([0,t]\subset[0,T(x_0))\) yields \(\Psi(V(x(t)))-\Psi(V_0)\leq-\frac{\gamma_{p,q,r}}{T_s}t\). Because \(\Psi(V(x(t)))\geq0\), it follows that \(t\leq\frac{T_s}{\gamma_{p,q,r}}\Psi(V_0)=\frac{T_s}{\gamma_{p,q,r}}\int_0^{V_0}\frac{ds}{(\alpha s^p+\beta s^q)^r}\).

If \(T(x_0)=\infty\), the preceding inequality would have to hold for arbitrarily large \(t\), which is impossible because its right-hand side is finite. Hence \(T(x_0)<\infty\), and letting \(t\uparrow T(x_0)\) gives \(T(x_0)\leq\frac{T_s}{\gamma_{p,q,r}}\int_0^{V_0}\frac{ds}{(\alpha s^p+\beta s^q)^r}\leq\frac{T_s}{\gamma_{p,q,r}}\int_0^\infty\frac{ds}{(\alpha s^p+\beta s^q)^r}=T_s\). At \(t=T(x_0)\), continuity gives \(V(x(T(x_0)))=0\); positive definiteness of \(V\) therefore implies \(x(T(x_0))=0\). Since the origin is an equilibrium, the trajectory remains there thereafter.
\end{proof}

\section{Data-Driven Construction of an Admissible Initial Policy for
Definition~\ref{def:admissible_secure_policy} and
Assumption~\ref{ass:plant_regularity}}
\label{app:koopman_warm_start}

\begingroup
\setstretch{0.80}

The replay matrix \(\bar{\Omega}\) in \eqref{eq:Omega_bar} presupposes a finite
bank of measured trajectory data, which may instead be deliberately collected
from the nominal plant before online learning without knowing the nonlinear
maps \(f\), \(g\), or \(k\). Here, ``unknown'' means that their analytical
expressions and parameters are unavailable, while the state/input dimensions
and measured state--input trajectories are known. From these data, a lifting
\(z=\psi(x)\), with \(\psi(0)=0\), can be chosen and a finite-dimensional
Koopman predictor \(\dot z=A_Kz+B_Ku+\varepsilon_K\) identified by
EDMD/least-squares techniques
\cite{Williams2015EDMD,Korda2018KoopmanMPC}. This linear controlled model is
only a finite-data approximation; for control-affine nonlinear systems, the
natural Koopman representation is generally bilinear and an exact linear
realization need not exist \cite{Bruder2021BilinearKoopman}.

The construction may also be refreshed online. Data collected over successive
windows \([t_j,t_{j+1})\) can update a local predictor
\(\dot z=A_{K,j}z+B_{K,j}u+\varepsilon_{K,j}\) and feedback
\(u_{c,0}^{(j)}(x)=K_{K,j}\psi(x)\), provided the admissibility conditions below
hold on each window. This receding-window feedback only supplies an admissible
policy that keeps the state in the compact operating set \(\Omega\); it does not
solve the optimal-control problem. Once this condition and the required finite
replay information are available, the proposed ADP--IRL mechanism improves the
policy toward the optimum within the prescribed predefined-time horizon.

For the warm start, assume that the lifting contains the original state,
\(x=C_xz\), and is locally norm-equivalent on \(\Omega\), i.e.,
\(c_1\|x\|\leq\|\psi(x)\|\leq c_2\|x\|\) for some \(c_1,c_2>0\). Assume further
that there exist \(K_K\), \(P=P^\top>0\), and \(Q=Q^\top>0\) such that
\((A_K+B_KK_K)^\top P+P(A_K+B_KK_K)=-Q\),
\(K_K\psi(x)\in\mathcal U\) for all \(x\in\Omega\), and
\(\|\varepsilon_K\|\leq c_K\|z\|\) with
\(2\|P\|c_K<\lambda_{\min}(Q)\). Then set
\(u_{c,0}(x)=K_K\psi(x)\). In the receding-window case, the same conditions are
imposed on each
\((A_{K,j},B_{K,j},K_{K,j},P_j,Q_j,\varepsilon_{K,j})\), ensuring that
\(u_{c,0}^{(j)}\) remains admissible on its window.

Indeed, with \(V_K=z^\top Pz\),
\(\dot V_K\leq-[\lambda_{\min}(Q)-2\|P\|c_K]\|z\|^2<0\) for \(z\neq0\).
Thus, \(z=0\) is asymptotically stable; since \(x=C_xz\) and the lifting is
norm-equivalent on \(\Omega\), stability of \(x=0\) follows directly. Together
with \(u_{c,0}(x)\in\mathcal U\), this gives
\(u_{c,0}\in\Psi(\Omega)\). The same argument holds window-wise, so the refreshed
data-driven feedback acts only as an admissible stabilizing safeguard in
\(\Omega\), while optimality and the prescribed convergence deadline are
provided by the predefined-time ADP--IRL mechanism.

\endgroup

\section{Proof lemma~\ref{lem:low_window}}
\label{uppbound}
\begin{proof}
Let \(M_y:=\max_{\tau\in[t-T,t]}y(\tau)\), and let \(\tau_m\) be a maximizing point. At least one of \([t-T,\tau_m]\) and \([\tau_m,t]\) has length no smaller than \(T/2\). Since \(y\) is \(L_{\dot y}\)-Lipschitz, \(y(\tau)\geq M_y/2\) on a one-sided interval of length \(\min\{T/2,M_y/(2L_{\dot y})\}\). Therefore, \(\int_{t-T}^{t}y(\tau)\,d\tau\geq (M_y/2)\min\{T/2,M_y/(2L_{\dot y})\}\).

If \(M_y\leq L_{\dot y}T\), the preceding inequality gives \(M_y\leq2\sqrt{L_{\dot y}\int_{t-T}^{t}y(\tau)\,d\tau}\). Since \(\gamma_1-1<0\) and \(0\leq y(\tau)\leq M_y\), one has \(y^{\gamma_1}(\tau)=y(\tau)y^{\gamma_1-1}(\tau)\geq y(\tau)M_y^{\gamma_1-1}\). Hence,
\[
\int_{t-T}^{t}y^{\gamma_1}(\tau)\,d\tau
\geq
2^{\gamma_1-1}
L_{\dot y}^{-\frac{1-\gamma_1}{2}}
\left(
\int_{t-T}^{t}y(\tau)\,d\tau
\right)^\mu.
\]

If \(M_y>L_{\dot y}T\), the same window estimate gives \(M_y\leq(4/T)\int_{t-T}^{t}y(\tau)\,d\tau\). Using \(y^{\gamma_1}(\tau)\geq y(\tau)M_y^{\gamma_1-1}\) then yields
\[
\int_{t-T}^{t}y^{\gamma_1}(\tau)\,d\tau
\geq
4^{\gamma_1-1}T^{1-\gamma_1}
\left(
\int_{t-T}^{t}y(\tau)\,d\tau
\right)^{\gamma_1}.
\]
Since \(\gamma_1-\mu=(\gamma_1-1)/2<0\) and \(\int_{t-T}^{t}y(\tau)\,d\tau\leq\bar I\), it follows that
\[
\left(
\int_{t-T}^{t}y(\tau)\,d\tau
\right)^{\gamma_1}
\geq
\bar I^{\frac{\gamma_1-1}{2}}
\left(
\int_{t-T}^{t}y(\tau)\,d\tau
\right)^\mu.
\]
Combining the two cases proves \eqref{eq:low_integral_bound}.
\end{proof}



\raggedbottom

\vspace{0.5em}
\section*{Author Biographies}
\vspace{-0.3em}


\noindent
\begin{minipage}[t]{0.23\columnwidth}
    \vspace{0pt}
    \centering
    \includegraphics[
        width=\linewidth,
        height=1.15in,
        keepaspectratio
    ]{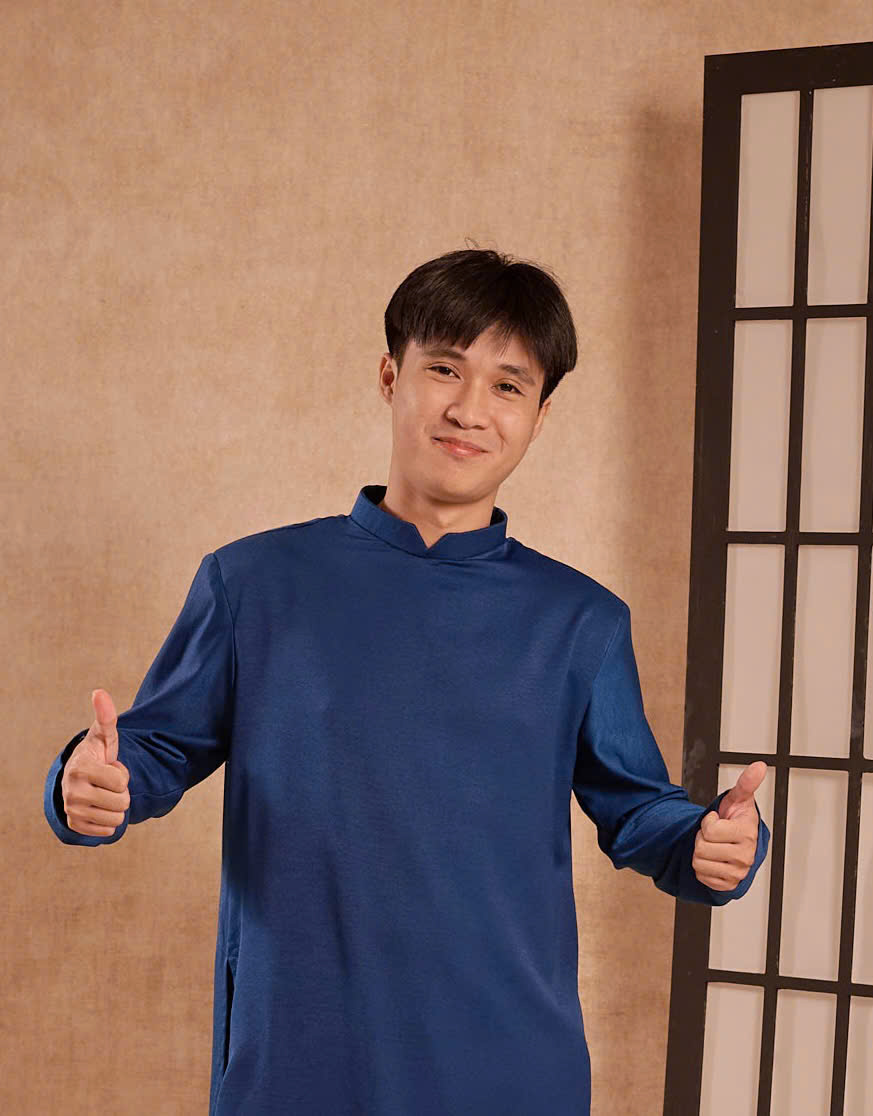}
\end{minipage}
\hfill
\begin{minipage}[t]{0.73\columnwidth}
    \vspace{0pt}
    \small
    \textbf{Tien Dat Vu} is currently a senior undergraduate student in the Vietnamese–French Program in Mechatronics Engineering at Ho Chi Minh City University of Technology (HCMUT), Vietnam National University Ho Chi Minh City (VNU-HCM), Ho Chi Minh City, Vietnam. His research interests include control theory, learning-based and nonlinear control, adaptive and robust control (Convex optimization in control), reinforcement learning, data-driven control, secure and resilient control, optimal control of uncertain nonlinear systems, multi-agent systems, and artificial intelligence for dynamical systems. His broader research interests also include Riemannian geometry, contraction theory, formal methods, and their applications to nonlinear dynamical systems and robotics.
\end{minipage}

\vspace{1.0em}


\noindent
\begin{minipage}[t]{0.23\columnwidth}
    \vspace{0pt}
    \centering
    \includegraphics[
        width=\linewidth,
        height=1.15in,
        keepaspectratio
    ]{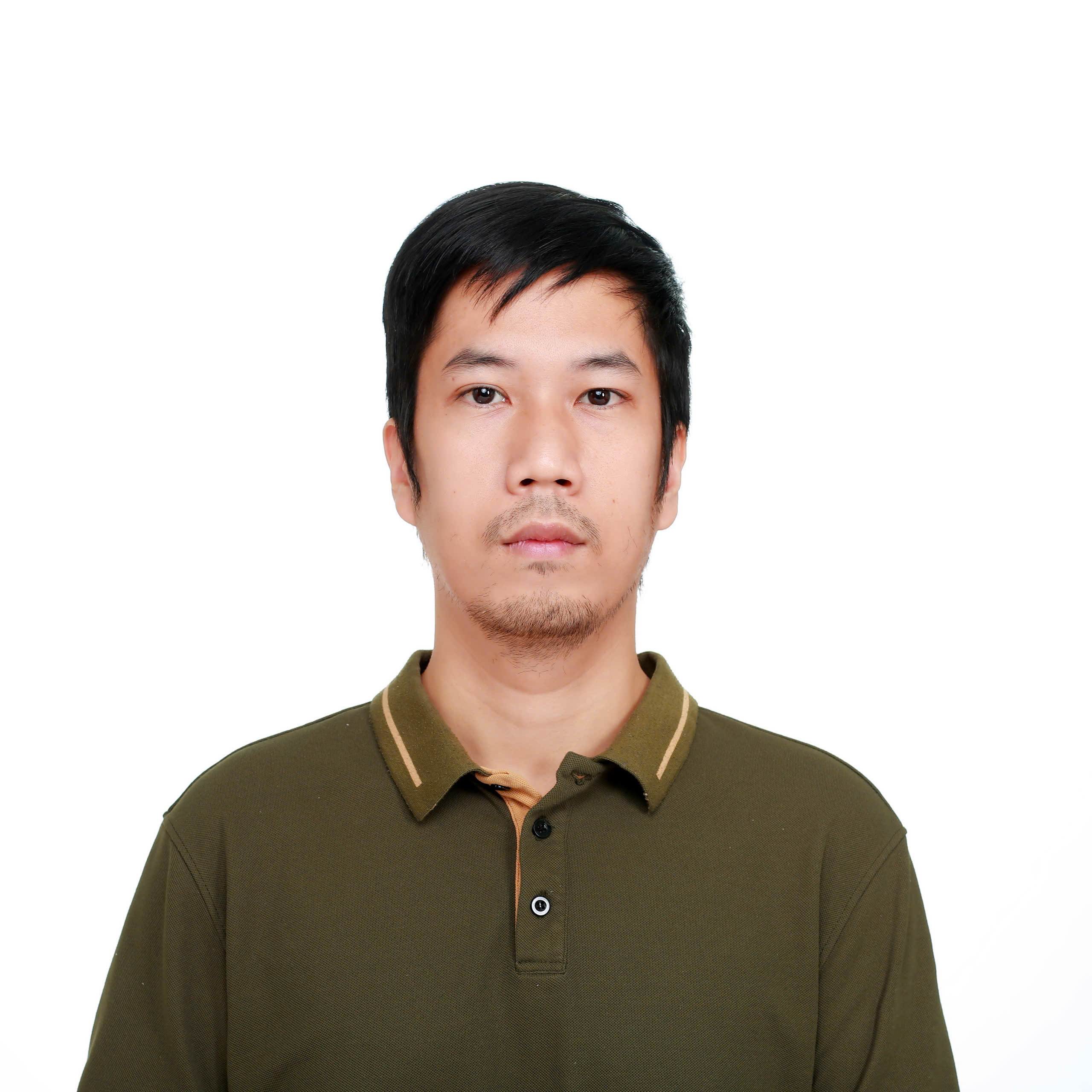}
\end{minipage}
\hfill
\begin{minipage}[t]{0.73\columnwidth}
    \vspace{0pt}
    \small
    \textbf{Nhat Minh Doan} received the B.S. degree in Mechanical
    Engineering from Bucknell University, Lewisburg, PA, USA,
    in 2015, and the Ph.D. degree from Keio University, Tokyo,
    Japan, in 2021. He is currently a Lecturer with the Faculty
    of Mechanical Engineering, Ho Chi Minh City University of
    Technology (HCMUT), Vietnam National University Ho Chi Minh
    City, Vietnam. His research interests include unmanned aerial
    vehicle design and control, wind turbine systems, multi-agent
    systems, networked control systems, and distributed control.
    His research also covers the modeling, analysis, and control
    of complex mechanical and networked dynamical systems.
\end{minipage}

\vspace{0.5em}

\end{document}